\newtheorem{remark}{Remark}
\newtheorem{theorem}{Theorem}
\newtheorem{lemma}{Lemma}
\newtheorem{corollary}{Corollary}
\newtheorem{assumption}{Assumption}
\begin{document}

\title{
Preserving Topology of Network Systems:\\ Metric, Analysis, and Optimal Design
}
\author{Yushan Li, Zitong Wang, Jianping He, Cailian Chen and Xinping Guan
\thanks{The authors are with Dept. of Automation, Shanghai Jiao Tong University, Key Laboratory of System Control and Information Processing, Ministry of Education of China, and Shanghai Engineering Research Center of Intelligent Control and Management, Shanghai, China. E-mail address: \{yushan\_li, wangzitong, jphe, cailianchen, xpguan\}@sjtu.edu.cn. 
}%
}

\maketitle

\begin{abstract}
Preserving the topology from being inferred by external adversaries has become a paramount security issue for network systems (NSs), and adding random noises to the nodal states provides a promising way. 
Nevertheless, recent works have revealed that the topology cannot be preserved under i.i.d. noises in the asymptotic sense. 
How to effectively characterize the non-asymptotic preservation performance still remains an open issue. 
Inspired by the deviation quantification of concentration inequalities, this paper proposes a novel metric named trace-based variance-expectation ratio. 
This metric effectively captures the decaying rate of the topology inference error, where a slower rate indicates better non-asymptotic preservation performance. 
We prove that the inference error will always decay to zero asymptotically, as long as the added noises are non-increasing and independent (milder than the i.i.d. condition). 
Then, the optimal noise design that produces the slowest decaying rate for the error is obtained. 
More importantly, we amend the noise design by introducing one-lag time dependence, achieving the zero state deviation and the non-zero topology inference error in the asymptotic sense simultaneously. 
Extensions to a general class of noises with multi-lag time dependence are provided. 
Comprehensive simulations verify the theoretical findings. 

\end{abstract}

\begin{IEEEkeywords}
Network systems, topology preservation, noise adding mechanism, topology inference, decaying rate analysis.
\end{IEEEkeywords}

\IEEEpeerreviewmaketitle

\section{Introduction}

Network Systems (NSs) have received considerable attention in last decades due to their wide applications, e.g., robotic, social, and power networks, to name a few \cite{olfati2007consensus}. 
Among these applications, the topology structure between the nodes (or agents) specifies the information flow of the internal interaction and fundamentally affects the cooperation performance of NSs \cite{kar2014asymptotically,brugere2018network}. 
Unfortunately, in many scenarios, the sensitive topology information is at a great risk of being inferred by external adversaries \cite{dong2019learning}. 
For instance, the topology of a team of mobile robots can be inferred from the observed robot trajectories in the physical space \cite{vasquez2018network,li2023local}, 
and the topology of related social users can also be inferred from the publicly released data \cite{matta2020learning,lu2020privacy}. 
With the topology mastered, the adversary can further launch more precise and intelligent attacks against critical nodes in NSs, causing severe breaches and damages \cite{han2018privacy,katewa2020differential,sandberg2022secure}. 
Therefore, it is of vital importance to preserve the topology from being inferred for security concerns.

Mathematically, the topology preservation can be seen as a maxmin version of the topology inference \cite{wang2022distributed}, where the latter one has been widely investigated in the literature. 
For example, \cite{egilmez2017graph,zhu2020network} utilizes the techniques of graph signal processing to capture the underlying undirected topology among different variables. 
\cite{santos2019local,cirillo2023estimating} consider learning the topology of NSs governed by a linear diffusion process in a local observability regime. 
The vector autoregressive analysis is also a popular tool to deal with the directed topology inference problem \cite{ioannidis2019semi,zaman2021online}. 
The readers can refer to \cite{giannakis2018topology,mateos2019connecting,matta2020graph} for a detailed survey. 

It is straightforward that if the interaction process of a NS is noise-free or the adversary can access all data associated with the process, then with some basic identifiability conditions, the topology can be readily inferred by a group of equations from the observation groups \cite{vanwaarde2021topology}. 
Hence, a promising and effective way for topology preservation is to add random noises to the states of each node during their interactions. 
In the last decade, the noise-adding mechanism has been used to preserve the data privacy while ensuring the global computation goal of NSs in the literature, e.g., see \cite{mo2017privacy,nozari2017differentially,he2018privacy,wang2021dynamic}. 
Since the topology can only be inferred based on the noisy states, it is sure that the noise-adding mechanism will achieve certain topology preservation effects compared with the noise-free case. 
The works \cite{he2018preserving,wang2019privacya,he2020differential} have further developed mature techniques for the privacy analysis and noise design to preserve the state privacy of NSs. 
However, they are hard to be extended on the topology preservation due to the following reasons. 
First, although the inferred topology can be represented by the noisy states, it is essentially a highly nonlinear mapping of the states, making it intractable to analyze the preservation performance as doing that for the state itself. 
Second, concerning the noise design for optimal topology preservation, the analytical procedures are further complicated considering the various distribution forms and the accumulation impact of the noises. 

Recently, the \textit{sample complexity}\footnote{In the literature, the sample complexity usually means the observation/data amount needed to identify the unknown system parameters within prescribed confidence levels.} analysis of ordinary least squares (OLS) estimators in the system identification community has shed light on characterizing the topology inference performance \cite{simchowitz2018learning,sarkar2019near,sun2020finite,zheng2021nonasymptotic}. 
In the context of topology inference, the above sample complexity can be interpreted as a disclosure probability that the adversary can infer the topology within a given small interval to quantify the inference accuracy \cite{jedra2023finitetime}. 
Nevertheless, this metric is hard to be reversely leveraged to find the optimal noise design for topology preservation, 
because i) the metric is mainly for evaluating the estimation performance and contains many implicit parameters that are inconvenient for the computation and design of noises; 
ii) the theoretical framework relies on the fundamental assumption that the added noises are i.i.d., which hinders us from applying the analysis procedures to design non-i.i.d. noises for topology preservation concerns 
(we refer to \cite{tsiamis2022statistical} for a detailed review about the sample complexity analysis). 
In a nutshell, it is still an open issue to find an appropriate metric to describe the topology preservation performance and use it to design the optimal noises accordingly.


Motivated by the above issues, this paper investigates the performance metric and noise design problem for preserving the topology of NSs. 
The challenges lie in two aspects. 
First, the metric is desired to have no dependency on the distribution forms of the added noises, and enjoy the same rate characterization concerning the observation number as existing results in sample complexity analysis. 
Second, taking the metric as the guidance for the noise design, the added noises should ensure the topology cannot be accurately inferred while not jeopardizing the normal state evolution. 
To address these issues, the key insight is to extract and characterize the convergence rate of the topology inference error regarding the observation number.  
The contributions of our work are summarized as follows. 
\begin{itemize}
\item (\textit{Metric}) We advocate a novel metric named trace-based variance-expectation ratio to characterize the topology inference error. 
This metric is derived by applying concentration inequalities on the error factors and exploiting their deviation quantification, and enjoys efficient computation. 
Specifically, we prove that the ratio exhibits the same decaying rate of the inference error under i.i.d. noises as the sample complexity does.

\item (\textit{Optimal independent noises}) Based on the proposed metric, we provide the thorough analysis of the topology preservation under time-independent non-increasing noises (milder than i.i.d. condition), by resorting to the Euler-Maclaurin approximation. 
We obtain the explicit decaying rate of the inference error and reveal that the topology can never be preserved asymptotically. 
Then, we give the optimal noise design to achieve the slowest decaying rate for the non-asymptotic preservation. 

\item (\textit{Asymptotic preservation}) We extend the proposed metric to a more general situation, where the designed noises are dependent on each other in time. 
Under the conditions that the noise variance decays to zero in a polynomial rate, we derive that the state deviation with the noise-free one will decay to zero in the mean square sense, while the topology inference error will remain a constant even in the asymptotic sense. 
Extensive simulations corroborate the theoretical results.

\end{itemize}

In summary, the proposed metric well accommodates the rate characterization of the sample complexity under i.i.d. noises, and can  be further used to design more general forms of noises for topology preservation and analyze the performance explicitly. 
The remainder of this paper is organized as follows. 
Section \ref{sec:preliminary} gives basic preliminaries and formulates the problem of interest. 
The proposed metric is presented in Section \ref{sec:metric}. 
The analysis for topology preservation under independent noises is given in Section \ref{sec:independent}. 
In Section \ref{sec:dependent}, we extend the metric to analyze topology preservation performance under dependent noises. 
Simulation results are shown in Section \ref{sec:simulation}. 
Concluding remarks and further research issues are given in Section \ref{sec:conclusion}.

\section{Preliminaries and Problem Formulation}\label{sec:preliminary}

\subsection{Graph Basics and Notations}
Let $\mathcal{G}=(\mathcal{V},\mathcal{E})$ be a directed graph that models the network system, where $\mathcal{V}=\{1, \cdots, n\}$ is the finite set of nodes and $\mathcal{E}\subseteq \mathcal{V}\times \mathcal{V}$ is the set of interaction edges.
An edge $(i,j)\in \mathcal{E}$ indicates that $i$ will use information sent by $j$.
The adjacency matrix $\mathcal{A}=[a_{ij}]_{n \times n}$ of $\mathcal{G}$ is defined such that ${a}_{ij}\!>\!0$ if $(i,j)$ exists, and ${a}_{ij}\!=\!0$ otherwise.
Denote ${\mathcal{N}_i}=\{j\in \mathcal{V}:a_{ij}>0\}$ as the in-neighbor set of $i$, and $d_i=\left| {\mathcal{N}_i} \right |$ as its in-degree. 
A directed graph has a (directed) spanning tree if there exists at least a node having a directed path to all other nodes. 

Throughout the paper, the notations $\rho(M_0)$, $\| M_0 \|$, $\| M_0  \|_{F}$ and $\operatorname{tr}(M_0)$ represent the spectral radius, the spectral norm, the Frobenius norm, and the trace of a square matrix $M_0$, respectively. 
The superscript $\cdot^\mathsf{T}$ indicates the transpose of a matrix. 
Specifically, it holds that 
\begin{align}
\rho(M_0)\le \| M_0 \| \le \| M_0  \|_{F}=\sqrt{\operatorname{tr}( M_0^\mathsf{T} M_0 )}  . 
\end{align}
We use $\mathbb{E}[ M_0]$ and $\mathbb{D}[ M_0]$ to represent the element-wise expectation and variance of $M_0$, respectively. 
Let $\mathbb{N}$ and $\mathbb{N}^+$ be the set of natural numbers and the set of positive integers. 
For two real-valued functions $f_1$ and $f_2$, $f_1(x)=\bm{O}(f_2(x))$ as $x\to x_0$ means $\mathop {\lim }\nolimits_{x \to x_0 } |f_1(x)/f_2(x)|<\infty$.

\subsection{System Model}
Consider that each node in the NS updates its state by the following model
\begin{equation}\label{eq:single_nodal_model}
\begin{aligned}
x_{t+1}^i =  x_t^i + \sum_{j\in \mathcal{N}_{i}} w_{i j} (x_t^j-x_t^i) +\theta_t^i,
\end{aligned}
\end{equation}
where $x_t^i\in\mathbb{R}$ is node $i$'s state at time $t$ ($t=0,1,\cdots,T$), $w_{i j}\in [0,1]$ is the interaction weight satisfying $w_{ij}>0$ if the edge $(i,j)\in \mathcal{E}$ exists or $w_{i j}=0$ otherwise ($i\neq j$), and $\theta_t^i$ is the added noise by node $i$ for security concerns. 
Note that many mature methods can be found to design the interaction weight $w_{i j}$ in the literature. 
For instance, by the Laplacian rule \cite{sayed2014adaptation}, $w_{ij}$ is given by 
\begin{align}\label{eq:topo-rule}
w_{ij}=\left \{
\begin{aligned}
&\gamma a_{ij}/ d_{\max},&&\text{if}~i\neq j \\
&1-\sum\nolimits_{j\in \mathcal{N}_{i}} w_{ij}, &&\text{if}~i\neq j
\end{aligned}
\right.,
\end{align}
where $d_{\max}=\max\{|\mathcal{N}_i|,i\in\mathcal{V}\}$ is the maximum in-degree of the NS, and $\gamma\in(0,1]$ is an auxiliary parameter. 

In a global form, all the nodal models are integrated as 
\begin{align}\label{eq:global_model}
x_{t+1}=W x_t + \theta_t,
\end{align}
where $x_t=[x_t^1,\cdots,x_t^n]^\mathsf{T}\in\mathbb{R}^n$, $\theta_t=[\theta_t^1,\cdots,\theta_t^n]^\mathsf{T}\in\mathbb{R}^n$, and $W=[w_{ij}]_{i,j=1}^{n}\in \mathbb{R}^{n\times n}$. 
As for the topology setting of the NS, the following assumption is made. 
\begin{assumption}\label{assu:topo}
The graph $\mathcal{G}$ has at least a spanning tree, and the associated topology $W$ is non-negative and row-stochastic. 
\end{assumption}

It is worth mentioning that Assumption \ref{assu:topo} is milder than the commonly used assumption that the graph is connected and the topology matrix $W$ is doubly-stochastic in the literature. 
When Assumption \ref{assu:topo} holds, $W$ is ensured to be primitive (i.e., there exists $t\in \mathbb{N}$ such that all elements of $W^t$ are positive). 
Let $\lambda_i$ be the $i$-th eigenvalue of $W$, ordered by $ |\lambda _1| \geq |\lambda _2| \geq \cdots \geq |\lambda _N|$, 
and it follows from the famous Perron-Frobenius theorem (see \cite[Theorem 2.12]{FB-LNS}) that
\begin{align}\label{eq:lambda}
\lambda_1=1>|\lambda _i|,~\forall i\in \{2,\cdots,n\}. 
\end{align}
The property \eqref{eq:lambda} is crucial for the performance analysis in the subsequent sections. 
The following variables are introduced for simple expressions throughout this paper
\begin{align}
&\Gamma_t=\sum_{m = 0}^{t-1} W^m (W^m)^\mathsf{T},~\Gamma_t^* =\sum_{m = 0}^{t-1} (W^m)^\mathsf{T} W^m, \nonumber \\
& \bm{\theta}_{0:T-1}\!=\![\theta_0^\mathsf{T},\theta_1^\mathsf{T},\!\cdots\!,\theta_{T-1}^\mathsf{T}]^\mathsf{T},~\Theta(T) =[\theta_0,\theta_1,\cdots,~\theta_{T-1}], \nonumber\\
&X(T)=[x_{0},x_{2},\cdots,x_{T-1}], ~X^+(T) =[x_{1},x_{2},\cdots,x_{T}], \nonumber
\end{align}
where the parentheses notation $(T)$ is omitted in the following analysis when no confusion is caused.

\subsection{Topology Inference and Preservation}
Consider that an adversary can observe the system state and aims to infer the topology matrix $W$, which can be leveraged to support further intelligent attacks against the NS. 
Based on the collected observations, 
the topology $W$ can be inferred by the following OLS estimator \cite{santos2019local}
\begin{align}\label{eq:topo_estimator}
\hat{W}(T)= X^+ X^\mathsf{T} (X X^\mathsf{T})^{-1}.  
\end{align}
Accordingly, the inference error of \eqref{eq:topo_estimator} is given by 
\begin{align}\label{eq:topo_error}
\|\hat{W}(T)-W\|= \| \Theta X^\mathsf{T} (X X^\mathsf{T})^{-1} \|. 
\end{align}

As a countermeasure, the goal of topology preservation methods is to design the added noises $\{\theta_t\}_{t=0}^{T-1}$ appropriately such that the inference accuracy of $\hat{W}(T)$ is deteriorated, which can be formulated as solving the following problem
\begin{subequations}\label{eq:optimization_problem1}
\begin{align}
\textbf{P}_\textbf{0}:~~~~\mathop{\max }\limits_{ \Theta(T)  }~~& \| \hat{W}(T) - W \|  \label{eq:optim-a}\\
{\rm{s.t.}}~~&  \lim \limits_{t\rightarrow \infty } \| x_t - x^*_t \| = 0, \label{eq:optim-c}
\end{align}
\end{subequations}
where $x^*_t=W^t x_0$ represents the ideal noise-free system state at time $T$. 
Note that the constraint \eqref{eq:optim-c} is required to ensure that the deviation between the actual and ideal states of the NS decays to asymptotically. 
It should be pointed out that finding the explicit optimal solution to the problem $\textbf{P}_\textbf{0}$ is intractable due to the high nonlinearity and long optimization horizon about $\Theta$. 
The latest work \cite{wang2022distributed} provides a sequential algorithm to obtain the optimal noise magnitude at each iteration, but it is still not global optimal.

\subsection{Problem of Interests}
In this paper, we aim to analyze the topology preservation performance by adding random noises and guide the optimal noise design. 
Note that when the noises are i.i.d., existing works have well characterized the inference error from the perspective of the following sample complexity \cite{jedra2023finitetime}
\begin{align}\label{eq:sample_complexity}
\Pr\left( \| \hat{W}(T) - W \|<\varepsilon(\delta,T) \right)\ge 1- \delta, 
\end{align}
where $\varepsilon(\delta,T)\ge0$ is the inference accuracy that decreases with $T$ growing, and $\delta\in[0,1]$ is the confidence level about $\hat{W}(T) $. 
Specifically, by appropriately setting $\delta$ into a decaying-to-zero form regarding $T$, one can further obtain that 
\begin{align}
\mathop {\lim }\limits_{T \to \infty} \Pr\left( \| \hat{W}(T) - W \|=0 \right)=1,
\end{align}
which implies that simply maximizing $\| \hat{W}(T) - W \|$ may not preserve the topology from being inferred asymptotically. 
Motivated by this point, we highlight that it would be much more meaningful to make $\| \hat{W}(T) - W \|$ approach to zero as slow as possible. 
This decaying rate effectively characterizes the non-asymptotic performance of $\| \hat{W}(T) - W \|$ from the preservation perspective. 
Therefore, instead of focusing on $\textbf{P}_\textbf{0}$, the goal of our work can be formulated as the following rate-maximization problem
\begin{subequations}\label{eq:optimization_problem2}
\begin{align}
\textbf{P}_\textbf{1}:~~~~\mathop{\max }\limits_{ \Theta(T)  }~~&  R\left(\| \hat{W}(T) - W \| \right) \label{eq:optim2-a}\\
{\rm{s.t.}}~~&  \lim \limits_{t\rightarrow \infty } \| x_t - x^*_t \| = 0, \label{eq:optim2-c}
\end{align}
\end{subequations}
where $R(\cdot)$ is the \textit{dominating rate factor}\footnote{For instance, given two functions $\frac{c_1}{T^3}$ and $c_2 \rho^T$ ($0\!<\rho\!<\!1$), we have $R(\frac{c_1}{T^3})=\bm{O}(\frac{1}{T^3})$ and $R(c_2 \rho^T)=\bm{O}(\rho^T)$. Specifically, the rate $R(c_2 \rho^T)$ is faster than $R(\frac{c_1}{T^3})$ due to $\lim \limits_{T\rightarrow \infty }{R(c_2 \rho^T)}/{R(\frac{c_1}{T^3})}=0$.} that is determined by $T$, and can be optimized globally. 
Based on the above formulation, we are interested in the following three aspects.
\begin{itemize}
\item How to establish an effective and computation-tractable metric to characterize $R(\| \hat{W}(T) \!-\! W \| )$, which needs to accommodate non-i.i.d. noises regardless of the distribution forms. 
\item If such a performance metric is found, use this metric to analyze the topology preservation performance of typical noise-adding methods and give the explicit expressions. 
\item Finding the (near-)optimal noise design that achieves reliable topology preservation performance even in the asymptotic case. 
\end{itemize}



\section{Metric Design Under Independent Noises}\label{sec:metric}

In this section, a trace-based variance-expectation ratio metric is proposed for the case of adding independent noises, and its relationship with the sample complexity in terms of the decaying rate is analyzed.

\subsection{Metric Design Inspired From Concentration Inequalities}

Notice that due to the added random noises, the matrix $X X^\mathsf{T}$ is invertible (non-singular) almost surely according to Sard’s theorem in measure theory. 
Hence, by the triangle inequality, the inference error is bounded by 
\begin{align}
\|\hat{W}(T) \!-\! W\|  \!\le\! \| \Theta X^\mathsf{T} \| \| (X X^\mathsf{T})^{-1} \| \!=\!  \kappa(X X^\mathsf{T}) \frac{ \| \Theta X^\mathsf{T} \|  }{ \|X X^\mathsf{T}\| },
\end{align}
where $ \kappa(X X^\mathsf{T}) \!<\! \infty $ is the conditional number of $X X^\mathsf{T}$. 
Clearly, the convergence rate of $\|\hat{W}(T)-W\|$ regarding $T$ is determined by the ratio $\frac{ \| \Theta X^\mathsf{T} \|  }{ \|X X^\mathsf{T}\| }$. 
Based on the ratio structure, we first present the following metric to characterize $R(\| \hat{W}(T) - W \|) $ under independent noises: 
\begin{itemize}
\item \textbf{The variance-expectation ratio under independent noises}:
\begin{equation}\label{eq:def_trace}
R_{\theta}(T)=\frac{ \sqrt{ \operatorname{tr}(\mathbb{D}\left[\Theta (X -\mathbb{E}[X])^\mathsf{T}  \right]) } }{ \operatorname{tr}(\mathbb{E}[(X -\mathbb{E}[X]) (X -\mathbb{E}[X])^\mathsf{T}] )},
\end{equation}
\end{itemize}
where the term $\mathbb{E}[X]$ is subtracted by $X$ to accommodate arbitrary bound initial state $x_0$.  

Next, we provide the detailed analysis of how $R_{\theta}(T)$ is motivated to be designed. 
For simple and intuitive interpretation, here we suppose $x_0=0$ as the sample complexity analysis does, and use the diagonal elements  $(\Theta X^\mathsf{T})^{ii}$ and $(X X^\mathsf{T})^{ii}$ as examples. 
When the noises $\{ \theta_t\}_{t=0}^{T-1}$ are independent in time, $\mathbb{E}[(\Theta X^\mathsf{T})^{ii}] =0 $ and $ \mathbb{E}[(X X^\mathsf{T})^{ii}] >0 $ hold. 
Then, it follows from the fundamental Chebshev and Markov (concentration) inequalities that 
\begin{align}\label{inequality:inspiration}
\left\{ \begin{aligned}
& \Pr\left\{  | (\Theta X^\mathsf{T})^{ii}|< c_1 \sqrt{\mathbb{D}[ (\Theta X^\mathsf{T})^{ii} ]} \right\} \ge 1-\frac{1}{c_1^2}  \\
& \Pr \left\{  | (X X^\mathsf{T})^{ii} | > c_2 \mathbb{E}[ (X X^\mathsf{T})^{ii} ] \right\}  \le \frac{1}{c_2 }
\end{aligned}\right.,
\end{align}
where $c_1,c_2\ge1$ are used to control the probability tails, and the terms $c_1 \sqrt{\mathbb{D}[ (\Theta X^\mathsf{T})^{ii} ]}$ and $c_2 \mathbb{E}[ (X X^\mathsf{T})^{ii} ]$ serve as the deviation quantification of $(\Theta X^\mathsf{T})^{ii}$ and $ (X X^\mathsf{T})^{ii}$, respectively. 
It is worth noting that although the probability tails in \eqref{inequality:inspiration} are weak to give the tail for 
\begin{align}\label{eq:prob_tail}
\Pr\left\{  \frac{| (\Theta X^\mathsf{T})^{ii}| }{| (X X^\mathsf{T})^{ii} | }< \frac{c_1 \sqrt{\mathbb{D}[ (\Theta X^\mathsf{T})^{ii} ]}}{c_2 \mathbb{E}[ (\Theta X^\mathsf{T})^{ii} ]}  \right\},
\end{align}
this inadequacy can be addressed by using more sophisticated concentration inequalities with the same deviation quantification but sharper probability tails. 
For example, one can apply the general Hoeffding inequality to $(\Theta X^\mathsf{T})^{ii}$ and the Hanson-Wright inequality to $(X X^\mathsf{T})^{ii}$ to derive the tail for \eqref{eq:prob_tail}. 
Since this point is not the focus of this paper, we refer to \cite[Chap. 2 and 6]{Roman2018High} for more details. 

In summary, \eqref{inequality:inspiration} provides a way of using the variance expectation quantities to characterize $\| \Theta X^\mathsf{T} \|$ and  $\|X X^\mathsf{T}\|$, which can be bounded by $| (\Theta X^\mathsf{T})^{ii}|$ and $| (X X^\mathsf{T})^{ii} | $, respectively. 
Meanwhile, the probability tail for the deviation quantification of the ratio in \eqref{eq:prob_tail} is tractable by existing concentration inequalities. 
Hence, $R_{\theta}(T)$ is an effective metric to characterize the decaying rate of $\|\hat{W}(T) \!-\! W\| $, and we will further prove its relationship with the well-investigated sample complexity in the following.

\begin{remark}
The reason for adopting the trace operation on matrices in $R_{\theta}(T)$ instead of norms lies in two aspects. 
First, the matrix trace has close connections with the norm (e.g., $|\operatorname{tr}( M_0)|\le n \| M_0  \|$ and $\operatorname{tr}( M_0^\mathsf{T} M_0 )=\| M_0  \|_{F}^2$), and thus some common norm inequalities are also applicable. 
Second, the trace of two matrix product is independent of the multiplication order, and the trace of multiple matrix summation is invariant under a similarity transformation. 
The two properties make it more convenient than the matrix norm to characterize the expectation and variance of a random matrix (e.g., Lemma \ref{le:equivalent_trans} in this paper). 
\end{remark}

\subsection{Relationship With The Sample Complexity}

In this part, we reveal the relationship between the proposed $R_{\theta}(T)$ and the sample complexity \eqref{eq:sample_complexity} in terms of the same decaying rate. 



First, the state evolution \eqref{eq:global_model} can be expanded as
\begin{align}\label{eq:x_expansion}
x_t=W^t x_0 + \sum_{m = 0}^{t-1} W^{t-m-1} \theta_m = x_t^* + \tilde{\theta}_{t-1},
\end{align}
where $\tilde{\theta}_{t-1}=\sum_{m = 0}^{t-1} W^{t-m-1} \theta_m$. 
Some properties for expectation and variance are also present for analyzing $R_{\theta}(T)$. 
Given $T$ groups of random variables $\{\omega_\ell\}_{\ell=1}^{T}$, the expectation and variance formulas for their weighted sum are given by 
\begin{align}
\!\!&\mathbb{E}[\sum_{\ell= 1}^{T} c_\ell\omega_\ell]= \sum_{\ell= 1}^{T} c_\ell \mathbb{E}[ \omega_\ell], \\
\!\!&\mathbb{D}[\sum_{\ell= 1}^{T} c_\ell \omega_\ell]= \sum_{\ell= 1}^{T} c_\ell^2 \mathbb{D}[ \omega_\ell] \! + \! \!\!\!\!\!\!\! \sum_{\ell_1,\ell_2=1,\ell_1\neq \ell_2}^{T} \!\!\!\!\!\! \!\!\! \operatorname{Cov}[ c_{\ell_1}\omega_{\ell_1}, c_{\ell_2}\omega_{\ell_2}]. \label{formula:sum_variance}
\end{align}
For two independent random variables $\omega_{\ell_a}$ and $\omega_{\ell_b}$, the expectation and variance of their product are given by 
\begin{align}
\!\!&\mathbb{E}[\omega_{\ell_a} \omega_{\ell_b}]= \mathbb{E}[\omega_{\ell_a}] \mathbb{E}[\omega_{\ell_b}], \label{eq:expectation_product}\\
\!\!&\mathbb{D}[\omega_{\ell_a} \omega_{\ell_b}] = \mathbb{D}[\omega_{\ell_a}] \mathbb{D}[\omega_{\ell_b}] \!+\! \mathbb{D}[\omega_{\ell_a}] \mathbb{E}^2[\omega_{\ell_b}] \!+\! \mathbb{D}[\omega_{\ell_b}] \mathbb{E}^2[\omega_{\ell_a}]. \label{eq:variance_product}
\end{align}

The following lemma presents the expectation and variance of a quadratic random variable.  
\begin{lemma}\label{le:exp_var}
Consider $\theta_t$ is time-independent zero-mean noise and $\mathbb{E}[(\theta_t^i)^4]<\infty,\forall i \in \mathcal{V}$. 
Given arbitrary $Q\in\mathbb{R}^{nT\times nT}$, 
the quadratic variable $z=\bm{\theta}_{0:T-1}^\mathsf{T} Q \bm{\theta}_{0:T-1}$ satisfies
\begin{align}
\!\!\!&\mathbb{E}[z] \!=\! \sum_{\ell= 1}^{nT} Q_{\ell\ell} \mathbb{D}[\bm{\theta}_{\ell}],  \\
\!\!\!&\mathbb{D}[z] \!=\!\! \sum_{\ell= 1}^{nT} Q_{\ell\ell}^2 (\mathbb{E}[\bm{\theta}_{\ell}^4] \!-\! \mathbb{D}^2[\bm{\theta}_{\ell}])  +  \!\!\!\! \!\!\!\!\!\! \!\!\sum_{\ell_1,\ell_2= 1, \ell_1\neq\ell_2}^{nT} \!\! \!\!\!\!\!\! \!\!\!\! Q_{\ell_1 \ell_2 }^2 \mathbb{D}[\bm{\theta}_{\ell_1}] \mathbb{D}[\bm{\theta}_{\ell_2}],
\end{align}
where $\bm{\theta}_{\ell}$ is the $\ell$-th element of $\bm{\theta}_{0:T-1}$. 
\end{lemma}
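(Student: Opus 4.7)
The plan is to expand the quadratic form entry-wise and then compute the expectation and variance by classifying the resulting second- and fourth-order moments of independent zero-mean variables. Writing
\begin{equation*}
z = \sum_{\ell_1, \ell_2 = 1}^{nT} Q_{\ell_1 \ell_2}\, \bm{\theta}_{\ell_1}\, \bm{\theta}_{\ell_2},
\end{equation*}
linearity of expectation gives $\mathbb{E}[z] = \sum_{\ell_1,\ell_2} Q_{\ell_1 \ell_2}\, \mathbb{E}[\bm{\theta}_{\ell_1} \bm{\theta}_{\ell_2}]$. By independence in time together with the zero-mean assumption (so that \eqref{eq:expectation_product} applies), any term with $\ell_1 \neq \ell_2$ factors into a product of two zero means and therefore vanishes. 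Only the diagonal terms survive, each contributing $\mathbb{E}[\bm{\theta}_\ell^2] = \mathbb{D}[\bm{\theta}_\ell]$, which yields the first identity.

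For the variance I would compute $\mathbb{D}[z] = \mathbb{E}[z^2] - (\mathbb{E}[z])^2$ by expanding
\begin{equation*}
z^2 = \sum_{i,j,k,l} Q_{ij}\, Q_{kl}\, \bm{\theta}_i \bm{\theta}_j \bm{\theta}_k \bm{\theta}_l
\end{equation*}
and classifying the quadruple moment $\mathbb{E}[\bm{\theta}_i \bm{\theta}_j \bm{\theta}_k \bm{\theta}_l]$ according to the index matching pattern. Independence plus zero mean forces this moment to vanish whenever any index has odd multiplicity, leaving only two cases: (i) all four indices coincide, contributing $Q_{\ell\ell}^2\, \mathbb{E}[\bm{\theta}_\ell^4]$; and (ii) the four indices split into two distinct pairs $\{a,b\}$ with $a \neq b$, which admits the three pairing patterns $\{i=j,\,k=l\}$, $\{i=k,\,j=l\}$, and $\{i=l,\,j=k\}$, each contributing $\mathbb{D}[\bm{\theta}_a]\mathbb{D}[\bm{\theta}_b]$ weighted by the appropriate product of $Q$-entries. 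Here the finite-fourth-moment hypothesis is exactly what makes $\mathbb{E}[z^2]$ finite and the expansion legitimate.

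Subtracting $(\mathbb{E}[z])^2 = \sum_\ell Q_{\ell\ell}^2\, \mathbb{D}^2[\bm{\theta}_\ell] + \sum_{\ell_1 \neq \ell_2} Q_{\ell_1 \ell_1} Q_{\ell_2 \ell_2}\, \mathbb{D}[\bm{\theta}_{\ell_1}]\mathbb{D}[\bm{\theta}_{\ell_2}]$ cancels precisely the contribution of the $\{i=j,\,k=l\}$ pattern, producing the $(\mathbb{E}[\bm{\theta}_\ell^4] - \mathbb{D}^2[\bm{\theta}_\ell])$ correction on the diagonal. Because $\bm{\theta}^\mathsf{T} Q \bm{\theta} = \bm{\theta}^\mathsf{T} \tfrac{1}{2}(Q+Q^\mathsf{T}) \bm{\theta}$, one may assume without loss of generality that $Q$ is symmetric, which collapses the remaining two pairing patterns into a single $Q_{\ell_1 \ell_2}^2$ sum, matching the stated formula.

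The main obstacle is really just the combinatorial bookkeeping of the three pairing patterns, i.e.\ keeping track of which pattern contributes $Q_{ab}^2$ versus $Q_{ab} Q_{ba}$ so that the symmetrization step produces exactly the compact off-diagonal term written in the lemma. Aside from this care, the derivation is a routine application of the product-moment identities \eqref{eq:expectation_product}--\eqref{eq:variance_product} combined with the zero-mean and independence assumptions on $\bm{\theta}_{0:T-1}$.
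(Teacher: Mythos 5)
Your route is genuinely different from the paper's. The paper computes $\mathbb{D}[z]$ via the sum--variance formula \eqref{formula:sum_variance}, asserting in \eqref{formula:zero_correlation} that $\operatorname{Cov}[\bm{\theta}_{\ell_{1a}}\bm{\theta}_{\ell_{2a}},\bm{\theta}_{\ell_{1b}}\bm{\theta}_{\ell_{2b}}]=0$ whenever the two index pairs are not identical, and then applies the product-variance formula \eqref{eq:variance_product} to each surviving term. You instead expand $\mathbb{E}[z^2]-(\mathbb{E}[z])^2$ directly and classify the fourth-order moments by pairing pattern. Your expectation argument coincides with the paper's. For the variance, your approach is actually the more careful one: it surfaces the third pairing pattern $\{i=l,\,j=k\}$, which contributes $\sum_{a\neq b}Q_{ab}Q_{ba}\,\mathbb{D}[\bm{\theta}_a]\mathbb{D}[\bm{\theta}_b]$. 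The paper's covariance claim is false precisely for these transposed pairs ($\operatorname{Cov}[\bm{\theta}_a\bm{\theta}_b,\bm{\theta}_b\bm{\theta}_a]=\mathbb{D}[\bm{\theta}_a]\mathbb{D}[\bm{\theta}_b]\neq 0$ for $a\neq b$), so the paper silently drops this term; your bookkeeping keeps it.

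The one step of yours that does not land is the final symmetrization. Replacing $Q$ by $\tfrac12(Q+Q^\mathsf{T})$ is legitimate, but it does not collapse the off-diagonal contribution to $\sum_{\ell_1\neq\ell_2}Q_{\ell_1\ell_2}^2\mathbb{D}[\bm{\theta}_{\ell_1}]\mathbb{D}[\bm{\theta}_{\ell_2}]$ in terms of the original entries: carried through, your expansion gives $\sum_{\ell_1\neq\ell_2}\bigl(Q_{\ell_1\ell_2}^2+Q_{\ell_1\ell_2}Q_{\ell_2\ell_1}\bigr)\mathbb{D}[\bm{\theta}_{\ell_1}]\mathbb{D}[\bm{\theta}_{\ell_2}]$, which agrees with the lemma's display only when $Q_{\ell_1\ell_2}Q_{\ell_2\ell_1}=0$ off the diagonal (and differs by a factor of $2$ when $Q$ is symmetric). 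So for arbitrary $Q$, the formula your derivation produces and the formula the lemma states are not identical; the discrepancy is inherited from the paper's own proof rather than introduced by you, and it is immaterial where the lemma is used (the relevant $Q$'s are either block-lower-triangular with diagonal diagonal blocks, where the extra term vanishes, or symmetric, where it only changes a constant inside $\bm{O}(\cdot)$). You should either state the extra $Q_{\ell_1\ell_2}Q_{\ell_2\ell_1}$ term explicitly or restrict the class of $Q$ for which the compact form holds, rather than appeal to symmetrization.
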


\begin{proof}
The proof is provided in Appendix \ref{pr:le:exp_var}. 
\end{proof}


Next, we give the convergence rate of $R_{\theta}(T)$ and show its relationship with that of the sample complexity about $\hat{W}(T)$. 
Note that the sample complexity analysis in existing works does not require $W$ to satisfy Assumption \ref{assu:topo}, but assumes that the noises are i.i.d. zero-mean sub-Gaussian noises. 

\begin{theorem}\label{th:equivalent_rate}
Considering that the NS is updated by \eqref{eq:global_model}, $\{\theta_t\}_{t=0}^T$ are i.i.d. sub-Gaussian noises with zero-mean and $ \mathbb{E}[\theta_t \theta_{t}^\mathsf{T}] =\sigma_0^2 I_n,~t\in\mathbb{N}$, we have 
\begin{align}\label{eq:rates_eq}
R_{\theta}(T)=\left\{
\begin{aligned}
& \bm{O}(\frac{1}{\sqrt{T}}),&&~\text{if}~\rho(W) < 1 \\
&\bm{O}(\frac{1}{T}),&&~\text{if}~\rho(W)=1 \\
&\bm{O}(\frac{1}{\rho^T(W) }),&&~\text{if}~\rho(W)>1 \\
\end{aligned}\right..
\end{align}
\end{theorem}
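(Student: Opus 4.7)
The plan is to reduce both pieces of $R_\theta(T)$ to sums of $\operatorname{tr}(\Gamma_t)$ by exploiting the expansion $x_t = x_t^* + \tilde\theta_{t-1}$ from \eqref{eq:x_expansion}, and then to read off the three rates from the spectral behaviour of $\|W^m\|_F^2$.

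First, since the noise is zero-mean, $\mathbb{E}[x_t] = x_t^*$, so the $t$-th column of $X - \mathbb{E}[X]$ is $\tilde\theta_{t-1}$ with the convention $\tilde\theta_{-1}=0$. Using the i.i.d. relation $\mathbb{E}[\theta_m\theta_m^\mathsf{T}]=\sigma_0^2 I_n$, the denominator collapses to
\begin{equation*}
D(T) \;:=\; \operatorname{tr}\!\Bigl(\mathbb{E}\Bigl[\sum_{t=0}^{T-1}\tilde\theta_{t-1}\tilde\theta_{t-1}^\mathsf{T}\Bigr]\Bigr) \;=\; \sigma_0^2\sum_{t=1}^{T-1}\operatorname{tr}(\Gamma_t).
\end{equation*}
For the numerator, I would set $M := \Theta(X-\mathbb{E}[X])^\mathsf{T}=\sum_{t}\theta_t\tilde\theta_{t-1}^\mathsf{T}$ and write each diagonal entry $M_{ii}$ as a quadratic form $\bm\theta_{0:T-1}^\mathsf{T} Q^{(i)}\bm\theta_{0:T-1}$. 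Because $\tilde\theta_{t-1}$ only involves $\theta_m$ with $m<t$, the matrix $Q^{(i)}$ has a vanishing diagonal; Lemma \ref{le:exp_var} then kills the fourth-moment term, and among the off-diagonal contributions only those pairing indices with $t=s$, $m=m'$, $j=j'$ survive the expectation. A direct count yields
\begin{equation*}
\operatorname{tr}(\mathbb{D}[M]) \;=\; \sigma_0^4\sum_{t=1}^{T-1}\operatorname{tr}(\Gamma_t) \;=\; \sigma_0^2\,D(T),
\end{equation*}
so $R_\theta(T) = \sigma_0/\sqrt{D(T)}$ and the problem is reduced to estimating $D(T)$.

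The rate of $D(T)$ then follows from the spectral regime of $W$. If $\rho(W)<1$, Gelfand's formula gives a geometric decay of $\|W^m\|_F$, hence $\operatorname{tr}(\Gamma_t)$ converges to a finite limit and $D(T)=\Theta(T)$. If $\rho(W)=1$, Assumption \ref{assu:topo} together with \eqref{eq:lambda} makes $W$ primitive row-stochastic, so $W^m$ tends to a nonzero rank-one limit and $\|W^m\|_F^2$ stays bounded and bounded away from zero; this gives $\operatorname{tr}(\Gamma_t)=\Theta(t)$ and $D(T)=\Theta(T^2)$. If $\rho(W)>1$, the Jordan form of $W$ yields $\|W^m\|_F^2=\Theta(\rho^{2m}(W))$ up to polynomial factors, so $\operatorname{tr}(\Gamma_t)=\Theta(\rho^{2t}(W))$ and $D(T)=\Theta(\rho^{2T}(W))$. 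Feeding each estimate into $R_\theta(T)=\sigma_0/\sqrt{D(T)}$ produces the three cases in \eqref{eq:rates_eq}.

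The main obstacle I anticipate is the bookkeeping in the numerator step: one must verify carefully that $Q^{(i)}$ has a zero diagonal (so the $Q_{\ell\ell}^2$ contribution of Lemma \ref{le:exp_var}, which would pull in the unwanted fourth moments, drops out) and that among the remaining off-diagonal terms only the coincident pairings survive, producing exactly the same $\sum_t \operatorname{tr}(\Gamma_t)$ as the denominator. Once the identity $\operatorname{tr}(\mathbb{D}[M])=\sigma_0^2 D(T)$ is in hand, the case split on $\rho(W)$ is routine, and the sub-Gaussian hypothesis is invoked only to supply the fourth-moment integrability required by Lemma \ref{le:exp_var}.
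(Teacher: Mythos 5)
Your proposal is correct and follows essentially the same route as the paper: both reduce the numerator and denominator to $\sigma_0^4\sum_{t=1}^{T-1}\operatorname{tr}(\Gamma_t)$ and $\sigma_0^2\sum_{t=1}^{T-1}\operatorname{tr}(\Gamma_t)$ respectively, obtain $R_\theta(T)=1/\sqrt{\sum_{t=1}^{T-1}\operatorname{tr}(\Gamma_t)}$, and then read off the three regimes from the growth of $\|W^m\|_F^2$. The only cosmetic difference is that you route the numerator variance through Lemma \ref{le:exp_var} and a zero-diagonal quadratic form, whereas the paper computes $\mathbb{D}[\theta_t^i\tilde\theta_{t-1}^j]$ directly via the product-variance formula and a zero-correlation argument across distinct $t$; both yield the same identity.
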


\begin{proof}
The proof is provided in Appendix \ref{pr:th:equivalent_rate}. 
\end{proof}

\begin{table}[t]
\caption{\label{tab:rates_comparison} The decaying rates of $R_{\theta}(T)$ and the sample complexity \eqref{eq:sample_complexity} under i.i.d. zero-mean sub-Gaussian noises}
\begin{tabular}{ccc}
\toprule[1pt]
\multicolumn{1}{c}{Spectral radius}     & \multicolumn{1}{c}{ $\bm{R_{\theta}(T)}$ }  & \makecell[c]{The sample complexity \eqref{eq:sample_complexity}$^{^1}$}    \\ \hline
\makecell[c]{ $\rho(W)<1$ } & \makecell[c]{ $\bm{O}(\frac{1}{\sqrt{T}})$ }   & \makecell[c]{$\bm{O}(\frac{1}{\sqrt{T}})$$^{^2}$  \cite{simchowitz2018learning} }  \\ \hline
\makecell[c]{ $\rho(W)=1$ } & \makecell[c]{ $\bm{O}(\frac{1}{T})$ } & \makecell[c{p{4.7cm}}]{ lower bound: $\bm{O}(\frac{1}{T})$ \cite[Theorem 2.3]{simchowitz2018learning} \\ upper bound: $\bm{O}(\frac{1}{T})$ if $W$ is diagonalizable \cite[Corollary A.3]{simchowitz2018learning}, or $\bm{O}(\frac{\log T}{T})$ otherwise \cite[Theorem 1]{sarkar2019near} } \\ \hline 
\makecell[c]{ $\rho(W)>1$  } & \makecell[c]{ $\bm{O}(\frac{1}{\rho^T })$ }  & \makecell[c]{ $\bm{O}(\frac{1}{\rho^T })$$^{^3}$ \cite{sarkar2019near} }  \\ 
\bottomrule[1pt]
\end{tabular}
\begin{tablenotes}[para]\footnotesize
    \item[1] The decaying rate of \eqref{eq:sample_complexity} is the rate of how $\varepsilon(\delta,T)$ will decay to zero with $T$ growing, given an arbitrary small $\delta$. \\
    \item[2] $\bm{O}(\frac{1}{\sqrt{T}})$ applies to both the upper and lower bounds of the sample complexity (see Theorem 2.1 and 2.3 in \cite{simchowitz2018learning}, respectively). \\
    \item[3] $\bm{O}(\frac{1}{\rho^T })$ applies to both the upper and lower bounds of the sample complexity (see Theorem 1 and Proposition 4.1 in \cite{sarkar2019near}, respectively). 
\end{tablenotes}
\end{table}

Theorem \ref{th:equivalent_rate} gives the decaying rate of $R_{\theta}(T)$ under the same conditions that are used in the sample complexity analysis of $\|\hat{W}(T)\!-\!W\|$ in the literature. 
Specifically, Table \ref{tab:rates_comparison} provides the detailed comparisons to reveal that $R_{\theta}(T)$ is a well-defined quantity and coincides with the sample complexity in terms of the decaying rate. 
It is worth noting that when $\rho(W)=1$, the decaying rate of $R_{\theta}(T)$ may differ up to a logarithmic factor from that of the upper bound of the sample complexity, which exhibits as $\bm{O}(\frac{\log T}{T})$ in some situations. 
In fact, if $W$ is a diagonalizable matrix, the above rate will reduce to $\bm{O}(\frac{1}{T})$ \cite[Corollary A.3]{simchowitz2018learning}. 
We observe that in most settings of $W$, the decaying rate when $\rho(W)=1$ exhibits almost the same as $\bm{O}(\frac{1}{T})$, considering $\log T$ increases extremely slow (e.g., $\log 10^{100}=100\log 10$). 
Therefore, $R_{\theta}(T)$ is sufficient to characterize the decaying rate of $\|\hat{W}(T)-W\|$.


Based on the simplicity and effectiveness of $R_{\theta}(T)$, we further leverage it to characterize the performance of topology preservation performance and obtain the optimal decaying-variance design in the following sections. 
Note that the definition \eqref{eq:def_trace} filters the influence of the initial state $x_0$, and for simple expressions, we directly consider $x_0=0$ and use the following metric hereafter
\begin{itemize}
\item \textbf{The simplified version of $R_{\theta}(T)$ defined by \eqref{eq:def_trace}}:
\begin{equation}\label{eq:def_trace2}
R_{\theta}(T)=\frac{ \sqrt{ \operatorname{tr}(\mathbb{D}\left[\Theta X^\mathsf{T}  \right] )}  }{ \operatorname{tr}(\mathbb{E}[X X^\mathsf{T}] ) }~~(x_0=0),
\end{equation}
\end{itemize}
which is proved equivalent with \eqref{eq:def_trace} in Appendix \ref{pr:th:equivalent_rate}.

\section{Topology Preservation by Independent Noises}\label{sec:independent}

In this section, we first give the expressions of the independent noises. 
Then, we use the proposed metric to analyze the topology preservation performance under the designed noises.

\subsection{Constraining The State Deviation by Decaying Noises}

First, recall that Theorem \ref{th:equivalent_rate} reveals that the topology matrix can be accurately inferred in the asymptotic sense, but the convergence of the state deviation $\| x_t - x^*_t \|$ is not guaranteed. 
In fact, if the added noises are in i.i.d. forms, then it is straightforward from \eqref{eq:x_expansion} to obtain that 
\begin{align}
\mathbb{E}[\| x_t - x^*_t \|^2] = & \operatorname{tr}( \mathbb{E}[ \tilde{\theta}_{t-1} \tilde{\theta}_{t-1}^\mathsf{T})]) = \sigma_0^2 \operatorname{tr}(\Gamma_t),
\end{align}
which is divergent in the mean square sense when $\rho(W)=1$. 
To avoid the explosive state deviation issue, we highlight that it is necessary for the added noises to meet the following variance-decaying condition
\begin{equation}\label{eq:necessary_decaying}
 \mathop {\lim }\limits_{t \to \infty} \sigma_t^2 =0. 
\end{equation} 
The necessity of \eqref{eq:necessary_decaying} is easy to verify by contradiction. 
Suppose that there exists a variance sequence $\{\sigma_t^2\}$ satisfying $\mathop {\lim }\limits_{t \to \infty} \sigma_t^2=C_{\sigma}>0$ such that $\mathop {\lim }\limits_{t \to \infty}  \mathbb{E}[\| x_t - x^*_t \|^2]=0$. 
Then, there exists a group of $C_{\sigma_1}>0$ and $C_{\sigma_2}>0$ such that $ t C_{\sigma_2} \le \mathbb{E}[\| x_t - x^*_t \|^2]\le t C_{\sigma_2}  $. 
Let $t\to\infty$ and the contradiction with \eqref{eq:necessary_decaying} occurs.

Motivated by the necessary decaying requirement \eqref{eq:necessary_decaying}, we consider that the variance of added $\theta_t$ decays in a polynomial form, given by 
\begin{equation}\label{eq:noise_variance}
\sigma_t^2=\frac{\sigma_0^2}{{(t+1)}^\alpha},~t\in\mathbb{N},
\end{equation} 
where $\alpha\ge0$ is the constant power to control the decaying rate of $\sigma_t^2$. 
Specifically, the case of $\alpha=0$ corresponds to that the added noises are i.i.d. noises. 
We also note that adding noises of exponentially-decaying forms has been used to preserve the state privacy of NSs with undirected topologies, e.g., \cite{mo2017privacy,he2018privacy}. 
Nevertheless, how the decaying noises will affect the preservation performance for general directed topologies is still lacking analysis.

\begin{remark}
The reasonability for considering \eqref{eq:noise_variance} is two-fold. 
First, most decaying function forms can be approximated by multiple groups of polynomial functions. 
Specifically, the decaying rate in \eqref{eq:noise_variance} is modest compared with the commonly-used exponentially-decaying rate, and the obtained results for \eqref{eq:noise_variance} will easily accommodate the latter case. 
Second, the summation $\sum_{t = 0}^{T-1} \frac{1}{{(t+1)}^\alpha}$ exhibits various convergence/divergence behaviors for different $\alpha$. 
The two points help us to exploit the fine-grained relationship between the preservation performance and various decaying noises. 
\end{remark}

In the sequel of this paper, we consider that $\{\theta_t\}_{t=0}^{T-1}$ subject to the following assumption. 
\begin{assumption}\label{assu:independent_noise}
The added independent noises $\{\theta_t\}_{t=0}^{T-1}$ satisfy
\begin{align}\label{eq:noise_requirement}
\left\{\begin{aligned}
&\mathbb{E}[\theta_t^i]=0,~\mathbb{E}[(\theta_t^i)^4]<\infty,\forall i \in \mathcal{V} \\
& \mathbb{E}[\theta_t \theta_t^\mathsf{T}]=\frac{\sigma_0^2}{{(t+1)}^\alpha}I,~\mathbb{E}[\theta_{t_1} \theta_{t_2}^\mathsf{T}]=0 (t_1\neq t_2)
\end{aligned}\right..
\end{align}
\end{assumption}


\subsection{Topology Preservation Analysis Under Independent Noises}

First, note that when $\alpha\neq 0$, the conclusions in Theorem \ref{th:equivalent_rate} are not applicable because the noises are no longer identically distributed. 
Since the noise accumulation plays a key role in the rate characterization for $R_{\theta}(T)$, 
we define the following variance sum for $\{\sigma_t^2\}_{t=0}^{T-1}$ to facilitate the analysis  
\begin{equation}
h_{\alpha}(T)=\sum_{t = 0}^{T-1}\frac{\sigma_0^2}{(t+1)^{\alpha}}. 
\end{equation}
Accordingly, the increment scale of $h_{\alpha}(T)$ is demonstrated by the following lemma. 



\begin{lemma}\label{le:approximate}
When $T$ grows, the increment scale of $ h_{\alpha}(T)=\sum_{t=1}^{T}\frac{\sigma_0^2}{(t+1)^{\alpha}} $ is characterized by
\begin{equation}
h_{\alpha}(T)=\bm{O}( F_{\alpha}(T) ), 
\end{equation}
where $F_{\alpha}(T)=\int_1^T \frac{\sigma_0^2}{y^\alpha} d y$. 
\end{lemma}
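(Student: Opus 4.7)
My plan is to exploit the monotonicity of the integrand $y^{-\alpha}$ to sandwich the partial sum $h_{\alpha}(T)$ between two integrals of the same form as $F_{\alpha}$, which is the classical integral-comparison (Euler--Maclaurin) argument. Since for $\alpha\ge 0$ the function $y\mapsto \sigma_0^2/y^{\alpha}$ is nonincreasing on $[1,\infty)$, the first step is to compare the $t$-th summand $\sigma_0^2/(t+1)^{\alpha}$ with the integral of $\sigma_0^2/y^{\alpha}$ over $[t+1,t+2]$ (giving a lower bound) and over $[t,t+1]$ (giving an upper bound). Summing these pointwise comparisons over $t=1,\ldots,T$ telescopes into a two-sided bound of the form $F_{\alpha}(T+1)+c_1 \le h_{\alpha}(T) \le F_{\alpha}(T)+c_2$, where $c_1,c_2$ are constants independent of $T$ arising from boundary summands after a harmless shift of index accounting for the $(t+1)$ in the denominator.

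Next I would split into the three regimes dictated by the antiderivative of $y^{-\alpha}$: (i) $\alpha\in[0,1)$, where $F_{\alpha}(T)$ grows polynomially as $\sigma_0^2(T^{1-\alpha}-1)/(1-\alpha)$ and dominates the $\bm{O}(1)$ boundary remainders; (ii) $\alpha=1$, where $F_{1}(T)=\sigma_0^2\log T$ grows (slowly) to infinity and again absorbs the constant remainders; (iii) $\alpha>1$, where both $h_{\alpha}(T)$ and $F_{\alpha}(T)$ converge to finite positive constants as $T\to\infty$, so the ratio is trivially bounded. In every regime, dividing the sandwich bound by $F_{\alpha}(T)$ and taking the $\limsup$ in $T$ shows that $\limsup_{T\to\infty}h_{\alpha}(T)/F_{\alpha}(T)<\infty$, which is precisely the statement $h_{\alpha}(T)=\bm{O}(F_{\alpha}(T))$ claimed by the lemma.

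The main subtlety I expect is at the transition $\alpha=1$, where the antiderivative switches from a power to a logarithm and the naive $1/(1-\alpha)$ prefactor in the closed-form expression for $F_{\alpha}$ blows up; this case has to be handled separately rather than by substitution into the power-law formula. A secondary point to check is that in the convergent regime $\alpha>1$, $F_{\alpha}(T)$ approaches its limit from below but stays bounded away from zero, which can be guaranteed simply by $F_{\alpha}(T)\ge F_{\alpha}(2)>0$ for $T\ge 2$, so that division by $F_{\alpha}(T)$ in the $\bm{O}(\cdot)$ estimate is legitimate for all sufficiently large $T$. Everything else reduces to routine calculus on the explicit antiderivatives of $y^{-\alpha}$.
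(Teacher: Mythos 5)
Your proof is correct, but it takes a genuinely different and more elementary route than the paper. The paper invokes the full Euler--Maclaurin summation formula: it writes $h_{\alpha}(T)-F_{\alpha}(T)$ as a series of Bernoulli-number-weighted derivative differences $\frac{(-1)^{r+1}B_{r+1}}{(r+1)!}\bigl(f^{(r)}(T)-f^{(r)}(1)\bigr)$ and then proves that this series is bounded by combining the convergence of $\sum_r (-1)^{r+1}B_{r+1}/(r+1)!$ with a Stirling-type estimate showing $f^{(T)}(T)\to 0$. You instead use the bare integral comparison test: monotonicity of $y\mapsto \sigma_0^2/y^{\alpha}$ sandwiches each summand between $\int_{t+1}^{t+2}$ and $\int_{t}^{t+1}$ of the integrand, telescoping to $h_{\alpha}(T)=F_{\alpha}(T)+\bm{O}(1)$, after which the three regimes $\alpha<1$, $\alpha=1$, $\alpha>1$ are handled exactly as you describe (and your care about the logarithmic case and about $F_{\alpha}(T)\ge F_{\alpha}(2)>0$ in the convergent regime is the right care to take, given the paper's definition of $\bm{O}$ as a finite limit of the ratio). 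For the lemma as stated your argument is not only sufficient but cleaner, since it needs no control of high-order derivatives or Bernoulli asymptotics. What the paper's heavier machinery buys is reusability: the same Euler--Maclaurin expansion and its derivative-decay condition reappear verbatim in Corollary \ref{coro:sufficient_noise} and in the proof of Theorem \ref{th:state_bound}, where the integrand $\rho_\epsilon^{y}/y^{\alpha}$ is no longer monotone and the plain integral test would not immediately apply; your method covers Lemma \ref{le:approximate} (and indeed any non-increasing variance profile) but would need to be supplemented there.
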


\begin{proof}
The proof is provided in Appendix \ref{pr:le:approximate}. 
\end{proof}

The key insight of Lemma \ref{le:approximate} lies in using Euler-Maclaurin summation formula \cite{apostol1999elementary} to approximate $h_{\alpha}(T)$. 
It is intuitive to think that the smaller $\alpha$ is, the more noisy the observations will be, which might benefit to preserve the topology. 
However, the following result will reveal that the above intuition is not always true and give the optimal $\alpha$ to hinder the topology inference performance.

\begin{theorem}\label{th:decaying_rate}
Considering that the NS is updated by \eqref{eq:global_model} and Assumption \ref{assu:topo}-\ref{assu:independent_noise} hold,  
the decaying rate of $R_{\theta}(T)$ is given by  
\begin{align}\label{eq:limit_Re}
\!\! R_{\theta}(T)  = \left\{\begin{aligned}
& \sqrt{  \bm{O}(\frac{1}{ T^2}) \!+\!  \bm{O}(\frac{1}{T^{3\!-\!\alpha}})  },&&\text{if}~\alpha\!<\!1 \\
&   \sqrt{ \bm{O}(\frac{1}{T^2}) \!+\! \bm{O}( \frac{1}{ T^2 \log T})} , &&\text{if}~\alpha\!=\!1 \\
&   \bm{O}(\sqrt{\frac{\alpha-1}{2^\alpha}} \frac{1}{T}) ,&&\text{if}~\alpha \!>\!1
\end{aligned} 
\right..
\end{align}
\end{theorem}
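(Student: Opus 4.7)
The plan is to (i) reduce $\operatorname{tr}(\mathbb{E}[XX^\mathsf{T}])$ and $\operatorname{tr}(\mathbb{D}[\Theta X^\mathsf{T}])$ to explicit convolution sums of $\{\sigma_m^2\}$ against $\{\|W^s\|_F^2\}$, (ii) replace $\|W^s\|_F^2$ by its Perron limit $c_W$ up to an exponentially small remainder, and (iii) read off the decaying rate of $R_{\theta}(T)$ in each regime of $\alpha$ via the Euler--Maclaurin estimate of Lemma \ref{le:approximate}.

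First, with $x_0=0$ the expansion \eqref{eq:x_expansion} gives $x_t = \sum_{m=0}^{t-1} W^{t-m-1}\theta_m$, so every diagonal entry of $XX^\mathsf{T}$ and of $\Theta X^\mathsf{T}$ is a quadratic form in $\bm{\theta}_{0:T-1}$. Applying Lemma \ref{le:exp_var} --- equivalently, a direct pairing-of-fourth-moments argument in which independence kills every contribution except the diagonal match $t_1=t_2$, $m_1=m_2$, $k_1=k_2$ --- yields
\begin{align*}
\operatorname{tr}(\mathbb{E}[XX^\mathsf{T}]) &= \sum_{t=0}^{T-1}\sum_{m=0}^{t-1}\sigma_m^2\|W^{t-m-1}\|_F^2,\\
\operatorname{tr}(\mathbb{D}[\Theta X^\mathsf{T}]) &= \sum_{t=0}^{T-1}\sigma_t^2\sum_{m=0}^{t-1}\sigma_m^2\|W^{t-m-1}\|_F^2.
\end{align*}
Assumption \ref{assu:topo} together with \eqref{eq:lambda} makes $W$ primitive row-stochastic, so Perron--Frobenius gives $W^s\to \mathbf{1}v^\mathsf{T}$ at geometric rate $|\lambda_2|^s$, and hence $\|W^s\|_F^2 = c_W + \eta_s$ with $c_W = n\|v\|^2 > 0$ and $|\eta_s| \le C|\lambda_2|^s$. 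Substituting this splitting into the inner sum reduces both traces to functions of $h_\alpha(\cdot)$ up to an exponentially small residue.

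Next I would apply Lemma \ref{le:approximate} while retaining one correction past the leading Euler--Maclaurin term. For $\alpha<1$, $h_\alpha(t) = \sigma_0^2 t^{1-\alpha}/(1-\alpha) + B + O(t^{-\alpha})$, so the denominator behaves as $C_D T^{2-\alpha}$ whereas the numerator splits cleanly into $C_1 T^{2-2\alpha} + C_2 T^{1-\alpha}$; squaring the ratio produces exactly $\bm{O}(1/T^2) + \bm{O}(1/T^{3-\alpha})$. For $\alpha=1$, $h_1(t) = \sigma_0^2\log t + \gamma' + O(1/t)$ gives a denominator of order $T\log T$ and a numerator of order $(\log T)^2 + \log T$, so $R_{\theta}(T)^2 = \bm{O}(1/T^2) + \bm{O}(1/(T^2\log T))$. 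For $\alpha>1$, $h_\alpha(\infty)$ is finite by Lemma \ref{le:approximate}, so the inner sum stabilizes; the denominator grows linearly in $T$ while the numerator stays bounded, yielding $R_{\theta}(T) = \bm{O}(1/T)$. The explicit prefactor $\sqrt{(\alpha-1)/2^\alpha}$ then arises from combining the Lemma \ref{le:approximate} estimate $h_\alpha(\infty) = \bm{O}(1/(\alpha-1))$ with the pointwise bound $\sigma_t^2 \le \sigma_0^2/2^\alpha$ for $t \ge 1$, the latter giving an extra $2^{-\alpha}$ saving in the numerator.

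The main obstacle is the bookkeeping in the $\alpha<1$ and $\alpha=1$ regimes: the second term $\bm{O}(1/T^{3-\alpha})$ (respectively $\bm{O}(1/(T^2\log T))$) is invisible at the leading order of Lemma \ref{le:approximate} and surfaces only once the constant correction $B$ (respectively $\gamma'$) in $h_\alpha(t)$ is tracked; collapsing everything into a single $\bm{O}$-symbol would erase the refined two-term structure in \eqref{eq:limit_Re}. A secondary subtlety is confirming that the exponentially small perturbation $\eta_s$ of $\|W^s\|_F^2$ contributes only at a strictly lower order, which holds because its convolution against $\{\sigma_m^2\}$ decays as $\bm{O}(1/t^\alpha)$, dominated by the $B$-term in every regime of $\alpha$.
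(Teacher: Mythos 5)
Your proposal is correct and follows essentially the same route as the paper's own proof: reducing both traces to the double sums $\sum_t \sigma_t^2\sum_m\sigma_m^2\|W^{t-m-1}\|_F^2$ and $\sum_t\sum_m\sigma_m^2\|W^{t-m-1}\|_F^2$, using $\|W^s\|_F^2=\Theta(1)$ (the paper via two-sided Jordan-form bounds rather than the Perron limit, an immaterial difference), and then invoking the Euler--Maclaurin estimate of Lemma~\ref{le:approximate} while retaining the constant correction to $h_\alpha(t)$, which is exactly how the paper extracts the secondary terms $\bm{O}(1/T^{3-\alpha})$ and $\bm{O}(1/(T^2\log T))$ and the prefactor $\sqrt{(\alpha-1)/2^\alpha}$. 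The obstacle you flag (that collapsing $h_\alpha(t)$ to its leading order erases the second term) is precisely the point the paper makes when noting that $F_\alpha(1)=0$ contradicts $\sum_{m=0}^{0}(m+1)^{-\alpha}=1$.
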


\begin{proof}
The proof is provided in Appendix \ref{pr:th:decaying_rate}. 
\end{proof}

Theorem \ref{th:decaying_rate} explicitly characterizes the decaying tendency of the inference error, which always decays to zero as $T$ goes to infinity. 
Specifically, the rate characterization well accommodates the case of $\rho(W)=1$ in Theorem \ref{th:equivalent_rate} by letting $\alpha=0$, and the noise-free case by letting $\alpha\to\infty$. 
It is worth noting from Theorem \ref{th:decaying_rate} that, the non-asymptotic topology preservation performance under decaying noises does not necessarily have monotonicity about the decaying rate of the noise variance. 
The following theorem presents the optimal $\alpha$ for the preservation performance characterized by $R_{\theta}(T)$. 
\begin{theorem}\label{th:optimal}
Under the same preconditions of Theorem \ref{th:decaying_rate}, the optimal $\alpha$ with the slowest decaying rate for $R_{\theta}(T)$ is
\begin{equation}\label{eq:optimal_alpha}
\alpha^*=\frac{1+\log 2}{\log 2}.
\end{equation}
\end{theorem}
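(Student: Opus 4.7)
The plan is to reduce the proof to a one-variable calculus exercise by exploiting the explicit rate expressions from Theorem \ref{th:decaying_rate}. First I would observe that all three regimes in \eqref{eq:limit_Re} yield $R_{\theta}(T)=\bm{O}(1/T)$: for $\alpha<1$, since $3-\alpha>2$, the $\bm{O}(1/T^2)$ term dominates inside the square root; for $\alpha=1$, the $1/T^2$ term again dominates the $1/(T^2\log T)$ term; and for $\alpha>1$ the rate is given explicitly as $\bm{O}(\sqrt{(\alpha-1)/2^\alpha}/T)$. Hence maximizing the decaying rate (i.e., making $R_\theta(T)\to 0$ as slowly as possible) reduces to maximizing the asymptotic coefficient in front of $1/T$. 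This coefficient is $\alpha$-free on $\alpha\le 1$ (fixed by $\sigma_0^2$ and the spectrum of $W$), while on $\alpha>1$ it equals $c(\alpha):=\sqrt{(\alpha-1)/2^\alpha}$, so the only regime in which $\alpha$ controls the leading constant is $\alpha>1$.

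Next I would maximize $g(\alpha):=(\alpha-1)/2^\alpha$ on $(1,\infty)$. The function is smooth, strictly positive on the interior, and vanishes at both endpoints $g(1^+)=0$ and $g(\alpha)\to 0$ as $\alpha\to\infty$, so a unique interior maximizer exists. Differentiating gives
\begin{equation*}
g'(\alpha)=\frac{1-(\alpha-1)\log 2}{2^\alpha},
\end{equation*}
which is positive for $\alpha<1+1/\log 2$ and negative beyond. The stationary point is therefore $\alpha^*=1+1/\log 2=(1+\log 2)/\log 2$, and the sign change of $g'$ confirms it is a strict global maximum of $g$ on $(1,\infty)$. Substituting back, the slowest-rate coefficient attained by a decaying-variance design is $c(\alpha^*)=\sqrt{1/(2^{\alpha^*}\log 2)}$.

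The only remaining step is to verify that $\alpha^*>1$ really beats the $\alpha\le 1$ regime, so that the maximizer is global over $[0,\infty)$. This requires tracing the hidden constants in the $\bm{O}(1/T^2)$ term of Theorem \ref{th:decaying_rate} back to $\operatorname{tr}(\mathbb{E}[XX^\mathsf{T}])$ and comparing with $c(\alpha^*)^2$; under Assumption \ref{assu:topo} and the normalization in Assumption \ref{assu:independent_noise}, the $\alpha\le 1$ coefficient is structurally smaller. I expect this cross-regime constant comparison to be the main delicate step: the calculus itself is routine, but pinning down the constants hidden in the $\bm{O}(\cdot)$ notation will require careful bookkeeping against the Euler--Maclaurin expansion underlying Lemma \ref{le:approximate}.
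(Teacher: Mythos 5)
Your proposal is correct and follows essentially the same route as the paper: the published proof likewise reduces the problem to maximizing $C_\alpha=\frac{\alpha-1}{2^\alpha}$ over $\alpha>1$ via the derivative condition $\frac{1-(\alpha-1)\log 2}{2^\alpha}=0$, after dismissing the regime $\alpha\le 1$ on the grounds that a larger $\alpha$ there only slows the decay. The cross-regime constant comparison you flag as the delicate remaining step is in fact not carried out in the paper either, which settles $\alpha\le 1$ with a one-line monotonicity claim, so your version is, if anything, the more candid about what the $\bm{O}(\cdot)$ bookkeeping leaves open.
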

\begin{proof}
This result is straightforward from Theorem \ref{th:decaying_rate}. 
When $\alpha\le1$, it is clear to see that a larger $\alpha$ will incur a slower decaying tendency. 
Thus, we turn to focus on the situation when $\alpha>1$. 
To describe the monotonicity of $\bm{O}(R_{\theta}(T))$ about $\alpha$, we define $C_\alpha=\frac{\alpha-1}{2^\alpha}$. 
It is straightforward from the derivative $\frac{\mathrm{d} C_\alpha}{\mathrm{d} \alpha}$ to obtain that $C_\alpha$ will increase first and then decrease with $\alpha$ growing. 
When $C_\alpha$ reaches the maximum value, the corresponding $\alpha$ is obtained by solving 
\begin{align}
\frac{\mathrm{d} C_\alpha}{\mathrm{d} \alpha}=  \frac{1-(\alpha-1)\log 2}{2^\alpha}=0,
\end{align}
which gives the optimal $\alpha^*$ for preserving the topology. 
\end{proof}

Based on the above analysis, we further provide the following conditions of using more general time-independent decaying noises for topology preservation and characterize the corresponding decaying rate of $R_{\theta}(T)$. 
\begin{corollary}\label{coro:sufficient_noise}
Consider that the NS is updated by \eqref{eq:global_model}, Assumption \ref{assu:topo} holds, and the variance of $\theta_t$ is $\sigma_t^2=\sigma_0 g(t)$, where $g(t)$ is a general non-increasing continuous function about $t$. 
If $g(t)$ satisfies the following calculus condition
\begin{equation}\label{eq:converge_condition}
\mathop {\lim }\limits_{T \to \infty} \frac{g^{(2T)}(2T) - g^{(2T)}(1) }{(2 \pi)^{2T}} =0,
\end{equation}
where $g^{(2T)}(2T)$ represents the $2T$-order derivative at point $2T$, 
then the decaying rate of $R_{\theta}(T)$ can be characterized by
\begin{align}\label{eq:generate_rate}
R_{\theta}(T)=\bm{O}\left( \frac{ \sqrt{ \int_1^{T\!-\!1}  g(y_2) (\int_0^{y_2-1} g(y_1)  d{y_1} ) d{y_2} }  }{  T \int_0^{T-2} g(y) d{y} } \right).
\end{align}
\end{corollary}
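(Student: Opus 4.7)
The plan is to parallel the proof of Theorem \ref{th:decaying_rate}, replacing its polynomial-specific bookkeeping with the general Euler--Maclaurin approximation that condition \eqref{eq:converge_condition} makes available. Concretely I would express both $\operatorname{tr}(\mathbb{E}[XX^\mathsf{T}])$ and $\operatorname{tr}(\mathbb{D}[\Theta X^\mathsf{T}])$ as iterated sums of $g$ weighted by traces of powers of $W$, control the matrix part via Assumption \ref{assu:topo}, and then convert these sums to the double integral in \eqref{eq:generate_rate} using a uniform sum-to-integral approximation.

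For the denominator, expanding $x_t$ through \eqref{eq:x_expansion} with $x_0=0$ and using independence of $\{\theta_m\}$ together with $\mathbb{E}[\theta_m\theta_m^\mathsf{T}] = \sigma_0 g(m)\, I$ yields
\begin{equation*}
\operatorname{tr}(\mathbb{E}[x_t x_t^\mathsf{T}]) = \sigma_0 \sum_{m=0}^{t-1} g(m)\,\operatorname{tr}\bigl(W^{t-m-1}(W^{t-m-1})^\mathsf{T}\bigr).
\end{equation*}
Since Assumption \ref{assu:topo} makes $W$ primitive and row-stochastic, $W^k$ converges to its rank-one Perron projector and $\operatorname{tr}(W^k(W^k)^\mathsf{T})$ is bounded between positive constants uniformly in $k$, so summing over $t$ reduces the denominator to a constant multiple of $\sum_{t=0}^{T-1}\sum_{m=0}^{t-1} g(m)$. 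For the numerator, the fact that $x_t$ depends only on $\theta_0,\ldots,\theta_{t-1}$ makes every $t\neq s$ cross term in $\mathbb{D}[\sum_t \theta_t^i x_t^i]$ vanish (the independent-noise specialisation of Lemma \ref{le:exp_var}), so
\begin{equation*}
\operatorname{tr}(\mathbb{D}[\Theta X^\mathsf{T}]) = \sum_{t=0}^{T-1} \sigma_0 g(t)\, \operatorname{tr}(\mathbb{E}[x_t x_t^\mathsf{T}]) = \bm{O}\Bigl(\sum_{t=0}^{T-1} g(t)\sum_{m=0}^{t-1} g(m)\Bigr).
\end{equation*}

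The last and most delicate step is passing from these double sums to the double integral in \eqref{eq:generate_rate}. This is precisely where \eqref{eq:converge_condition} enters: it is the classical statement that the Bernoulli-polynomial remainder in the Euler--Maclaurin expansion of $g$ vanishes at every order as $T\to\infty$, so that $\sum_{k=0}^{K} g(k) = \int_0^{K} g(y)\,dy + o(\int_0^K g(y)\,dy)$. I would apply this first to the inner sum $\sum_{m=0}^{y_2-1} g(m)$ uniformly in $y_2\le T$, then to the outer sum against $g(t)$ (for the numerator) and against the constant $1$ (for the denominator, which after swapping the order of summation produces the factor $T\int_0^{T-2} g(y)\,dy$). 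Dividing gives the claimed bound.

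I expect the principal obstacle to be justifying this iterated sum-to-integral passage \emph{uniformly}: Lemma \ref{le:approximate} is stated only for a single polynomial sum, whereas here the inner partial sums of $g$ must be approximated by integrals uniformly in the outer index so that composing two approximations does not destroy leading-order control when $g$ decays slowly. Verifying that \eqref{eq:converge_condition} supplies exactly this uniform control, and that the Perron-type constants arising from $\operatorname{tr}(W^k(W^k)^\mathsf{T})$ can be absorbed into the $\bm{O}(\cdot)$ simultaneously in the numerator and the denominator, constitutes the technical heart of the argument; once these are in place \eqref{eq:generate_rate} follows by direct algebra.
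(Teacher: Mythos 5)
Your proposal follows essentially the same route as the paper's proof: reduce $\operatorname{tr}(\mathbb{D}[\Theta X^\mathsf{T}])$ and $\operatorname{tr}(\mathbb{E}[XX^\mathsf{T}])$ to double sums of $g$ weighted by $\operatorname{tr}(W^k(W^k)^\mathsf{T})$ (bounded above and below by positive constants under Assumption \ref{assu:topo}), then use the Euler--Maclaurin expansion---whose correction series converges precisely because of \eqref{eq:converge_condition} together with the Bernoulli-number asymptotics---to replace the sums by the integrals in \eqref{eq:generate_rate}. The uniformity issue you flag in the iterated sum-to-integral passage is real, but the paper's own (sketch) proof glosses over it as well, so your argument is at least as complete as the original.
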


\begin{proof}
The proof is provided in Appendix \ref{pr:coro:sufficient_noise}. 
\end{proof}

Corollary \ref{coro:sufficient_noise} provides an efficient way to characterize the decaying rate of $R_{\theta}(T)$ under a broader class of time-independent noises, whose variances are non-increasing with time. 
By adding those noises, the error $\|\hat{W}(T)-W\|$ will always decay to zero asymptotically in the sense of variance-expectation ratio. 
For example, if the noise variance decays exponentially in the form of $\sigma_t^2=\sigma_0^2 q^t $ ($q\in(0,1)$), then it is easy to verify that $g(t)$ satisfies the condition \eqref{eq:converge_condition} (here the fact $\mathop {\lim }\limits_{T \to \infty} g^{(T)}(T)=\mathop {\lim }\limits_{T \to \infty}(q\log{q})^T=0$ is applied), and a smaller $q$ incurs a faster decaying rate of $R_{\theta}(T)$.

\section{Topology Preservation by Dependent Noises}\label{sec:dependent}

In this section, we extend the topology preservation analysis to the case of adding dependent decaying noises, and characterize the tradeoff between the state deviation and the topology preservation.

\subsection{Decaying Noises With One-lag Time Dependence}

Although the decaying-noise condition \eqref{eq:necessary_decaying} is necessary to constrain the state deviation while preserving the topology, adding such noises still cannot ensure the zero state deviation requirement \eqref{eq:optim-c}, which is desirable in many situations. 
Taking $\sigma_t^2=\frac{\sigma_0^2}{{(t+1)^\alpha}}$ as an example, the state deviation is given by 
\begin{align}\label{eq:state_divergence}
\mathbb{E}[\| x_t - x^*_t \|^2] = & \operatorname{tr}( \mathbb{E}[ \tilde{\theta}_{t-1} \tilde{\theta}_{t-1}^\mathsf{T})]) \nonumber \\
=& \sum_{m = 0}^{t-1} \frac{\sigma_0^2 \operatorname{tr}( W^{t-m-1} (W^{t-m-1})^\mathsf{T}) }{{(m+1)^\alpha}},
\end{align}
which will even diverge when $\alpha\le1$. 
To further overcome the divergent or bounded state deviation issue, we introduce the noise dependence between noises at different moments.


Note that the time-dependent noise vectors can always be represented as a linear combination of a group of time-independent noise vectors. 
Based on the independent noises $\{\theta_t\}_{t=0}^{T-1}$ and for simplicity, we first consider the added noises satisfying the following one-lag time dependence assumption. 
Extensions to the noises with multi-lag time dependence will be provided at the end of this section. 
\begin{assumption}\label{assu:dependent_noise}
The dependent noises $\{\xi_t\}_{t=0}^{T-1}$ are given by 
\begin{equation}\label{eq:cor_noise}
\xi_t=\theta_t - \theta_{t-1},~t\in\mathbb{N}^{+},
\end{equation}
and $\xi_0=\theta_0$, where $\{\theta_t\}_{t=0}^{T-1}$ subject to Assumption \ref{assu:independent_noise}. 
\end{assumption}


Note that the one-lag time dependence of $\{\xi_t\}_{t=0}^{T-1}$ is described by $\mathbb{E}[\xi_t \xi_{t-1}^\mathsf{T}]=-\frac{\sigma_0^2}{t^\alpha}$. 
The benefit of adopting $\xi_t$ lies in two aspects. 
First, the sum of $\{\xi_t\}_{t=0}^{T-1}$ is given by $\sum_{t = 0}^{T-1} \xi_t = \theta_{T-1}$, which implies that the noise accumulation $\sum_{t = 0}^{T-1} \xi_t$ has the same decaying rate as that of $\theta_{T-1}$. 
Second, from the storage perspectives, \eqref{eq:cor_noise} only requires each node in the network to store and update its own historical and current added noises, which relies on no global information about $W$ and is convenient to implement in practice.

To avoid notation confusions with the case of adding independent noise $\theta_t$, the state evolution when adding dependent $\xi_t$ is written as 
\begin{equation}\label{eq:state_eta}
x_{\xi,t}=Wx_{\xi,t-1}+\xi_{t-1},
\end{equation}
and the corresponding OLS-based topology estimator and the inference error are given by  
\begin{align}
&\hat{W}_{\xi}= X_{\xi}^+ X_{\xi}^\mathsf{T} (X_{\xi} X_{\xi}^\mathsf{T})^{-1}, \label{eq:topo_estimator2} \\
&\|\hat{W}_{\xi}(T)-W\|= \| \Xi X_{\xi}^\mathsf{T} (X_{\xi} X_{\xi}^\mathsf{T})^{-1} \|, \label{eq:topo_error2}
\end{align}
where $X_{\xi}\!=\![x_{\xi,0},x_{\xi,2},\cdots,x_{\xi,T\!-\!1}]$, $\Xi \!=\! [\xi_0,\xi_1,\!\cdots\!,\xi_{T\!-\!1}]$, and $X_{\xi}^+ \!=\![x_{\xi,1},x_{\xi,2},\!\cdots\!,x_{\xi,T}]$.


\subsection{State Deviation Analysis Under Dependent Noises}

In this part, we show how the correlated noise design \eqref{eq:cor_noise} will amend the state deviation issue. 

First, the state deviation by adding $\xi_t$ at time $t$ is given by 
\begin{align}\label{eq:x_xi}
x_{\xi,t} - x^*_t= &  \sum\nolimits_{m = 1}^{t-1} W^{t-m-1} (\theta_{m}-\theta_{m-1})  + W^{t-1}\theta_{0} \nonumber \\
=&  \theta_{t-1} \!+\!  \sum\nolimits_{m = 0}^{t-2} (W^{t-m-1} - W^{t-m-2})\theta_m  \nonumber \\
=& \theta_{t-1} +  \sum\nolimits_{m = 0}^{t-2} \tilde{W}_{t,m} \theta_m ~~(t\ge 2),
\end{align}
where $\tilde{W}_{t,m}= W^{t-m-1} - W^{t-m-2}$, $x_{\xi,0} - x^*_0 =0$ and $x_{\xi,1} - x^*_1=\theta_0$. 
Then, the following result reveals how the state deviation evolves with $t$. 
\begin{theorem}\label{th:state_bound}
Consider that the NS is updated by \eqref{eq:state_eta}, and Assumption \ref{assu:topo} and \ref{assu:dependent_noise} hold. 
If the decaying parameter $\alpha\!>\!0$, then the state deviation of \eqref{eq:state_eta} satisfies
\begin{align}\label{eq:zero_convergence}
\mathop {\lim }\limits_{t \to \infty} \mathbb{E}[\|x_{\xi,t} - x^*_t \|^2] = 0. 
\end{align}
Specifically, the larger $\alpha$ is, the faster $\mathbb{E}[\|x_{\xi,t} - x^*_t \|^2]$ will converge to zero. 
\end{theorem}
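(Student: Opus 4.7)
The plan is to expand the mean-square state deviation using the explicit representation in \eqref{eq:x_xi} and exploit the mutual independence and zero-mean property of $\{\theta_m\}$ from Assumption \ref{assu:independent_noise}. Substituting and taking the trace, all cross terms between distinct $\theta_{m_1}, \theta_{m_2}$ vanish, yielding
$$\mathbb{E}[\|x_{\xi,t} - x^*_t\|^2] = \frac{n\sigma_0^2}{t^\alpha} + \sigma_0^2 \sum_{m=0}^{t-2} \frac{\|\tilde{W}_{t,m}\|_F^2}{(m+1)^\alpha},$$
where $\tilde{W}_{t,m} = W^{t-m-2}(W - I)$. The first term obviously tends to zero whenever $\alpha > 0$, so the burden of the proof lies in showing that the summation also vanishes.

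The key structural observation is that $\|W^k(W - I)\|_F$ decays exponentially in $k$. Under Assumption \ref{assu:topo} and the Perron--Frobenius property \eqref{eq:lambda}, $W$ is primitive with unique dominant eigenvalue $\lambda_1 = 1$, so $W^k \to \mathbf{1}v^\mathsf{T}$ for the normalized left Perron eigenvector $v$. Since $v^\mathsf{T} W = v^\mathsf{T}$, the limit matrix is annihilated by right-multiplication with $W - I$, so $W^k(W - I) \to 0$. I would make this quantitative via the Jordan decomposition of $W$, obtaining $\|W^k(W - I)\|_F \leq C\mu^k$ for some $\mu \in (|\lambda_2|, 1)$ and constant $C > 0$, where $\mu$ is chosen slightly above $|\lambda_2|$ in order to absorb the polynomial factor that appears if $W$ is non-diagonalizable.

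With this exponential bound in hand, I would split the summation at $m = \lfloor t/2 \rfloor$ and bound each piece separately. For the small-$m$ regime ($m \leq t/2$), $\|\tilde{W}_{t,m}\|_F^2 \leq C^2 \mu^{2(t-m-2)}$ is at most $C^2 \mu^{t-2}$ after summing the geometric tail, so after multiplying by the bounded factor $(m+1)^{-\alpha}$ and summing over $O(t)$ indices the contribution is exponentially small in $t$. For the large-$m$ regime ($m > t/2$), I would factor out $(m+1)^{-\alpha} \leq (t/2)^{-\alpha}$, leaving a convergent geometric series and producing an $O(t^{-\alpha})$ contribution. Combining both pieces establishes \eqref{eq:zero_convergence}, and the monotonicity of the convergence speed in $\alpha$ is evident both from the leading term $n\sigma_0^2/t^\alpha$ and the $O(t^{-\alpha})$ bound on the residual sum.

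The main obstacle is controlling $\|W^k(W-I)\|_F$ uniformly, particularly when $W$ carries Jordan blocks of size greater than one associated with eigenvalues strictly inside the unit disk; absorbing the polynomial growth factor $k^{d-1}$ from such a block into the exponential decay by enlarging the base from $|\lambda_2|$ to a chosen $\mu < 1$ is routine but must be done explicitly. Once this spectral estimate is secured, the split-sum bookkeeping is standard, and the asymptotic dominance of the polynomial term over the exponentially small term delivers the monotonicity claim directly.
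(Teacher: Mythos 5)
Your proposal is correct, and it reaches the conclusion by a genuinely more elementary route than the paper. Both arguments share the same setup: expanding $\mathbb{E}[\|x_{\xi,t}-x_t^*\|^2]=n\sigma_{t-1}^2+\sum_{m=0}^{t-2}\|\tilde W_{t,m}\|_F^2\sigma_m^2$ and observing that $\tilde W_{t,m}=W^{t-m-2}(W-I)$ decays exponentially in $t-m$ because the Perron eigenvalue $\lambda_1=1$ is annihilated by the factor $W-I$, with the Jordan-block polynomial growth absorbed by slightly enlarging the decay base (your $\mu\in(|\lambda_2|,1)$ plays exactly the role of the paper's $1-\epsilon$). Where you diverge is in handling the convolution-type sum $\sum_m \mu^{2(t-m)}(m+1)^{-\alpha}$: the paper rewrites it as $(1-\epsilon)^{t+1}\sum_m \rho_\epsilon^m/m^\alpha$ and then invokes the Euler--Maclaurin approximation of Corollary \ref{coro:sufficient_noise} together with repeated integration by parts of $\int_1^t \rho_\epsilon^y/y^\alpha\,dy$ to show the product vanishes, whereas you simply split the sum at $m=\lfloor t/2\rfloor$, bounding the early part by an exponentially small quantity and the late part by $O(t^{-\alpha})$ times a convergent geometric series. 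Your split-sum argument is shorter, avoids the delicate choice of the truncation order $\ell_0$ in the paper's integration-by-parts step, and has the added benefit of exhibiting an explicit $O(t^{-\alpha})$ decay rate, from which the monotonicity-in-$\alpha$ claim follows immediately; the paper's route is heavier but is reused almost verbatim in the proof of Theorem \ref{th:rate_correlation}, which is presumably why the authors chose it. No gap to report.
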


\begin{proof}
The proof is provided in Appendix \ref{pr:th:state_bound}. 
\end{proof}

Theorem \ref{th:state_bound} shows that by leveraging the one-lag dependent noise design \eqref{eq:cor_noise} with $\alpha>0$, the state deviation with the ideal situation will decay to zero asymptotically in the mean square sense. 
Compared with the divergent or bounded state deviation when adding $\theta_t$, the zero asymptotic state deviation \eqref{eq:zero_convergence} achieved by added $\xi_t$ has no dependence on the increment scale of $\sum\nolimits_{t = 0}^{T-1} \sigma_t^2 $. 
We remark that this convergence improvement is brought by the intrinsic exponential convergence of the matrix $\tilde{W}_{t,m}$ and the decaying nature of $\xi_t$. 
Even if the variance is not decaying (i.e., $\alpha\neq0$), the deviation $\mathop {\lim }\limits_{t \to \infty} \mathbb{E}[\|x_{\xi,t} \!-\! x^*_t \|^2] $ is bounded, which is still better than the divergent $\mathop {\lim }\limits_{t \to \infty} \mathbb{E}[\|x \!-\! x^*_t \|^2] $ when adding $\theta_t$ with $\alpha=0$.

\subsection{Topology Preservation Analysis Under Dependent Noises}

In this part, we characterize the topology preservation performance where $\xi_t$ is added at each iteration. 

First, it should be pointed out that due to the dependence between $\xi_{t}$ and $\xi_{t-1}$, the correlation between $\Xi$ and $X_{\xi}^\mathsf{T}$ is no longer zero (i.e., $\mathbb{E}[\Xi X_{\xi}^\mathsf{T}]\neq0$). 
Under this circumstance, indirectly using $R_{\theta}(T)$ to characterize the decaying rate of $\|\hat{W}_{\xi}(T)-W\|$ is inappropriate. 
Similar to \eqref{inequality:inspiration}, applying the Chebyshev inequality on the element $\Xi^{i} X_{\xi}^{i}$, we have 
\begin{align}\label{eq:cheb2}
\Pr\left\{  \left | \Xi^{i} X_{\xi}^{i} - \mathbb{E}[\Xi^{i} X_{\xi}^{i}]  \right|< c_\sigma \sqrt{\mathbb{D}[\Xi^{i} X_{\xi}^{i}]} \right\}  \ge 1-\frac{1}{c_\sigma^2},  
\end{align}
where $1<c_\sigma<\infty$ is a preset parameter. 
Apparently, \eqref{eq:cheb2} further implies that $\Xi^{i} X_{\xi}^{i}$ locates in the following interval with high probability
\begin{equation}\label{eq:random_deviation2}
\Xi^{i} X_{\xi}^{i}\in \left (\mathbb{E}[\Xi^{i} X_{\xi}^{i}] \!-\! c_\sigma \sqrt{\mathbb{D}[\Xi^{i} X_{\xi}^{i}]}, \mathbb{E}[\Xi^{i} X_{\xi}^{i}] \!+\! c_\sigma \sqrt{\mathbb{D}[\Xi^{i} X_{\xi}^{i}]} \right) .
\end{equation}
Inspired by \eqref{eq:random_deviation2}, we resemble the deviation quantification therein and further propose the following rate metric to accommodate $R(\| \hat{W}(T) - W \|) $ under dependent noises:
\begin{itemize}
\item \textbf{The expectation-expectation ratio interval under dependent noises}:
\begin{align}\label{eq:new_xi}
\!\!R_{\xi}(T) \!=\frac{\operatorname{tr}(\mathbb{E}[\Xi X_{\xi}^\mathsf{T}]) \pm c_\sigma \sqrt{\operatorname{tr}( \mathbb{D}[ \Xi X_{\xi}^\mathsf{T}])}}{\operatorname{tr}(\mathbb{E}[X_{\xi} X_{\xi}^\mathsf{T}])}~~(x_0\!=\!0). 
\end{align}
\end{itemize}
Slightly different from $R_{\theta}(T)$ which is a single point value, $R_{\xi}(T)$ represents a bound interval that is centered at $\frac{\operatorname{tr}(\mathbb{E}[\Xi X_{\xi}^\mathsf{T}]) }{\operatorname{tr}(\mathbb{E}[X_{\xi} X_{\xi}^\mathsf{T}])}$ and bounded by $\frac{c_\sigma \sqrt{\operatorname{tr}( \mathbb{D}[ \Xi X_{\xi}^\mathsf{T}])}}{\operatorname{tr}(\mathbb{E}[X_{\xi} X_{\xi}^\mathsf{T}])}$, which is determined by the random nature of $\Xi X_{\xi}^\mathsf{T}$ when adding $\xi_t$. 
In the sequel, we will demonstrate that the increment scale of $\frac{c_\sigma \sqrt{\operatorname{tr}( \mathbb{D}[ \Xi X_{\xi}^\mathsf{T}])}}{\operatorname{tr}(\mathbb{E}[X_{\xi} X_{\xi}^\mathsf{T}])}$ will not exceed $\frac{\operatorname{tr}(\mathbb{E}[\Xi X_{\xi}^\mathsf{T}]) }{\operatorname{tr}(\mathbb{E}[X_{\xi} X_{\xi}^\mathsf{T}])}$, and $R_{\xi}(T)$ is capable of characterizing the decaying rate of $\|\hat{W}_{\xi}(T)-W\|$.

\begin{figure*}[t]
\centering
\subfigure[$\alpha\in\{0,0.5,1,1.5,2\}$]{\label{fig:independ_state_0}
\includegraphics[width=0.26\textwidth]{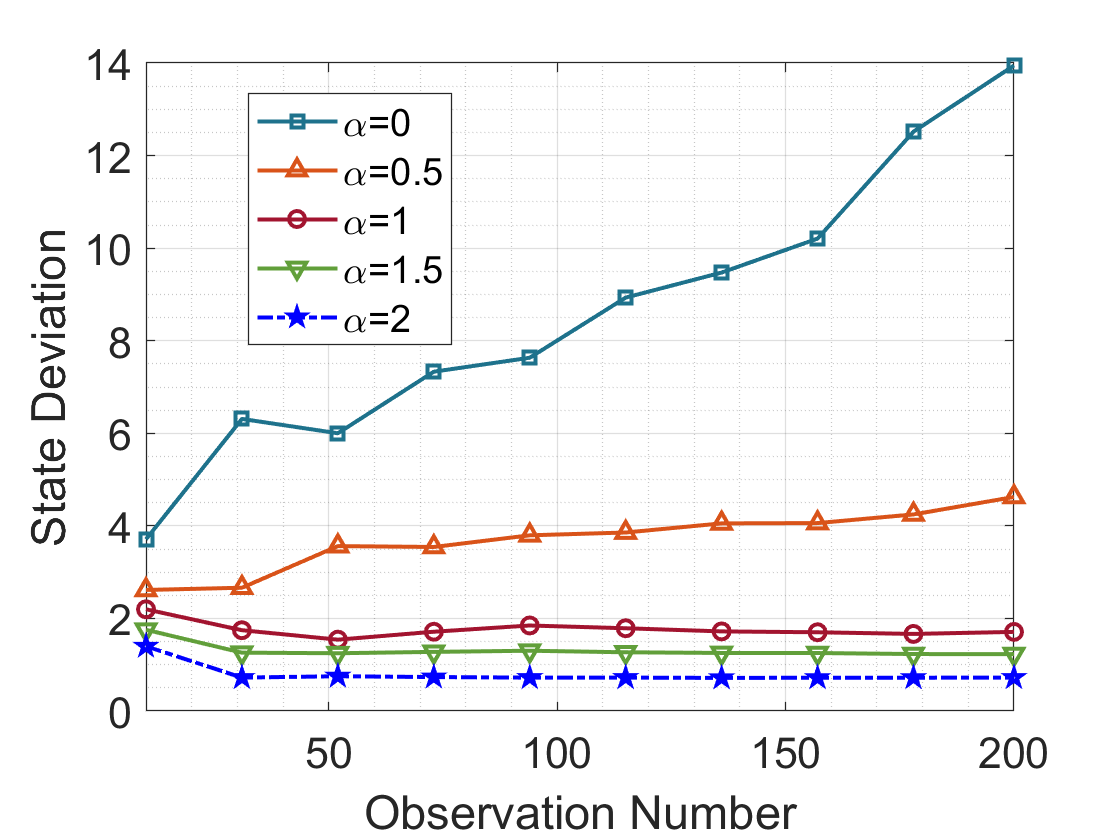}}
\hspace{-0.65cm}
\subfigure[$\alpha\in\{2.443,4,6,8,10\}$]{\label{fig:independ_state_2}
\includegraphics[width=0.26\textwidth]{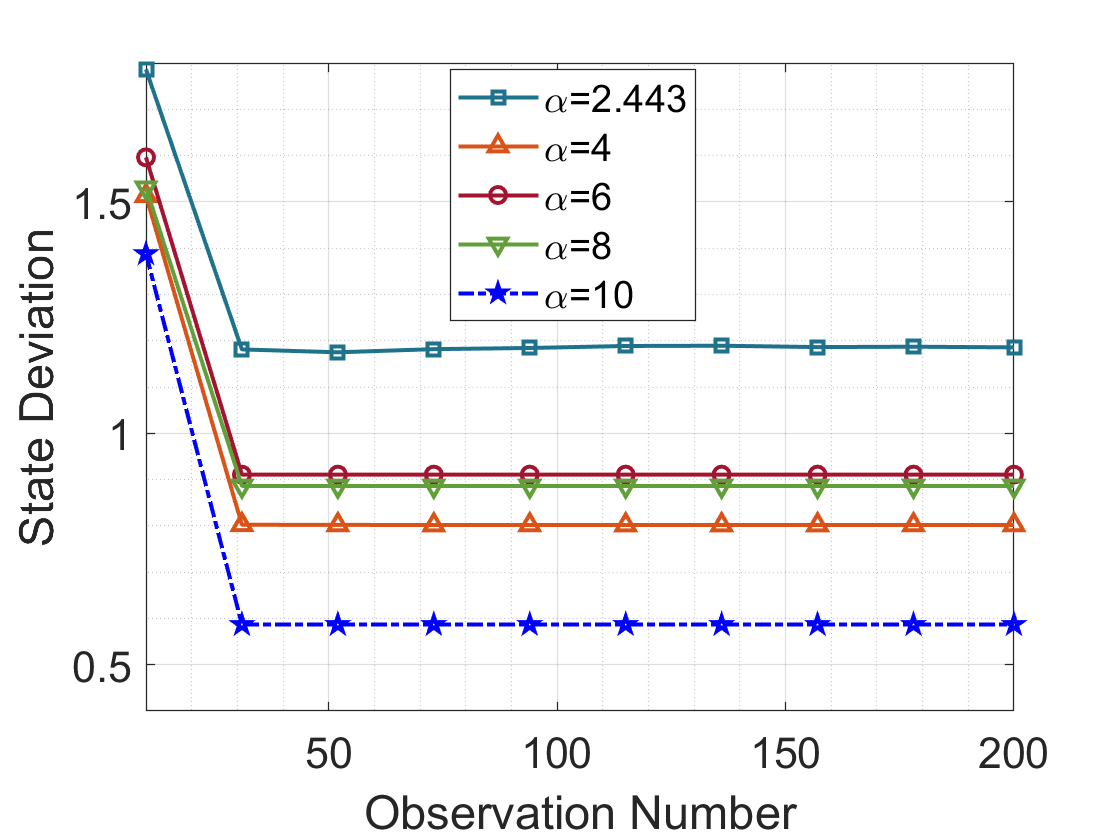}}
\hspace{-0.65cm}
\subfigure[$\alpha\in\{0,0.5,1,1.5,2\}$]{\label{fig:independ_topo_0}
\includegraphics[width=0.26\textwidth]{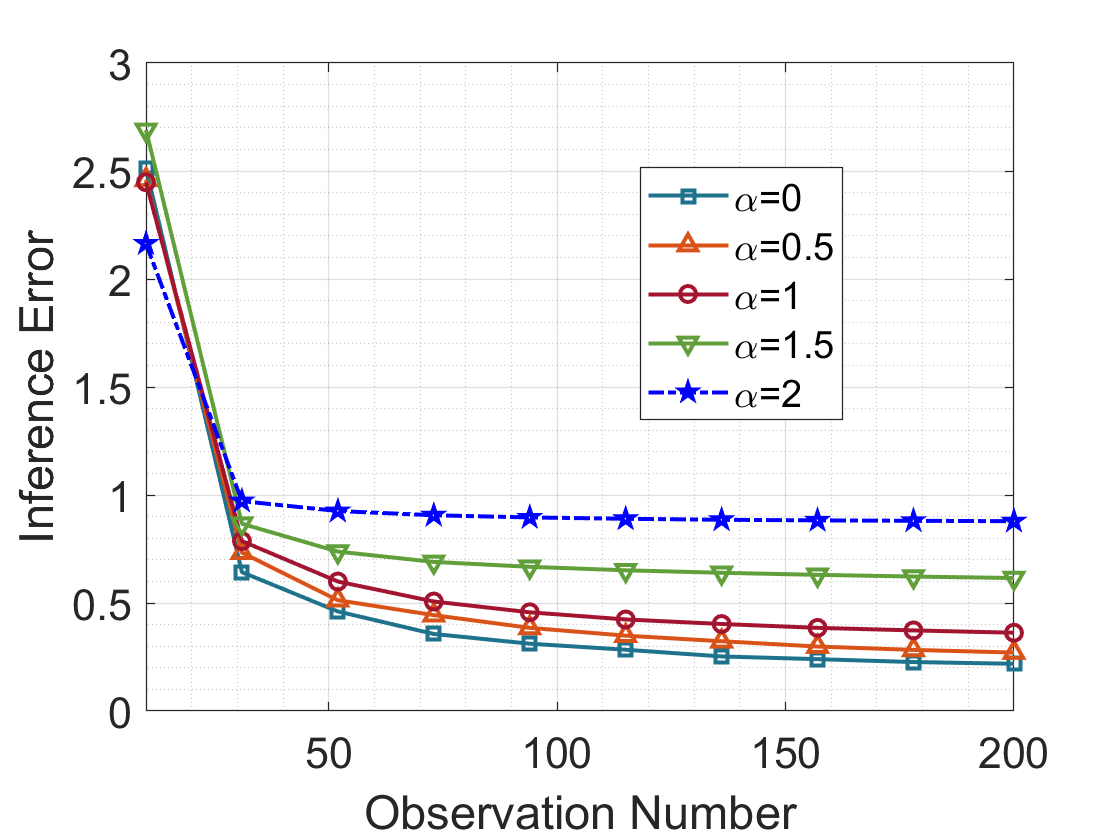}}
\hspace{-0.65cm}
\subfigure[$\alpha\in\{2.443,4,6,8,10\}$]{\label{fig:independ_topo_2}
\includegraphics[width=0.26\textwidth]{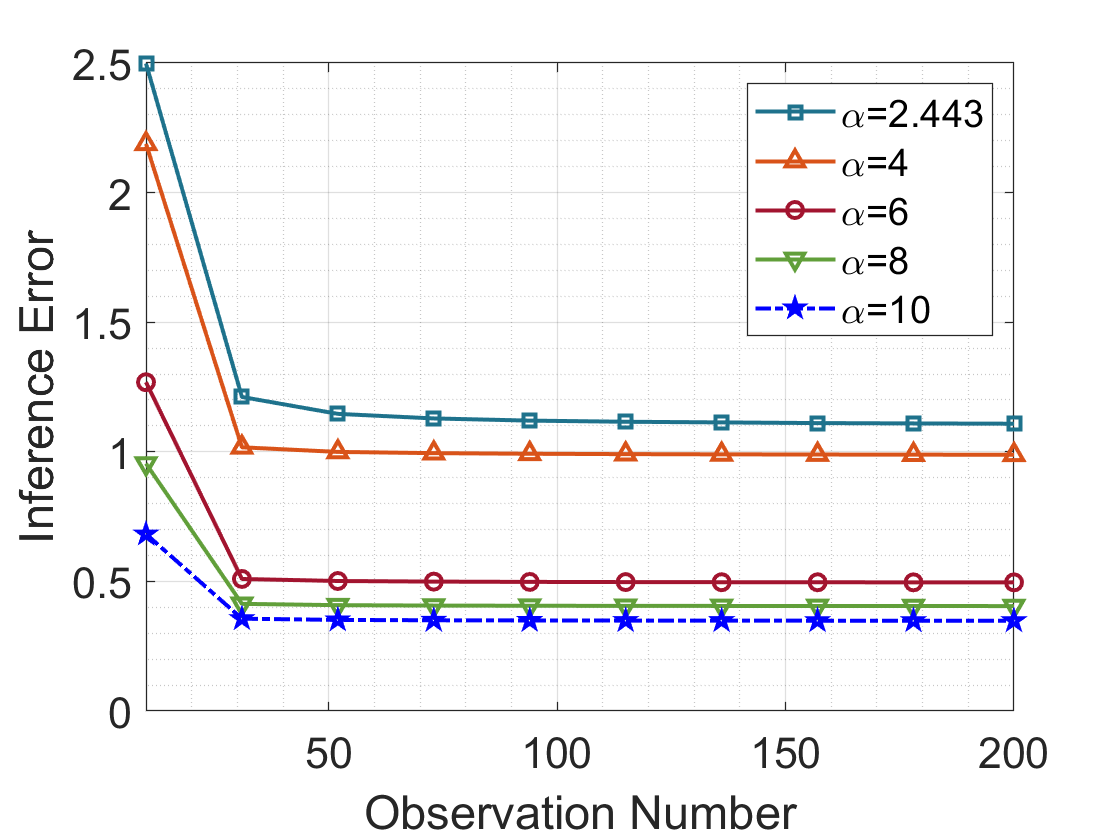}}
\vspace{-5pt}
\caption{The state deviation and topology inference error regarding $T$ when independent noises $\{\theta_t\}$ are added. 
(a)-(b): The state deviation $\|x_{t} - x^*_t \|^2$. (c)-(d): The topology inference error $\|\hat{W}(T)-W\| $. }
\label{fig:independent}
\vspace*{-13pt}
\end{figure*}

Before analyzing the decaying rate of $R_{\xi}(T)$, we observe that unlike the case of $\operatorname{tr}(\mathbb{D}[\Theta X^\mathsf{T}]) $, 
the correlation between $\xi_{t_1} x_{\xi,t_1}^\mathsf{T}$ and $\xi_{t_2} x_{\xi,t_2}^\mathsf{T}$ ($t_1\neq t_2$) makes it hard to obtain $\mathbb{D}[\Xi X_{\xi}^\mathsf{T}]$ directly. 
To overcome this analytical issue, we need the following equivalent form of $\operatorname{tr}( \mathbb{D}[ \Xi X_{\xi}^\mathsf{T}])$. 
\begin{lemma}\label{le:equivalent_trans}
Given $X_{\xi}$ and $\Xi$ defined in \eqref{eq:topo_error2}, we have 
\begin{equation}\label{eq:equivalent_transform}
\operatorname{tr}( \mathbb{D}[ \Xi X_{\xi}^\mathsf{T}])= \mathbb{D}[ \operatorname{tr}( \Xi X_{\xi}^\mathsf{T})]. 
\end{equation}
\end{lemma}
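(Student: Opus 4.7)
My plan is to show that the identity reduces to pairwise uncorrelatedness of the diagonal entries of $M:=\Xi X_\xi^\mathsf{T}$, and then to verify this uncorrelatedness by a structural decomposition followed by a combinatorial moment argument.

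First I would derive the elementary algebraic identity
\begin{equation*}
\mathbb{D}[\operatorname{tr}(M)] - \operatorname{tr}(\mathbb{D}[M]) = \sum_{i\neq j}\operatorname{Cov}(M_{ii},\, M_{jj}),
\end{equation*}
which follows directly from the definitions of element-wise $\mathbb{D}[\cdot]$ and the expansion of the variance of a sum. The task thus reduces to proving $\operatorname{Cov}(M_{ii},M_{jj})=0$ for every $i\neq j$.

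Next I would expand, via $\xi_s^k$ and $x_{\xi,t}^i=\sum_{s<t}\sum_k (W^{t-s-1})_{ik}\xi_s^k$,
\begin{equation*}
M_{ii} \;=\; \sum_{t}\sum_{s<t}\sum_{k}(W^{t-s-1})_{ik}\,\xi_t^i\,\xi_s^k,
\end{equation*}
and split $M_{ii}=\Phi_{i}+\Psi_{i}$, where $\Phi_{i}$ collects the $k=i$ terms (depending only on the single-node sequence $\{\xi_\cdot^i\}$) and $\Psi_{i}$ collects the $k\neq i$ cross-node terms. Under Assumption~\ref{assu:independent_noise} the $n$ node sequences $\{\xi_\cdot^i\}_{i=1}^n$ are mutually independent with zero mean. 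Splitting $\operatorname{Cov}(M_{ii},M_{jj})$ into four bilinear pieces, $\operatorname{Cov}(\Phi_i,\Phi_j)=0$ by independence of disjoint sequences, while $\operatorname{Cov}(\Phi_i,\Psi_j)=\operatorname{Cov}(\Psi_i,\Phi_j)=0$ because every monomial in $\Psi_{\cdot}$ contains an isolated zero-mean factor at a node different from the one $\Phi_{\cdot}$ depends on.

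The core step is $\operatorname{Cov}(\Psi_i,\Psi_j)=\mathbb{E}[\Psi_i\Psi_j]$. Expanding the product gives a sum over $(t_1,s_1,k,t_2,s_2,k')$ with $s_1<t_1,\, s_2<t_2,\, k\neq i,\, k'\neq j$. Factoring the expectation by node-index groups, any node outside $\{i,j\}$ appearing in the product contributes a lone zero-mean factor and kills the term; the only surviving configuration is $k=j,\,k'=i$, which yields the factor $\mathbb{E}[\xi_{t_1}^i\xi_{s_2}^i]\,\mathbb{E}[\xi_{s_1}^j\xi_{t_2}^j]$. Using the one-lag rule $\mathbb{E}[\xi_a\xi_b^\mathsf{T}]=0$ for $|a-b|>1$ together with the range constraints $s_1<t_1$ and $s_2<t_2$, I would show that both pair expectations are simultaneously nonzero only when $t_1=t_2$ and $s_1=s_2=t_1-1$; in that unique surviving configuration the matrix coefficient collapses to $(W^0)_{ij}(W^0)_{ji}=\delta_{ij}\delta_{ji}=0$ since $i\neq j$, so every term in the sum vanishes.

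The hardest part will be the last step: carefully enumerating the node-index factorization patterns of the fourth-order $\xi$-moment and tracking how the summation constraints $s_1<t_1$ and $s_2<t_2$ interact with the one-lag decorrelation to force the unique surviving case onto an off-diagonal entry of $W^0=I$, which is precisely what kills it.
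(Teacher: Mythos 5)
Your proposal is correct and starts from the same reduction as the paper: both proofs note that $\operatorname{tr}(\mathbb{D}[M])=\sum_i\mathbb{D}[M_{ii}]$ and $\mathbb{D}[\operatorname{tr}(M)]=\mathbb{D}[\sum_i M_{ii}]$ differ exactly by the cross-covariances $\sum_{i\neq j}\operatorname{Cov}(M_{ii},M_{jj})$, which must then be shown to vanish. Where you differ is in how that vanishing is established. The paper asserts in one line that $\operatorname{Cov}[\sum_t\xi_t^i x_{\xi,t}^i,\ \sum_t\xi_t^j x_{\xi,t}^j]=0$ ``follows from $\mathbb{E}[\xi_{t_1}^i\xi_{t_2}^j]=0$,'' which glosses over the fact that $x_{\xi,t}^i$ mixes the noises of \emph{all} nodes, so the two diagonal entries are not functions of disjoint noise sequences. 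Your fourth-moment enumeration closes that gap: the only configuration not annihilated by an isolated zero-mean factor is $k=j$, $k'=i$; the one-lag decorrelation combined with $s_1<t_1$ and $s_2<t_2$ then forces $t_1=t_2$ and $s_1=s_2=t_1-1$; and the surviving term carries the coefficient $(W^0)_{ij}(W^0)_{ji}=0$ for $i\neq j$. That last observation is a genuine addition — the identity does not follow from cross-node independence alone, but also relies on the fact that the only contribution of $x_{\xi,t}$ that can pair with $\xi_t$ under one-lag dependence enters through $W^0=I$, whose off-diagonal entries vanish. So your argument is more rigorous than the paper's at the decisive step. The only minor caveat is that for $x_0\neq 0$ the deterministic part $W^tx_0$ adds linear terms to $M_{ii}$; these are disposed of by the same isolated-factor argument, and in any case the lemma is invoked with $x_0=0$.
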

\begin{proof}
First, considering the LHS of \eqref{eq:equivalent_transform}, we have 
\begin{align}
\operatorname{tr}( \mathbb{D}[ \Xi X_{\xi}^\mathsf{T}]) =   \sum\limits_{i = 1}^{n} \mathbb{D}\left [  \sum\limits_{t = 0}^{T-1} \xi_t^{i} x_{\xi,t}^{i} \right] = \sum\limits_{i = 1}^{n} \mathbb{D} [ r(i,T)], 
\end{align}
where $r(i,T)=\sum\limits_{t = 0}^{T-1} \xi_t^{i} x_{\xi,t}^{i}$. 
As for the RHS of \eqref{eq:equivalent_transform}, it can be expanded as 
\begin{align}
\mathbb{D}[ \operatorname{tr}( \Xi X_{\xi}^\mathsf{T})]= \mathbb{D} \left[ \sum\limits_{i = 1}^{n} \sum\limits_{t = 0}^{T-1} \xi_t^{i} x_{\xi,t}^{i} \right]= \mathbb{D} \left[ \sum\limits_{i = 1}^{n} r(i,T) \right]. 
\end{align}
Although $r(i,T)$ itself contains correlated terms that are determined by $t$, the added noise $\xi_t^i$ by each node is independent of each other. 
It follows from $\mathbb{E}[\xi_{t_1}^i \xi_{t_2}^j]=0~(i\neq j)$ that 
\begin{align}\label{eq:zero_vaaa}
\!\!\operatorname{Cov}[ r(i,T), r(j,T)] \!=\! \operatorname{Cov} \left[ \sum\limits_{t = 0}^{T-1} \! \xi_t^{i} x_{\xi,t}^{i}, \sum\limits_{t = 0}^{T-1} \! \xi_t^{j} x_{\xi,t}^{j} \right] \! \!=\! 0.
\end{align}
Based on the sum-variance formula \eqref{formula:sum_variance} and \eqref{eq:zero_vaaa}, we have
\begin{align}
\mathbb{D} \left[ \sum\limits_{i = 1}^{n} r(i,T) \right]=\sum\limits_{i = 1}^{n} \mathbb{D} [ r(i,T)]=\operatorname{tr}( \mathbb{D}[ \Xi X_{\xi}^\mathsf{T}]),
\end{align}
which completes the proof. 
\end{proof}

Note that the key to the equivalence \eqref{eq:equivalent_transform} lies in that each node adds the noises independently, and whether the added noises are isotropic will not affect this equivalence. 
If the added noises have correlations between the nodes, \eqref{eq:equivalent_transform} will not necessarily hold and more fussy analysis is required to obtain $\mathbb{D}[ \Xi X_{\xi}^\mathsf{T}]$, which is left for future investigation.

Next, based on Lemma \ref{le:equivalent_trans}, the rate characterization of $R_{\xi}(T)$ is given in the following result. 

\begin{theorem}\label{th:rate_correlation}
Consider that the NS is updated by \eqref{eq:state_eta}, and Assumption \ref{assu:topo} and \ref{assu:dependent_noise} hold. 
The decaying rate of $R_{\xi}(T)$ is
\begin{align}\label{eq:rates_cor}
\!\! R_{\xi}(T)\!= \!\left\{\begin{aligned}
& \bm{O}(C)\! \pm c_\sigma \!\bm{O}(\frac{1}{\sqrt{T^{1-\alpha}}}), && \text{if}~\alpha\!<\!1 \\
& \bm{O}(C)\! \pm \! c_\sigma \bm{O}(\frac{1}{\sqrt{\log T}} ), &&\text{if}~\alpha\!= \!1 \\
&\bm{O}(C) \! \pm \! c_\sigma  \bm{O}(  \sqrt{\frac{\alpha-1}{2^\alpha(1-\frac{1}{T^{\alpha-1}})}} ), &&\text{if}~\alpha\!>\!1 \\
\end{aligned}\right. ,
\end{align}
where $C>0$ is a bounded constant irrelevant to $T$. 
\end{theorem}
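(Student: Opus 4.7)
The plan is to reduce the three traces in \eqref{eq:new_xi} to quadratic forms in the underlying independent noise vector $\bm{\theta}_{0:T-1}$, use Lemmas \ref{le:exp_var} and \ref{le:equivalent_trans} to compute their moments, and then apply the Euler--Maclaurin estimate of $h_\alpha(T) = \sum_{t=0}^{T-1}\sigma_0^2/(t+1)^\alpha$ from Lemma \ref{le:approximate}. The essential new feature compared with Theorem \ref{th:decaying_rate} is that the one-lag correlation $\mathbb{E}[\xi_t \xi_{t-1}^\mathsf{T}] = -\sigma_{t-1}^2 I$ makes $\mathbb{E}[\Xi X_\xi^\mathsf{T}]$ non-zero, which is exactly what produces the non-vanishing $\bm{O}(C)$ centering.

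First I would handle the centering. Substituting \eqref{eq:x_xi} and $\xi_t = \theta_t - \theta_{t-1}$ into $\mathbb{E}[\xi_t x_{\xi,t}^\mathsf{T}]$ and using $\mathbb{E}[\theta_m\theta_{m'}^\mathsf{T}] = \sigma_m^2 I \delta_{m,m'}$, the only surviving cross term is $-\mathbb{E}[\theta_{t-1}\theta_{t-1}^\mathsf{T}] = -\sigma_{t-1}^2 I$, so $\operatorname{tr}(\mathbb{E}[\Xi X_\xi^\mathsf{T}]) = -n\,h_\alpha(T-1)$. For the denominator, $\mathbb{E}[\|x_{\xi,t}\|^2] = n\sigma_{t-1}^2 + \sum_{m=0}^{t-2}\sigma_m^2 \|\tilde{W}_{t,m}\|_F^2$. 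Perron--Frobenius together with \eqref{eq:lambda} provides the decomposition $W = \mathbf{1}\pi^\mathsf{T} + M$ with $\rho(M) < 1$, so $\tilde{W}_{t,m} = M^{t-m-2}(M-I)$ has Frobenius norm geometrically decaying in $t-m$, and the inner convolution has the same order as its largest term, namely $\bm{O}(\sigma_{t-1}^2)$. Hence $\operatorname{tr}(\mathbb{E}[X_\xi X_\xi^\mathsf{T}]) = \bm{O}(h_\alpha(T-1))$ with a strictly positive multiplicative constant independent of $T$, and the ratio of the two traces tends to a bounded constant $C$ for every $\alpha > 0$, giving the centering $\bm{O}(C)$ in \eqref{eq:rates_cor}.

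Next I would handle the half-width $c_\sigma\sqrt{\operatorname{tr}(\mathbb{D}[\Xi X_\xi^\mathsf{T}])}/\operatorname{tr}(\mathbb{E}[X_\xi X_\xi^\mathsf{T}])$. Lemma \ref{le:equivalent_trans} converts the trace of a matrix variance into the scalar variance $\mathbb{D}[\operatorname{tr}(\Xi X_\xi^\mathsf{T})]$, and writing $\operatorname{tr}(\Xi X_\xi^\mathsf{T}) = \bm{\theta}_{0:T-1}^\mathsf{T} Q\,\bm{\theta}_{0:T-1}$ for the block matrix $Q$ determined by \eqref{eq:x_xi} and $\xi_t = \theta_t-\theta_{t-1}$, Lemma \ref{le:exp_var} gives the variance as the diagonal sum $\sum_\ell Q_{\ell\ell}^2(\mathbb{E}[\bm{\theta}_\ell^4]-\sigma_\ell^4)$ plus the off-diagonal sum $\sum_{\ell_1\ne\ell_2} Q_{\ell_1\ell_2}^2\,\sigma_{\ell_1}^2\sigma_{\ell_2}^2$. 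Each block of $Q$ inherits the geometric decay of $\tilde{W}_{t,m}$ in the time-lag $|m_1-m_2|$, so both sums collapse into a discrete convolution bounded by $\bm{O}(h_\alpha(T))$ up to constants that absorb the spectral gap and the fourth-moment factor. Dividing by $\operatorname{tr}(\mathbb{E}[X_\xi X_\xi^\mathsf{T}])^2 = \bm{O}(h_\alpha(T)^2)$ and taking a square root yields a half-width of order $1/\sqrt{h_\alpha(T)}$, and Lemma \ref{le:approximate} then supplies $h_\alpha(T) = \bm{O}(T^{1-\alpha})$ for $\alpha<1$, $\bm{O}(\log T)$ for $\alpha=1$, and the convergent-series residue $h_\alpha(T) = \bm{O}\bigl(2^\alpha(1-T^{1-\alpha})/(\alpha-1)\bigr)$ for $\alpha>1$, producing precisely the three cases in \eqref{eq:rates_cor}.

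The main obstacle is controlling the off-diagonal block structure of $Q$. Because every summand $\xi_t^\mathsf{T} x_{\xi,t}$ couples the noise at times $t$, $t-1$, and all earlier indices through $\tilde{W}_{t,m}$, the block $Q_{m_1 m_2}$ is itself a sum over $t$ of products of $W$-powers together with $\pm I$ telescoping contributions arising from $\xi_t = \theta_t-\theta_{t-1}$. A naive triangle-inequality bound would lose the cancellation that keeps $\|Q_{m_1 m_2}\|_F$ geometric in $|m_1-m_2|$ and would, in particular, blow up the half-width in the $\alpha>1$ regime where a bounded but non-shrinking interval is exactly what is required. Securing the correct bound demands regrouping the $t$-summation so that the $\theta_t-\theta_{t-1}$ telescoping is performed before the spectral-gap estimate $\|M^k\| = \bm{O}(\rho^k)$ is invoked; once this regrouping is in place, the remaining matching of the convolution sums to the three regimes of Lemma \ref{le:approximate} is routine.
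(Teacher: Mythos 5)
Your proposal follows essentially the same route as the paper's proof: the same split into a non-vanishing centering term (driven by the one-lag correlation $\mathbb{E}[\xi_t x_{\xi,t}^\mathsf{T}]=-\sigma_{t-1}^2 I+\cdots$) and a half-width of order $1/\sqrt{h_\alpha(T)}$, the same reduction of all three traces to quadratic forms in $\bm{\theta}_{0:T-1}$ handled via Lemmas \ref{le:exp_var} and \ref{le:equivalent_trans}, and the same use of the spectral gap to make the off-diagonal blocks geometrically small in the time lag (the paper performs your ``telescope before the spectral estimate'' regrouping explicitly by writing $Q=H^\mathsf{T}\tilde{W}H$ with blocks $g(W,t)=2W^{t-1}-W^{t}-W^{t-2}$). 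One bookkeeping slip worth noting: the $2^\alpha$ in the $\alpha>1$ case does not belong to $h_\alpha(T)$ (whose Euler--Maclaurin value is $\bm{O}((1-T^{1-\alpha})/(\alpha-1))$ with no such factor) but arises from the paper's bound $\bm{O}(2^{-\alpha})$ on each inner off-diagonal variance sum $S_{2,t_1}$; this misattribution does not change the stated $\bm{O}$ rates.
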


\begin{proof}
The proof is provided in Appendix \ref{pr:th:rate_correlation}. 
\end{proof}

It is straightforward from Theorem \ref{th:rate_correlation} to obtain that by using \eqref{eq:cor_noise}, $\|\hat{W}_{\xi}(T)-W\|$ will not decay to zero but a constant regardless of $T$ in the sense of the expectation-expectation ratio interval. 
Specifically, when $\alpha$ is determined, the decaying rate of the inference error approaching the constant is generally slower than that of using $\theta_t$. 
We observe that this is because the accumulation of $\xi_t$ is constrained by the exponential convergence of the NS itself. 

\begin{remark}
Notice that since $R_{\xi}(T)$ represents a ratio interval, there exists no explicit monotonicity in $\bm{O}(R_{\xi}(T)) $ as $T$ grows when adding $\xi_t$, which will be verified in the simulations. 
Another notable point is that the setting of $\alpha$ also will not determine the monotonicity of $\bm{O}(R_{\xi}(T)) $ because $\|\hat{W}_{\xi}(T)-W\|$ locates in an interval in the sense of the expectation-expectation ratio. 
Hence, there is no optimal $\alpha$ for the maximum $\|\hat{W}_{\xi}(T)-W\|$ or the slowest rate of $\bm{O}(\|\hat{W}_{\xi}(T)-W\|)$ to preserve the topology. 
However, this interval bound will decay to zero as $T\to\infty$ for $\alpha\le1$, and also decay as $\alpha\to\infty$. 
\end{remark}

Finally, we generalize the conclusions of Theorem \ref{th:state_bound} and \ref{th:rate_correlation} to a broader class of decaying noises with $k$-lag ($k\in\mathbb{N}^+$) time dependence. 
\begin{corollary}\label{coro:multi}
Consider that the NS is updated by \eqref{eq:state_eta}, and Assumption \ref{assu:topo} and \ref{assu:independent_noise} hold. 
The statements \eqref{eq:zero_convergence} and \eqref{eq:rates_cor} still hold if $\{\xi_t\}_{t=0}^{T-1}$ satisfy the following $k$-lag time dependence, 
\begin{align}\label{eq:new_xi2}
\xi_t= \left\{\begin{aligned}
&\sum\nolimits_{\ell = 1}^{t+1} p_{\ell} \theta_{k-\ell}, &&\text{if}~t\le k \\
&\sum\nolimits_{\ell = 1}^{k+1} p_{\ell} \theta_{t-\ell}, &&\text{if}~t> k
\end{aligned}
\right.,
\end{align}
where $\{p_\ell\}_{\ell=1}^{k}$ satisfy 
\begin{align}\label{eq:condition_p}
\sum\nolimits_{\ell = 1}^{k} p_\ell =0,~|p_\ell|\le\bar{p}<\infty, \forall \ell\in\{1,\cdots,k\}.
\end{align}
\end{corollary}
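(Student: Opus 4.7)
The plan is to reduce the $k$-lag setting to the one-lag case analyzed in Theorem \ref{th:state_bound} and Theorem \ref{th:rate_correlation} by exploiting the zero-sum coefficient condition in \eqref{eq:condition_p}. The key algebraic observation is that $\sum_\ell p_\ell=0$ plays exactly the role that the coefficients $(+1,-1)$ played in $\xi_t=\theta_t-\theta_{t-1}$: whenever the coefficient of a given $\theta_j$ in an expansion takes the form $\sum_\ell p_\ell W^{a_\ell}$, we may rewrite it as
$\sum_\ell p_\ell W^{a_\ell}=\sum_\ell p_\ell(W^{a_\ell}-W^{a_*})$
for any reference power $a_*$. Under Assumption~\ref{assu:topo}, the matrix $W$ is primitive with only one eigenvalue on the unit circle (see \eqref{eq:lambda}), so the differences $W^{a_\ell}-W^{a_*}$ converge exponentially to the rank-one limit and, crucially, cancel the stationary component. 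This is precisely the mechanism that produced the exponential factor $\tilde W_{t,m}=W^{t-m-1}-W^{t-m-2}$ in \eqref{eq:x_xi}.

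Carrying this out for the state-deviation claim \eqref{eq:zero_convergence}, I first substitute \eqref{eq:new_xi2} into $x_{\xi,t}-x_t^*=\sum_{m=0}^{t-1}W^{t-m-1}\xi_m$, interchange the summations, and collect terms according to the underlying independent noise $\theta_j$. For indices $j$ that are at least $k$ away from the boundaries, the resulting coefficient matrix is of the form $\sum_{\ell=1}^{k+1} p_\ell W^{t-j-\ell}$, which by the zero-sum condition can be written as $\sum_\ell p_\ell(W^{t-j-\ell}-W^{t-j-k-1})$ and hence has spectral norm bounded by $\bar p\cdot k\cdot C\rho_*^{t-j-k-1}$ for some $\rho_*<1$ provided by the Perron-Frobenius gap. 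Summing the squared Frobenius norms of these coefficients (weighted by the variances $\sigma_j^2$) yields a geometric series times $\sigma_0^2/(j+1)^\alpha$, which converges to $0$ as $t\to\infty$ whenever $\alpha>0$, reproducing \eqref{eq:zero_convergence}. The boundary terms $t\le k$ and $j<k$ contribute only finitely many summands, each bounded uniformly by $\bar p$ and $\|W\|$, and thus do not spoil the conclusion.

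For the topology-preservation claim \eqref{eq:rates_cor}, I follow the same grouping idea applied to $\mathbb{E}[\Xi X_\xi^\mathsf{T}]$ and, via Lemma~\ref{le:equivalent_trans}, to $\mathbb{D}[\operatorname{tr}(\Xi X_\xi^\mathsf{T})]$. After rewriting each entry as a polynomial in the independent $\theta_j$'s, the expectation trace inherits a nonvanishing bounded constant that comes from the auto-correlation terms $\mathbb{E}[\xi_t\xi_{t-s}^\mathsf{T}]$ with $s\le k$ (these are nonzero exactly when the index lags overlap in \eqref{eq:new_xi2}), while the variance trace decomposes into quadratic forms in $\{\theta_j\}$ whose coefficients are again controlled by the zero-sum combinations $\sum_\ell p_\ell W^{\cdot}$. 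The denominator $\operatorname{tr}(\mathbb{E}[X_\xi X_\xi^\mathsf{T}])$ is handled exactly as in the proof of Theorem~\ref{th:rate_correlation}, since the partial-sum structure of $\sum_{m\le t}\xi_m$ telescopes to a window of at most $k$ noise vectors (again by $\sum_\ell p_\ell=0$). Plugging in the Euler-Maclaurin estimates of Lemma~\ref{le:approximate} gives exactly the three regimes in \eqref{eq:rates_cor}, with constants that depend on $k$ and $\bar p$ but not on $T$.

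The main technical obstacle is bookkeeping: expanding $\Xi X_\xi^\mathsf{T}$ produces a double summation over time indices, and substituting \eqref{eq:new_xi2} turns each term into a $(k+1)^2$-fold combination of $\theta$-cross terms, some of which straddle the switching point $t=k$ in \eqref{eq:new_xi2}. Organising these terms so that the zero-sum cancellation is applied in the correct index variable, while keeping the boundary corrections bounded, is where the proof will require the most care. Once a clean block decomposition of the coefficient tensor is in place, the rate estimates reduce to the same integral comparisons used in Appendix~\ref{pr:th:rate_correlation}.
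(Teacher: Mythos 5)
Your proposal is correct and follows essentially the same route as the paper's own proof: both reduce the $k$-lag case to the one-lag analysis by using the zero-sum condition $\sum_\ell p_\ell=0$ to show that the coefficient matrices $\sum_\ell p_\ell W^{t-j-\ell}$ decay exponentially (the Perron rank-one components cancelling), handle the finitely many boundary terms by uniform boundedness, and then rerun the quadratic-form bookkeeping of Theorems \ref{th:state_bound} and \ref{th:rate_correlation}. The only cosmetic difference is that the paper packages the bookkeeping via an explicit block matrix $H_k$ with $\bm{\xi}_{0:T-1}=H_k\bm{\theta}_{0:T-1}$ rather than your entry-by-entry regrouping, which changes nothing substantive.
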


\begin{proof}
The proof is provided in Appendix \ref{pr:coro:multi}. 
\end{proof}




\section{Simulations}\label{sec:simulation}
In this section, extensive numerical simulations are provided to verify the theoretical results. 

For the simulation setting, we randomly generate a directed graph $\mathcal{G}$ with $7$ nodes, and the topology matrix $W$ is designed by the Laplacian rule \eqref{eq:topo-rule}. 
For simplicity without losing generality, the initial states of all nodes are also randomly generated, and the added noises $\{\theta_t\}_{t=0}^{T-1}$ are independent zero-mean Gaussian noises, whose variances are given by $\sigma_t^2=\frac{1}{(t+1)^\alpha}$. 
All the plots in each figure are drawn by repeating $20$ times of the iteration process \eqref{eq:global_model} or \eqref{eq:state_eta} from a common initial state and calculating their averages.

First, we examine the topology preservation performance by adding $\theta_t$ independently at each iteration. 
To explicitly present the preservation performance for the optimal decaying parameter $\alpha^*=\frac{1+\log 2}{\log 2}\approx 2.443$ in this case, we consider two groups of the variance decaying parameters, $\alpha\in\{0,0.5,1,1.5,2\}$ and $\alpha\in\{2.443,4,6,8,10\}$.
Fig.~\ref{fig:independent} shows the deviation between the actual and the idea states and the topology inference error, letting the variance decaying parameter range from $0$ to $10$. 
Concerning the state deviation $\|x_{t} - x^*_t \|^2$, it is easy to find from Fig.~\ref{fig:independ_state_0}-\ref{fig:independ_state_2} that $\|x_{t} - x^*_t \|^2$ will gradually grow with $t$ increasing when $\alpha\le1$, 
while remaining bounded when $\alpha>1$, corresponding to the conclusion \eqref{eq:state_divergence}. 
As the inference error $\|\hat{W}(T)-W\| $, Fig.~\ref{fig:independ_topo_0} compares the error curves when the variance decaying parameter $\alpha\in\{0,0.5,1,1.5,2\}$, and Fig.~\ref{fig:independ_topo_2} compares that when $\alpha\in\{2.443,4,6,8,10\}$
It is clear that $\|\hat{W}(T)-W\|$ possesses a slower decaying rate for a larger $\alpha$ in $[0,2.443]$, while a faster decaying rate for a larger $\alpha$ in $[2.443,\infty)$, which verifies the conclusions of Theorem \ref{th:decaying_rate}. 
It needs to be pointed out that in Fig.~\ref{fig:independ_topo_2}, the error curves generally remain stable and do not exhibit the decreasing trend to zero as Theorem \ref{th:decaying_rate} reveals. 
We observe that this consequence is incurred by the limitation of the calculating precision of the computer, and it will treat the added noises as zero when $T$ grows for a larger $\alpha$, thus making the curves almost unchanged. 
For example, when $\alpha=4$ and $T=49$, the variance of added noises is $\frac{1}{50^4}=1.6\times 10^{-7}$, which is extremely small. 

Next, Fig.~\ref{fig:alpha_curve} depicts the inference error of $\hat{W}$ regarding the decaying parameter $\alpha$, where 5 groups of curves are drawn to indicate the performance under different observation number $T$. 
It is easy to see from an arbitrary curve with a fixed $T$ that, the inference error will increase first and then decrease with $\alpha$ growing. 
Specifically, the error reaches the maximum value near $\alpha^*\approx 2.443$, which verifies the conclusions of Theorem \ref{th:optimal}. 
Here we would like to observe that since $R_{\theta}(T)$ is computed by the variance and expectation quantities, the optimality of $\alpha^*$ for $R_{\theta}(T)$ does not hold for $\| \hat{W}(T)\!-\!W\|$ almost surely but in high probability. 
Therefore, it is possible that the maximum value of $\| \hat{W}(T)\!-\!W\|$ in practice does not correspond to $\alpha^*$ exactly. 
Additionally, our extensive tests show that the optimal $\alpha$ for $\|\hat{W}(T)-W\|$ generally locates in the interval $[2,3]$, which is sufficient to reflect the effectiveness of $R_{\theta}(T)$.

\begin{figure}[t]
\centering
\includegraphics[width=0.4\textwidth]{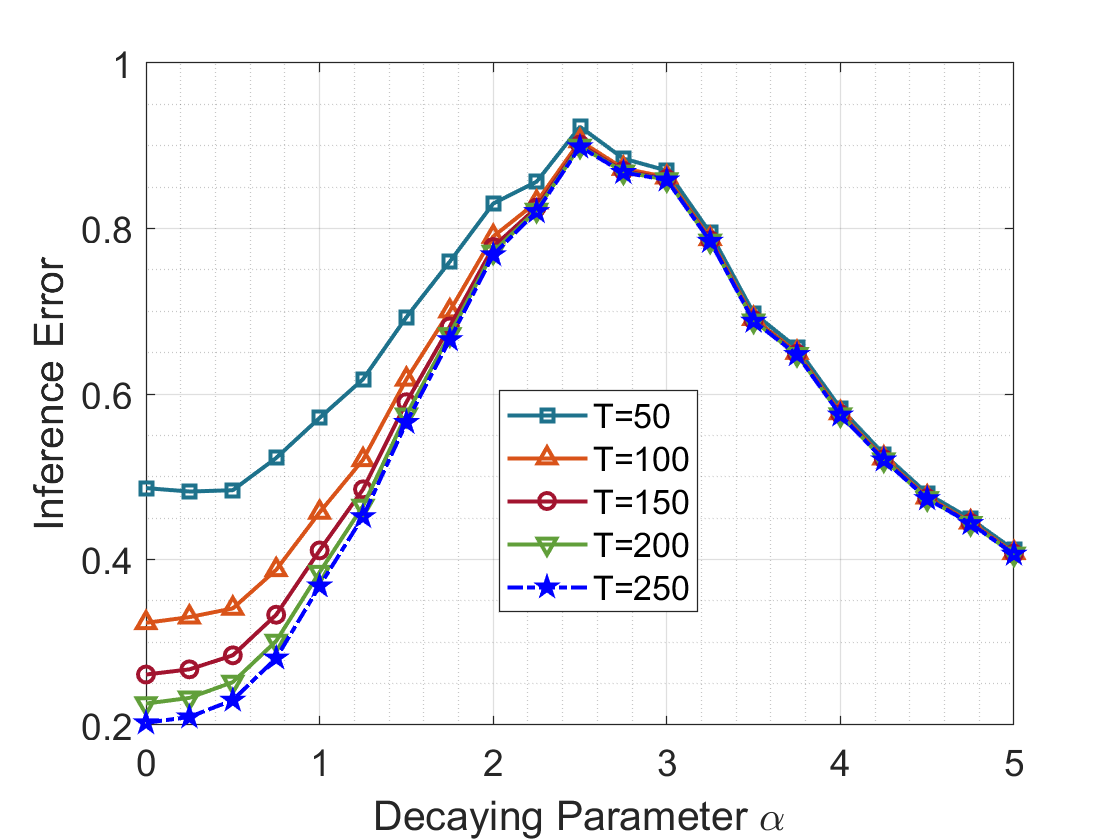}
\caption{The topology inference regarding the decay parameter $\alpha$, given fixed observation number $T$.}
\label{fig:alpha_curve}
\vspace{-15pt}
\end{figure}

Finally, we focus on the preservation performance evaluation by adding $\xi_t=\theta_t - \theta_{t-1}$, which has one-lag time dependence with each other. 
To explicitly exhibit the zero state-deviation in the asymptotic sense, in this case we consider a longer time horizon $T=10^5$ and use logarithmic coordinates in the time dimension. 
Similar to the case of adding $\theta_t$, we consider two groups of the variance decaying parameters, $\alpha\in\{0,0.5,1,1.5,2\}$ and $\alpha\in\{2.443,4,6,8,10\}$.
As shown in Fig.~\ref{fig:depend_state_0}-\ref{fig:depend_state_2}, except that the state deviation curve with non-decreasing variance parameter $\alpha=0$ is strictly bounded, all other curves decay to zero asymptotically. 
Specifically, the larger $\alpha$ is, the faster $\|x_{\xi,t} - x^*_t \|^2$ will decay to zero, which echoes with the conclusions of Theorem \ref{th:state_bound}. 
Concerning the inference error $\|\hat{W}(T)-W\| $, Fig.~\ref{fig:depend_state_0}-\ref{fig:depend_state_2} demonstrate that the inference errors will not decay to zero regardless of the setting of $\alpha$, which verifies the conclusion of Theorem \ref{th:rate_correlation}. 
Taking $\alpha=0$ as an example, the noise variance is non-decreasing and will not face the calculating precision issue of the computer as when $\alpha$ is large, and $\|\hat{W}(T)-W\| $ still will not go to zero as $T\to\infty$. 
It is worth noting that different from the case of adding $\theta_t$, the decaying rate of $\|\hat{W}(T)-W\| $ when adding $\xi_t$ does not possess explicit monotonicity regarding $\alpha$, as Theorem \ref{th:rate_correlation} reveals that it locates in an error bound. 
Hence, it is possible that the inference error when $\alpha=10$ is larger than that when $\alpha=2$.

\begin{figure*}[t]
\centering
\subfigure[$\alpha\in\{0,0.25,0.5,0.75,1\}$]{\label{fig:depend_state_0}
\includegraphics[width=0.26\textwidth]{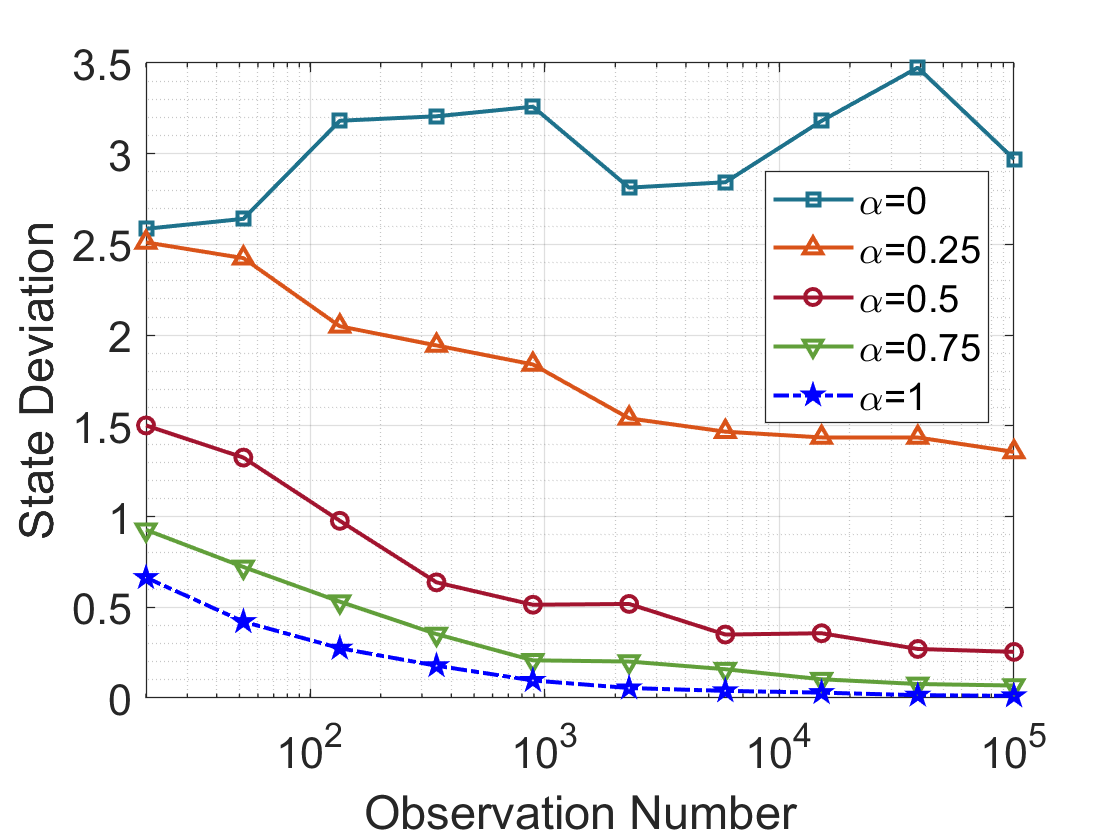}}
\hspace{-0.65cm}
\subfigure[$\alpha\in\{2,4,6,8,10\}$]{\label{fig:depend_state_2}
\includegraphics[width=0.26\textwidth]{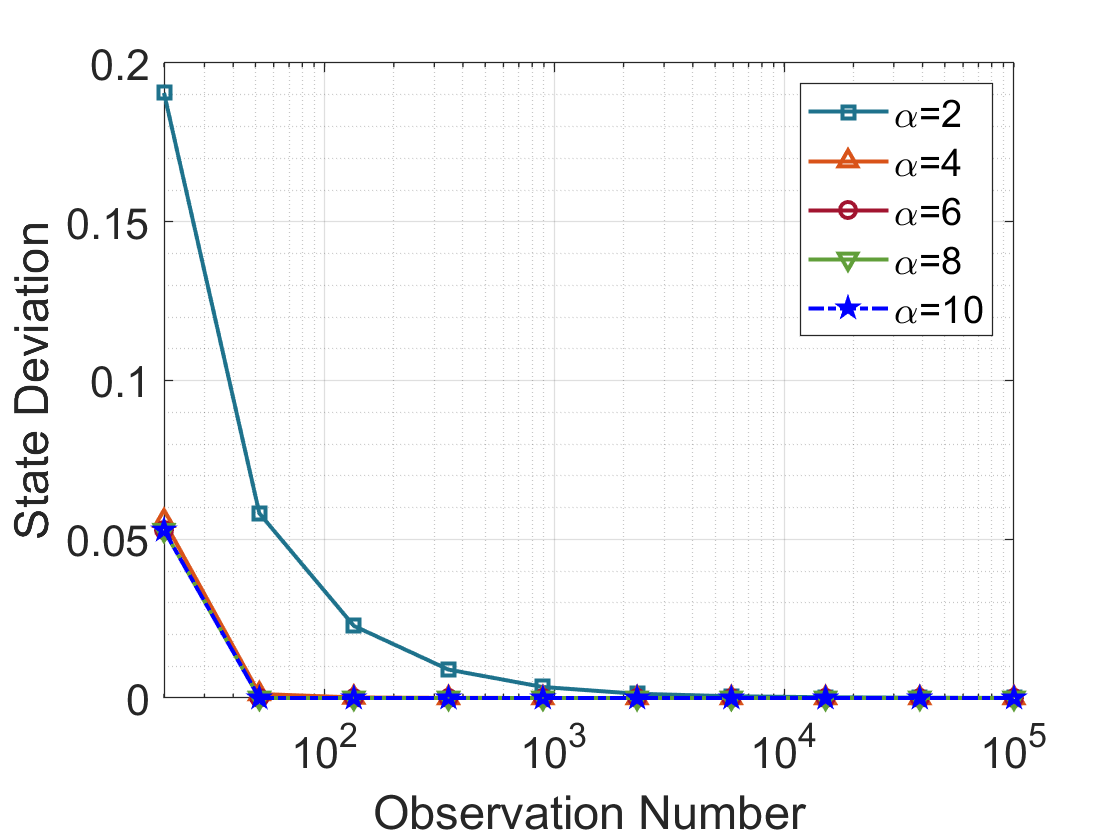}}
\hspace{-0.65cm}
\subfigure[$\alpha\in\{0,0.25,0.5,0.75,1\}$]{\label{fig:depend_topo_0}
\includegraphics[width=0.26\textwidth]{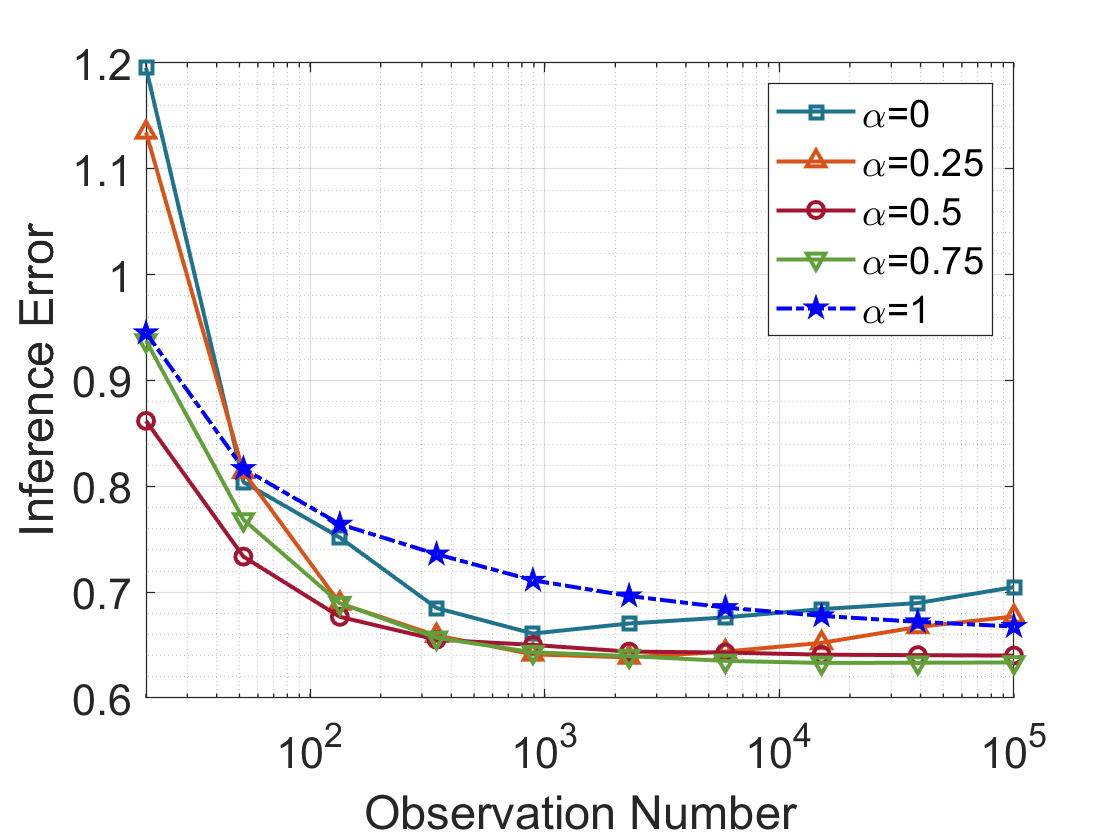}}
\hspace{-0.65cm}
\subfigure[$\alpha\in\{2,4,6,8,10\}$]{\label{fig:depend_topo_2}
\includegraphics[width=0.26\textwidth]{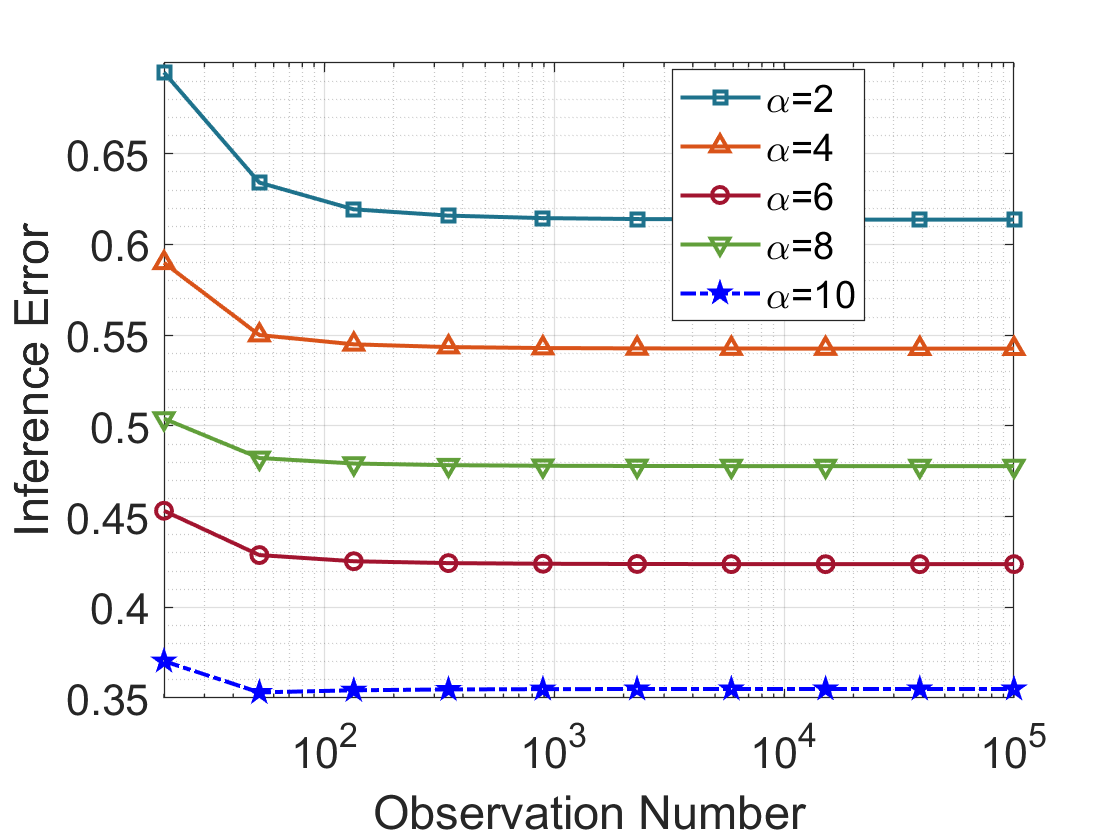}}
\vspace{-5pt}
\caption{The state deviation and topology inference error regarding $T$ when dependent noises $\{\xi_t\}$ are added. 
(a)(b): The state deviation $\|x_{\xi,t} - x^*_t \|^2$. (c)(d): The topology inference error $\|\hat{W}_{\xi}(T)-W\| $. }
\label{fig:dependent}
\vspace*{-13pt}
\end{figure*}

\section{Conclusions}\label{sec:conclusion}
In this paper, we studied the problem of preserving the topology of NSs by adding random noises. 
We first proposed a trace-based variance-expectation ratio motivated by the probability tails of some fundamental concentration inequalities. 
The metric did not depend on the distribution forms of the added noises and fully characterized the decaying rate of the topology inference error regarding the observation number. 
Then, based on the Euler-Maclaurin approximation, we analyzed the topology preservation performance of adding decaying independent noises and revealed the preservation limitation of zero asymptotic inference.  
The optimal noise design with the slowest decaying rate for the inference error was obtained. 
We further extended the metric to analyze the topology preservation performance under decaying noises with multi-lag time dependence, which achieved zero state deviation and non-zero inference error in the asymptotic sense simultaneously. 

The results of this paper provide a meaningful exploration of preserving the topology of NSs, and several promising directions include: 
i) extending the theoretical framework to NSs where each node is governed by a general linear-time-invariant dynamic model; 
ii) investigating the interplay between the added noises for topology preservation and the noises injected by the adversary for better inference.

\appendix
\subsection{Proof of Lemma \ref{le:exp_var}}\label{pr:le:exp_var}
\begin{proof}
First, notice that the quadratic form $z=\bm{\theta}_{T}^\mathsf{T} Q \bm{\theta}_{T}$ is equivalent to $z= \sum_{\ell_1= 1}^{nT} \sum_{\ell_2= 1}^{nT} Q_{\ell_1\ell_2} \bm{\theta}_{\ell_1} \bm{\theta}_{\ell_2}$. 
For $\ell_1\neq \ell_2$, $\bm{\theta}_{\ell_1} $ and $\bm{\theta}_{\ell_2}$ are independent of each other (i.e., $\mathbb{E}[z](\bm{\theta}_{\ell_1}\bm{\theta}_{\ell_2} )=0 $), and thus we have
\begin{align}
\mathbb{E}[z] = &\sum_{\ell_1= 1}^{nT} \sum_{\ell_2= 1}^{nT} \mathbb{E}[Q_{\ell_1\ell_2} \bm{\theta}_{\ell_1} \bm{\theta}_{\ell_2}] =  \sum_{\ell= 1}^{nT} Q_{\ell\ell} \mathbb{D}[\bm{\theta}_{\ell}].  
\end{align}
As for $\mathbb{D}[z]$, it needs to be pointed out first that for the two term $\bm{\theta}_{\ell_{1a}} \bm{\theta}_{\ell_{2a}} $ and $\bm{\theta}_{\ell_{1b}} \bm{\theta}_{\ell_{2b}} $ ($\mathbbm{1}_{\ell_{1a}=\ell_{1b}} \mathbbm{1}_{\ell_{2a}=\ell_{2b}}\neq 1 $), it follows from the product expectation formula \eqref{eq:expectation_product} that 
\begin{align}\label{formula:zero_correlation}
&\operatorname{Cov}[Q_{\ell_{1a}\ell_{2a}}\bm{\theta}_{\ell_{1a}} \bm{\theta}_{\ell_{2a}},Q_{\ell_{1b}\ell_{2b}}\bm{\theta}_{\ell_{1b}} \bm{\theta}_{\ell_{2b}}]  \nonumber \\
&= \mathbb{E}[ Q_{\ell_{1a}\ell_{2a}} Q_{\ell_{1b}\ell_{2b}} \bm{\theta}_{\ell_{1a}} \bm{\theta}_{\ell_{2a}} \bm{\theta}_{\ell_{1b}} \bm{\theta}_{\ell_{2b}} ] \nonumber \\
&~~~- \mathbb{E}[Q_{\ell_{1a}\ell_{2a}}\bm{\theta}_{\ell_{1a}} \bm{\theta}_{\ell_{2a}}] \mathbb{E}[Q_{\ell_{1b}\ell_{2b}}\bm{\theta}_{\ell_{1b}} \bm{\theta}_{\ell_{2b}}]=0. 
\end{align}
Then, based on the sum-variance formula \eqref{formula:sum_variance}, substituting into \eqref{formula:zero_correlation} into $\mathbb{D}[z]$ yields 
\begin{align}
\!\!\!\mathbb{D}[z] \!=\!&  \!\sum_{\ell_1= 1}^{nT} \sum_{\ell_2= 1}^{nT} Q_{\ell_1\ell_2}^2 \mathbb{D}[ \bm{\theta}_{\ell_1} \bm{\theta}_{\ell_2}] \nonumber \\
&  + \!\!\!\!\! \!\!\!\!\! \!\!\!\!\!\! \!\!\! \sum_{\substack{ \ell_{1a},\ell_{2a},\ell_{1b}, \ell_{2b}=1 \\ \mathbbm{1}_{\ell_{1a}=\ell_{1b}} \mathbbm{1}_{\ell_{2a}=\ell_{2b}}\neq 1 }}^{nT} \!\!\!\!\!\! \!\!\! \!\!\!\!\!\! \!\!\! \operatorname{Cov}[Q_{\ell_{1a}\ell_{2a}}\bm{\theta}_{\ell_{1a}} \bm{\theta}_{\ell_{2a}},Q_{\ell_{1b}\ell_{2b}}\bm{\theta}_{\ell_{1b}} \bm{\theta}_{\ell_{2b}}] \nonumber \\
=& \sum_{\ell= 1}^{nT} Q_{\ell\ell}^2 \mathbb{D}[\bm{\theta}_{\ell}^2]  + \sum_{\ell_1,\ell_2= 1, \ell_1\neq\ell_2}^{nT} Q_{\ell_1 \ell_2 }^2 \mathbb{D}[\bm{\theta}_{\ell_1} \bm{\theta}_{\ell_2} ] \nonumber \\
=& \sum_{\ell= 1}^{nT} Q_{\ell\ell}^2 (\mathbb{E}[\bm{\theta}_{\ell}^4] \!-\! \mathbb{D}^2[\bm{\theta}_{\ell}]) +  \!\! \!\!\!\!\!\! \!\!\!\sum_{\ell_1,\ell_2= 1, \ell_1\neq\ell_2}^{nT} \!\! \!\!\!\!\!\! \!\!\! Q_{\ell_1 \ell_2 }^2 \mathbb{D}[\bm{\theta}_{\ell_1}] \mathbb{D}[\bm{\theta}_{\ell_2}],
\end{align}
where the product variance formula \eqref{eq:variance_product} is applied in the last equality. 
The proof is completed. 
\end{proof}

\subsection{Proof of Theorem \ref{th:equivalent_rate}}\label{pr:th:equivalent_rate}
\begin{proof}
The key idea of this proof is to take an element-wise expectation and variance analysis on the matrix entries. 

\begin{itemize}
\item \textit{Part 1: Compute $\operatorname{tr}( \mathbb{D}\left[\Theta (X -\mathbb{E}[X])^\mathsf{T}  \right] )$ and $\operatorname{tr}( \mathbb{E}[(X \!-\! \mathbb{E}[X]) (X \!-\! \mathbb{E}[X])^\mathsf{T}] )$. }
\end{itemize}

First, based on the expansion of $x_t$ by \eqref{eq:x_expansion}, the matrix $\Theta (X -\mathbb{E}[X])^\mathsf{T}$ is written as 
\begin{align}\label{eq:expand_cx}
\Theta (X -\mathbb{E}[X])^\mathsf{T} &=  \sum_{t = 0}^{T-1} \theta_t (x_t - \tilde{x}_t )^\mathsf{T} =  \sum_{t = 1}^{T-1} \theta_t \tilde\theta_{t-1}^\mathsf{T}.  
\end{align}
To characterize the variance of $\theta_t \tilde\theta_{t-1}^\mathsf{T}$, we need to obtain the expectation and variance of $\tilde\theta_{t-1}$, given by 
\begin{align}
\!\mathbb{E}[\tilde{\theta}_{t-1}]\!=\!0,~
\mathbb{D}[\tilde{\theta}_{t-1}] \!=\! 
\sum_{m = 0}^{t-1} W^m (W^m)^\mathsf{T}\sigma_0^2 \! =\! \sigma_0^2 \Gamma_t.  
\end{align}
Due to the independence between $\theta_t$ and $\tilde{\theta}_{t-1}$, the product of two elements, $\theta_t^{i} \tilde{\theta}_{t-1}^{j}$, satisfies
\begin{align}
\!\! \!\mathbb{E}[\theta_t^{i} \tilde{\theta}_{t-1}^{j}] &\!=\!\mathbb{E}[\theta_t^{i}] \mathbb{E}[\tilde{\theta}_{t-1}^{j}] =0, \label{eq:zero_exp}\\
\!\! \! \mathbb{D}[\theta_t^{i} \tilde{\theta}_{t-1}^{j}] &\!=\! \mathbb{D}[\theta_t^{i}] \mathbb{D}[\tilde{\theta}_{t-1}^{j}] \!+\! \mathbb{D}[\theta_t^{i}] \mathbb{E}^2[\tilde{\theta}_{t-1}^{j}] \!+\!  \mathbb{E}^2[\theta_t^{i}]  \mathbb{D}[\tilde{\theta}_{t-1}^{j}] \nonumber \\
&\!=\! \sigma_0^4 \Gamma_t^{jj}. 
\end{align}
Based on \eqref{eq:zero_exp}, we show that the correlation between arbitrary two elements $\theta_{t_1}^{i} \tilde{\theta}_{t_1-1}^{j}$ and $\theta_{t_2}^{i} \tilde{\theta}_{t_2-1}^{j}$ ($t_1 \neq t_2$) is zero, i.e., 
\begin{align}\label{eq:zero_cor1}
\mathbb{E}\left[ (\theta_{t_1}^{i} \tilde{\theta}_{t_1-1}^{j} - \mathbb{E}[\theta_{t_1}^{i} \tilde{\theta}_{t_1-1}^{j}]  ) (\theta_{t_2}^{i} \tilde{\theta}_{t_2-1}^{j}- \mathbb{E}[\theta_{t_2}^{i} \tilde{\theta}_{t_2-1}^{j}] ) \right ]=0,
\end{align}
which is guaranteed by the independence between $\theta_{t_1}^{i}$ and $\theta_{t_2}^{j} $. 
Following the zero correlation properties \eqref{eq:zero_cor1}, the variance of $(\sum_{t = 1}^{T-1} \theta_t \tilde\theta_{t-1}^\mathsf{T})^{ij}$ is given by 
\begin{align}\label{eq:variance_zero}
\mathbb{D}[(\sum_{t = 1}^{T-1} \theta_t \tilde\theta_{t-1}^\mathsf{T})^{ij}] &= \sum\limits_{t = 1}^{T-1} \mathbb{D}[\theta_t^{i} \tilde{\theta}_{t-1}^{j}] = \sum\limits_{t = 1}^{T-1} \sigma_0^4 \Gamma_t^{jj}.  
\end{align}
Then, we have 
\begin{align}\label{eq:tr_dx0}
\operatorname{tr}( \mathbb{D}\left[\Theta (X -\mathbb{E}[X])^\mathsf{T}  \right] ) = \sigma_0^4 \sum\limits_{t = 1}^{T-1} \operatorname{tr}( \Gamma_t).
\end{align}

Next, taking the expansion of $x_t$ by \eqref{eq:x_expansion} into $(X \!-\! \mathbb{E}[X]) (X \!-\! \mathbb{E}[X])^\mathsf{T}$, we have
\begin{align}
&(X \!-\! \mathbb{E}[X]) (X \!-\! \mathbb{E}[X])^\mathsf{T} \nonumber \\
= &\!\sum_{t = 1}^{T-1} (\tilde{x}_t \!+\! \tilde{\theta}_{t-1} \!-\! \tilde{x}_t ) (\tilde{x}_t \!+\! \tilde{\theta}_{t-1} \!-\! \tilde{x}_t )^\mathsf{T} \!=\! \sum_{t = 1}^{T-1} \tilde{\theta}_{t-1}\tilde{\theta}_{t-1}^\mathsf{T}, \label{eq:expectation_minus}\\
\Rightarrow & \operatorname{tr}( \mathbb{E}[(X \!-\! \mathbb{E}[X]) (X \!-\! \mathbb{E}[X])^\mathsf{T}] ) = \sigma_0^2 \sum_{t = 1}^{T-1} \operatorname{tr}( \Gamma_t). \label{eq:tr_ex0}
\end{align}
Then, divide the square-root of \eqref{eq:tr_dx0} with \eqref{eq:tr_ex0}, and we obtain 
\begin{align}
R_{\theta}(T) \!=\! \frac{\sqrt{\operatorname{tr}( \mathbb{D}\left[\Theta (X -\mathbb{E}[X])^\mathsf{T}  \right] )}}{\operatorname{tr}( \mathbb{E}[(X \!-\! \mathbb{E}[X]) (X \!-\! \mathbb{E}[X])^\mathsf{T}] )} \!=\! \frac{1}{\sqrt{ \operatorname{tr}(\sum_{t = 1}^{T-1} \Gamma_t)}}.
\end{align}
It is worth noting that based on \eqref{eq:expand_cx} and \eqref{eq:expectation_minus}, one can easily obtain that $R_{\theta}(T)$ is equivalent to $\frac{ \sqrt{ \operatorname{tr}(\mathbb{D}\left[\Theta X^\mathsf{T}  \right] )}  }{ \operatorname{tr}(\mathbb{E}[X X^\mathsf{T}] ) }$ with $x_0=0$. 

\begin{itemize}
\item \textit{Part 2: Characterize $\bm{O}(R_{\theta}(T))$. }
\end{itemize}

For simple notation, suppose that $W$ has $n$ distinct eigenvalues, and the eigenvalue decomposition of $W$ is given by 
\begin{align}\label{eq:eig_decomposition}
W=MJM^{-1}=M \operatorname{diag}\{\lambda_1,\lambda_2,\cdots,\lambda_n\} M^{-1},
\end{align}
where $M\in\mathbb{R}^{n \times n}$ is an invertible matrix. 
Substituting the eigenvalue decomposition \eqref{eq:eig_decomposition} into $W^m$ yields that $W^m= M J^m M^{-1}$, satisfying 
\begin{equation}\label{eq:upper_bound_W}
\!\!\|W^m \|_F \!\le\! \sqrt{n} \|W^m\| \!\le\! \sqrt{n} \kappa(M) \|J^m\| \!\le\! n^{\frac{3}{2}} \kappa(M) |\lambda_1|^m,
\end{equation}
where $\kappa(M)=\|M\| \|M^{-1}\|$ is the condition number of $M$ and the property $\|J^m\|\le n \|J^m\|_{\max}=n |\lambda_1|^m $ is used. 
Then, based on \eqref{eq:upper_bound_W} and $\rho(W^m) \le \|W^m\|\le \|W^m\|_F$, it follows that $\operatorname{tr}(W^m (W^m)^\mathsf{T})=\|W^m\|_F^2$ is bounded by 
\begin{align}\label{eq:bounds_w}
\!\! |\lambda_1|^{2m} \!=\!\rho^2(W^m) \!\le\! \operatorname{tr}(W^m (W^m)^\mathsf{T}) \!\le\! n^3 \kappa^2(M) |\lambda_1|^{2m}.
\end{align}
It is clear from \eqref{eq:bounds_w} that $\operatorname{tr}(W^m (W^m)^\mathsf{T})$ is dominated by $|\lambda_1|^{2m}$. 
Note that the property \eqref{eq:bounds_w} still holds even when the eigenvalues are not distinct, because the their corresponding Jordan blocks can always be bounded by $\|J^m\|\le n \|J^m\|_{\max}=n |\lambda_1|^m $. 

Therefore, based the above analysis, the increment scale of $\operatorname{tr}(\Gamma_t) $ is the same as that of $\sum\nolimits_{m = 0}^{t-1} \rho^{2m}(W)$, and we have 
\begin{align}
\bm{O}(\operatorname{tr}(\sum\limits_{t = 1}^{T-1} \Gamma_t)) &= \bm{O}(\sum\limits_{t = 1}^{T-1} \operatorname{tr}(\Gamma_t)) =\bm{O}(\sum\limits_{t = 1}^{T-1} \sum\limits_{m = 0}^{t-1} \rho^{2m}(W)) \nonumber \\
&=\left\{\begin{aligned}
&T,&&\text{if}~ \rho(W)<1\\
&T^2, &&\text{if}~ \rho(W)=1\\
&\rho^{2T}(W),&&\text{if}~ \rho(W)>1
\end{aligned}~.\right.
\end{align}
Since $R_{\theta}(T)\!=\! \frac{1}{\sqrt{ \operatorname{tr}(\sum_{t = 1}^{T-1} \Gamma_t)}}$, the increment scale of $R_{\theta}(T)$ is given by 
\begin{align}
R_{\theta}(T)={1}/\sqrt{\bm{O}\left(\operatorname{tr}(\sum\nolimits_{t = 1}^{T-1} \Gamma_t)\right)},
\end{align}
which equals to \eqref{eq:rates_eq}. 
The proof is completed. 
\end{proof}

\subsection{Proof of Lemma \ref{le:approximate}}\label{pr:le:approximate}
\begin{proof}
To begin with, we consider the simplest situation where $\alpha>1$. 
It is well-known that $h_{\alpha}(T)$ with $\alpha>1$ is a typical p-series that is strictly bounded, i.e., 
\begin{equation}
h_{\alpha}(T)=\bm{O}(C),
\end{equation}
where $C>0$ is a constant irrelevant to $T$. 
Apparently, the increment scale of $h_{\alpha}(T)$ is identical with $\int_1^T \frac{\sigma_0^2}{y^\alpha} dy =\bm{O}( C)$. 

When $\alpha\le1$, $h_{\alpha}(T)$ will diverge as $T$ grows and obtaining its accurate expression is intractable. 
Since $f(t)$ is an infinity-order continuous function on the interval $[1,T]$, we can apply the famous Euler-Maclaurin summation formula \cite{apostol1999elementary} on $ h_{\alpha}(T)$ and obtain 
\begin{align}\label{eq:sum_Mac}
h_{\alpha}(T)\! =\!F_{\alpha}(T)\!+\!\sum_{r=0}^p \frac{(-1)^{r\!+\!1} B_{r\!+\!1}}{(r\!+\!1)!}(f^{(r)}(T)\!-\!f^{(r)}(1))\!+\! R_p,
\end{align}
where $B_r$ is the $r$-th Bernoulli number, $R_p$ is an error term (which depends on $T$ and $p$, and $f$) and is usually small for suitable values of $p$. 
Then, we can characterize the increment characteristic of $h_{\alpha}(T)$ by computing the above asymptotic expansions of $h_{\alpha}(T)$. 
By moving the term $F_{\alpha}(T)$, we have 
\begin{equation}\label{eq:ex_difference}
h_{\alpha}(T)\! - \!F_{\alpha}(T)\!=\!\sum_{r=0}^{\infty} \frac{(-1)^{r\!+\!1} B_{r\!+\!1}}{(r\!+\!1)!}(f^{(r)}(T)\!-\!f^{(r)}(1)).
\end{equation}
In the sequel, we will show that the series in the RHS of \eqref{eq:ex_difference} is bounded. 
Note that the Bernoulli number $B_r$ satisfies the following series identity
\begin{equation}\label{eq:converge_b}
\sum_{r=1}^{\infty} \frac{ (-1)^{r+1} B_{r+1} }{(r+1)!} \!=\! \sum_{r=0}^{\infty} \frac{ (-1)^{r+1} B_{r+1} }{(r+1)!} \!-\! \frac{1}{2}= \frac{1}{1\!-\!\frac{1}{e}} \!-\!\frac{1}{2}. 
\end{equation}
For the $r$-th derivative, we have $f^{(r)}(T)= \frac{(-1)^{r}(\alpha+r)!}{T^{\alpha+r}}$, which is bounded when $T$ is sufficiently large. 
This point can be verified by considering the infinity-order derivative when $T\to \infty$, i.e.,
\begin{equation}\label{eq:converge_f}
\mathop {\lim }\limits_{T \to \infty} f^{(T)}(T)= \mathop {\lim }\limits_{T \to \infty} \frac{(-1)^{T}(\alpha+T)!}{T^{\alpha+T}} =0,
\end{equation}
where the Stirling's approximation $\frac{T !}{T^T} \approx \sqrt{2 \pi T}\left(\frac{1}{e}\right)^T$ is used. 
Then, it follows from the convergence of \eqref{eq:converge_b} and the near-exponential convergence of \eqref{eq:converge_f} that 
\begin{align}\label{eq:bounded_approximation}
\mathop {\lim }\limits_{T \to \infty}  \sum_{r=0}^{\infty} \frac{(-1)^{r\!+\!1} B_{r\!+\!1}}{(r\!+\!1)!}(f^{(r)}(T)\!-\!f^{(r)}(1))<\infty.
\end{align}
Finally, based on \eqref{eq:ex_difference} and \eqref{eq:bounded_approximation}, we have $h_{\alpha}(T)=\bm{O}(F_{\alpha}(T))$ and complete the proof. 
\end{proof}

\subsection{Proof of Theorem \ref{th:decaying_rate}}\label{pr:th:decaying_rate}
\begin{proof}
The proof is straightforward by characterizing $ \operatorname{tr}(\mathbb{D}\left[\Theta X^\mathsf{T}  \right] )$ and $\operatorname{tr}(\mathbb{E}[X X^\mathsf{T}] )$, respectively. 
The following notations are introduced to ease the analysis. 
\begin{align}
& \bm{x}_{0:T-1}\!=\![x_0^\mathsf{T},x_1^\mathsf{T},\cdots,x_{T-1}^\mathsf{T}]^\mathsf{T}, \Gamma_{t_1,t}^*(\ell) \! = \!\!\!\!\! \sum_{m = t_1-1}^{t-1} \!\!\!  (W^m)^\mathsf{T} W^{m+\ell},   \nonumber  \\
&\tilde{W}\!=\!\begin{bmatrix}
0 & 0  & \ldots & 0 \\
I & 0  & \ldots & 0 \\
\vdots & \vdots & \ddots & \vdots \\
W^{T-2} & W^{T-3} & \ldots & 0
\end{bmatrix}, \label{eq:tilde_w} \\
&Q_w \!=\!\tilde{W}^\mathsf{T} \tilde{W} \nonumber \\
&~\!=\!\!\begin{bmatrix}
\Gamma_{T\!-\!1}^*    & \Gamma_{1,T\!-\!2}^{*\mathsf{T}}(1)  &  \ldots & \Gamma_{1,1}^{*\mathsf{T}}(T\!-\!2) & 0 \\
\Gamma_{1,T\!-\!2}^{*}(1) & \Gamma_{T-2}^* &   \ldots & \Gamma_{1,1}^{*\mathsf{T}}(T\!-\!3)  & 0 \\
\vdots & \vdots & \ddots  & \vdots  & \vdots\\
\Gamma_{1,1}^{*}(T\!-\!2) & \Gamma_{1,1}^{*}(T\!-\!3)  &   \ldots & \Gamma_{1}^* & 0 \\
0 & 0  &   \ldots & 0 & 0 \\
\end{bmatrix}.  \label{eq:Q_W}
\end{align}

\begin{itemize}
\item \textit{Part 1: Characterize $ \operatorname{tr}(\mathbb{D}\left[\Theta X^\mathsf{T}  \right] )$. }
\end{itemize}

Note that although the random noises $\{\theta_t\}_{t=0}^{T-1}$ are no longer identically distributed under noise setting \eqref{eq:noise_variance}, their mutual independence is still ensured and will not incur any correlation terms when calculating the variance of $(\Theta  X^\mathsf{T})^{ij}$, i.e., 
$\mathbb{D}[(\Theta  X^\mathsf{T})^{ij}] = \sum\nolimits_{t = 1}^{T-1}\mathbb{D}[\theta_t^{i} \tilde{\theta}_{t-1}^{j}] $
 still holds. 
First, based on the condition that $\rho(W)=1$ and $\sigma_t^2=\frac{\sigma_0^2}{{(t+1)}^\alpha}$, we have 
\begin{align}
& \mathbb{D}[\tilde{\theta}_{t-1}] \!=\! \mathbb{E}[\tilde{\theta}_{t-1} \tilde{\theta}_{t-1}^\mathsf{T}]  \!=\!\! \sigma_0^2  \sum_{m = 0}^{t-1} \frac{ W^{t-m-1} (W^{t-m-1})^\mathsf{T} }{(m+1)^{\alpha}}, \\
&  \mathbb{D}[\theta_t^{i} \tilde{\theta}_{t-1}^{j}] = \mathbb{D}[\theta_t^{i}]  \mathbb{D}[\tilde{\theta}_{t-1}^{j}] = \frac{\sigma_0^4  \tilde{\Gamma}_{t,\alpha}^{jj} }{{(t+1)}^\alpha},
\end{align} 
where $\tilde{\Gamma}_{t,\alpha}\!=\!\sum\limits_{m = 0}^{t-1} \frac{ W^{t-m-1} (W^{t-m-1})^\mathsf{T} }{(m+1)^{\alpha}}$. 
Similar to \eqref{eq:zero_cor1}, due to the independence between $\theta_{t_1}^{i}$ and $\theta_{t_2}^{i} $ ($t_1 \neq t_2$), 
the correlation between arbitrary two elements $\theta_{t_1}^{i} \tilde{\theta}_{t_1-1}^{j}$ and $\theta_{t_2}^{i} \tilde{\theta}_{t_2 -1 }^{j}$ is zero. 
Hence, it follows that 
\begin{align}\label{eq:new_variance}
\mathbb{D}[(\Theta  X^\mathsf{T})^{ij}] =\sum\limits_{t = 1}^{T-1}  \mathbb{D}[\theta_t^{i} \tilde{\theta}_{t-1}^{j}] = \sum\limits_{t = 1}^{T-1} \frac{\sigma_0^4  \tilde{\Gamma}_{t,\alpha}^{jj} }{{(t+1)}^\alpha},
\end{align}
where $\tilde{\Gamma}_{t,\alpha}^{jj}$ dominates the increment of $\mathbb{D}[(\Theta  X^\mathsf{T})^{ij}]$. 
Recall that in the proof of Theorem \ref{th:equivalent_rate}, we have demonstrated that $|\lambda_1|^{2m}\le\operatorname{tr}(W^m (W^m)^\mathsf{T}) \!\le\! n^2 |\lambda_1|^{2m}$. 
When $\rho(W)=|\lambda_1|=1$, $\operatorname{tr}(\tilde{\Gamma}_{t,\alpha})$ is also bounded by
\begin{align}\label{eq:tilde_gamma}
\sum\limits_{m = 0}^{t-1} \!\frac{ 1 }{(m\!+\!1)^{\alpha}} \!\le\! \operatorname{tr}(\tilde{\Gamma}_{t,\alpha})  \!\le\! \sum\limits_{m = 0}^{t-1} \! \frac{ n^2 }{(m\!+\!1)^{\alpha}}. 
\end{align}
Substituting \eqref{eq:tilde_gamma} into $\operatorname{tr}(\mathbb{D}[(\Theta  X^\mathsf{T})]) = \sum\nolimits_{t = 1}^{T-1} \frac{\sigma_0^4  \operatorname{tr}(\tilde{\Gamma}_{t,\alpha}) }{{(t+1)}^\alpha}$, it follows that 
\begin{align}\label{eq:new_trace_dx}
\sigma_0^4  \sum\limits_{t = 1}^{T-1} \frac{ \sum\limits_{m = 0}^{t-1} \frac{1}{(m\!+\!1)^{\alpha} }} {{(t\!+\!1)}^\alpha} \!\le\!
\operatorname{tr}(\mathbb{D}[\Theta  X^\mathsf{T}]) \!\le\! n^2 \sigma_0^4  \sum\limits_{t = 1}^{T-1} \frac{ \sum\limits_{m = 0}^{t-1} \frac{1}{(m\!+\!1)^{\alpha} }} {{(t\!+\!1)}^\alpha}. 
\end{align}
It is clear from \eqref{eq:new_trace_dx} that the increment scale of $\operatorname{tr}(\mathbb{D}[\Theta  X^\mathsf{T}])$ regarding $T$ is the same as $ \sum\limits_{t = 1}^{T-1} \frac{ \sum\nolimits_{m = 0}^{t-1} {1}/{(m\!+\!1)^{\alpha} }} {(t\!+\!1)^\alpha}$, which contains double summation. 
Note that for the inner summation $\sum\nolimits_{m = 0}^{t-1} {1}/{(m\!+\!1)^{\alpha} }$, it cannot be characterized by $F_{\alpha}(t)$ alone because $\sum\nolimits_{m = 0}^{t-1} {1}/{(m\!+\!1)^{\alpha} }>0$ holds for arbitrary $t\ge1$. 
Typically, $F_{\alpha}(1)=0 $ will contradict with $\sum\nolimits_{m = 0}^{0} {1}/{(m\!+\!1)^{\alpha} }=1$, which indicates that the constant part cannot be omitted when doing the outer summation. 
Therefore, based on the nonnegativity, it holds that 
\begin{align}\label{eq:o_dx}
\bm{O}( \operatorname{tr}(\mathbb{D}\left[\Theta X^\mathsf{T}  \right] )= \bm{O}( \sum\limits_{t = 1}^{T-1} \frac{\operatorname{tr}(\tilde{\Gamma}_{t,\alpha})}{(t\!+\!1)^\alpha} )\!=\!\bm{O}(S_T),
\end{align}
where $S_T=\sum\nolimits_{t = 1}^{T-1} \frac{ \int_1^t 1/y^\alpha dy +C } {(t+1)^\alpha}$. 

Finally, the proof turns to characterize $\bm{O}(S_T)$. 
Note that $t<t+1$ and $\frac{1}{2t}<\frac{1}{t+1}$, and thus $S_T$ is bounded by
\begin{align}\label{eq:sum_s}
\!\!\left\{\begin{aligned}
\sum\limits_{t = 1}^{T-1} \frac{ t^{1\!-\!2\alpha} \!+\! C t^{-\alpha} }{2^{\alpha}(1-\alpha)} \!< &S_T \!<\! \sum\limits_{t = 2}^{T} \frac{ t^{1\!-\!2\alpha} \!+\!C t^{-\!\alpha} }{1-\alpha}, &&\alpha \!<\! 1 \\
\sum\limits_{t = 1}^{T-1} \frac{ \log t +C }{2t} \!< &S_T < \sum\limits_{t = 2}^{T} \frac{ \log t +C }{t},~&&\alpha=1 \\
\sum\limits_{t = 1}^{T-1} \frac{ C }{ 2^{\alpha} t^\alpha } \!< &S_T \!<\! \sum\limits_{t = 2}^{T} \frac{ C }{ t^\alpha },~&&\alpha > 1 
\end{aligned}
\right. \!\!.
\end{align}
From \eqref{eq:sum_s}, when $\alpha\le1$, $S_T$ will increase to infinity as $T$ grows, and thus we can omit the constant terms in $S_T$ and characterize $\bm{O}(S_T)$ by
\begin{small}
\begin{align}\label{eq:fenzi1}
\bm{O}( S_T ) \!=\! \left\{\begin{aligned}
&\bm{O}( \int_1^T \!\! {t^{1\!-\!2\alpha} \!+\! C t^{-\alpha}} dt ) \!=\! \bm{O}(  T^{2\!-\!2\alpha}) + \bm{O}( T^{1\!-\!\alpha} ), \alpha\!<\!1\\
&\bm{O}( \int_1^T \! {\frac{\log t \!+\! C}{t} } dt ) \!=\! \bm{O}( (\log T)^{2} )+ \bm{O}( \log T ) , \alpha\!=\!1
\end{aligned}\right.\!.
\end{align}
\end{small}
\!\!When $\alpha\!>\!1$, $S_T$ will converge to a constant determined by $\alpha$ as $T\!\to\!\infty$, and the terms concerning $T$ in $S_T$ can be omitted. 
Since the bounds of $S_T$ with $\alpha\!>\!1$ are characterized by 
\begin{align}
\left\{\begin{aligned}
&\bm{O}(\sum\limits_{t = 1}^{T-1} \frac{ C }{ 2^{\alpha} t^\alpha } ) \!=\! \bm{O}(  \frac{C}{2^\alpha} \int_1^{T\!-\!1} \!\! \frac{1}{y^\alpha} d y  ) \!=\! \bm{O}( \frac{C}{2^\alpha (\alpha\!-\!1)} ) \\
&\bm{O}(\sum\limits_{t = 2}^{T} \frac{ C }{ t^\alpha } ) \!=\! \bm{O}(  C \int_2^{T} \!\! \frac{1}{y^\alpha} d y  ) = \bm{O}( \frac{C}{2^{\alpha\!-\!1}(\alpha\!-\!1)} )
\end{aligned}\right.\!,
\end{align}
it follows that  
\begin{align}\label{eq:fenzi2}
\bm{O}(S_T)= \bm{O}(\frac{1}{2^\alpha (\alpha\!-\!1)} ),~\alpha>1.  
\end{align}
Note that different from \eqref{eq:fenzi1}, when $\alpha>1$, the coefficient concerning $\alpha$ characterized in \eqref{eq:fenzi2} cannot be omitted because $S_T$ converges to a constant that is directed determined by $\alpha$.


\begin{itemize}
\item \textit{Part 2: Characterize $\operatorname{tr}(\mathbb{E}[X X^\mathsf{T}] )$. }
\end{itemize}

Notice that $\operatorname{tr}(\mathbb{E}[X X^\mathsf{T}])=\mathbb{E}[\operatorname{tr}(X X^\mathsf{T})] $,  and we first show that $\operatorname{tr}(X X^\mathsf{T})$ is given by 
\begin{align}\label{eq:expectation_00}
\operatorname{tr}(X X^\mathsf{T}) = & \operatorname{tr}( \sum\nolimits_{t = 0}^{T-1} x_t x_t^\mathsf{T} ) = \sum\nolimits_{t = 0}^{T-1} x_t^\mathsf{T} x_t \nonumber \\  
=& \bm{x}_{0:T-1}^\mathsf{T}  \bm{x}_{0:T-1} = \bm{\theta}_{0:T\!-\!1}^\mathsf{T} Q_w  \bm{\theta}_{0:T\!-\!1} . 
\end{align}
Then, based on Lemma \ref{le:exp_var}, the expectation of \eqref{eq:expectation_00} is given by 
\begin{align}
\mathbb{E}[\operatorname{tr}(X X^\mathsf{T})] &= \mathbb{E}[\bm{\theta}_{0:T\!-\!1}^\mathsf{T} Q_w  \bm{\theta}_{0:T\!-\!1}]= \sum_{t = 1}^{T} \operatorname{tr}( Q_w(t,t)) \sigma_{t-1}^2   \nonumber \\
&= \sum_{t = 1}^{T-1} \operatorname{tr}( \Gamma_{T-t}^* ) \sigma_{t-1}^2. 
\end{align}
Recalling the inequality \eqref{eq:bounds_w}, we have $(T\!-\!t) \!\le\! \operatorname{tr}(\Gamma_{T-t}^* )\!\le\! n^2 \kappa^2(M)(T\!-\!t) $. 
Then, it follows that $\bm{O}(\operatorname{tr}(\Gamma_{T-t}^* ))=\bm{O}(T-t)$ and 
\begin{align}\label{eq:fenmu}
\!\!\bm{O}(\mathbb{E}[\operatorname{tr}(X X^\mathsf{T})]) \!= & \bm{O}(\sum_{t = 1}^{T-1} (T\!-\!t) \sigma_{t\!-\!1}^2) \!=\!\bm{O}( T\!\! \sum_{t = 1}^{T-1} \frac{1}{t^\alpha} \!-\! \sum_{t = 1}^{T-1} {t^{1 \!-\!\alpha}} )  \nonumber \\
=& \left\{\begin{aligned}
& \bm{O}(T^{2-\alpha} ) ,&&\text{if}~\alpha<1 \\
& \bm{O}(T \log T) ,&&\text{if}~\alpha=1 \\
& \bm{O}( \frac{T}{\alpha-1}) ,&&\text{if}~\alpha >1
\end{aligned}
\right.,
\end{align}
where the increment scale of $\sum_{t = 1}^{T-1} {t^{1 \!-\!\alpha}}$ is always smaller than that of $T\sum_{t = 1}^{T-1} \frac{1}{t^\alpha}$, and thus is not the dominant factor in $\mathbb{E}[\operatorname{tr}(X X^\mathsf{T})]$. 

Finally, based on \eqref{eq:fenzi1}, \eqref{eq:fenzi2} and \eqref{eq:fenmu}, the rate characterization of $R_{\theta}(T)={ \sqrt{ \operatorname{tr}(\mathbb{D}\left[\Theta X^\mathsf{T}  \right] )}  }/{ \operatorname{tr}(\mathbb{E}[X X^\mathsf{T}] ) }$ is given by
\begin{align}\label{eq:final_ratio2}
\!\! R_{\theta}(T) & = \frac{\bm{O} \left(  \sqrt{ \operatorname{tr}(\mathbb{D}\left[\Theta X^\mathsf{T}  \right] ) } \right) }{\bm{O}( \operatorname{tr}(\mathbb{E}[X X^\mathsf{T}]  )} = \sqrt{  \frac{\bm{O}(S_T )} {\bm{O}( \operatorname{tr}(\mathbb{E}[X X^\mathsf{T}] )^2 )} } ,
\end{align}
which further yields the decaying rate \eqref{eq:limit_Re} for different $\alpha$ and completes the proof. 
\end{proof}

\subsection{Proof of Corollary \ref{coro:sufficient_noise}}\label{pr:coro:sufficient_noise}
\begin{proof}
The proof of this result is similar to that of Theorem \ref{th:decaying_rate}, and here we sketch the main idea. 
First, recalling the asymptotic expansion \eqref{eq:ex_difference} by the Euler-Maclaurin summation formula, to ensure the series $\sum\nolimits_{t = 0}^{T-1} g(t)$ can be approximated by $\int_0^T g(y) dy$, it requires the following series to converge, i.e., 
\begin{equation}\label{eq:ex_difference2}
\sum_{r=0}^{\infty} \frac{(-1)^{r\!+\!1} B_{r\!+\!1}}{(r\!+\!1)!}(g^{(r)}(T)\!-\!g^{(r)}(1))<\infty.
\end{equation}
Notice that the odd Bernoulli numbers $B_{2r+1}$ are zero except for $B_1$, and the even ones have the following asymptotic approximations
\begin{equation}
(-1)^{r+1} B_{2 r} \sim \frac{2(2 r) !}{(2 \pi)^{2 r}} \Rightarrow \frac{(-1)^{r\!+\!1} B_{2r}}{(2r)!} \sim \frac{2}{(2 \pi)^{2 r}}
\end{equation}
Hence, if the series \eqref{eq:ex_difference2} converges, it will converge in the way similar to a exponential series. 
Then, we can turn to ensure the last term of the series satisfy the following limit
\begin{align}\label{eq:key_limit}
&\mathop {\lim }\limits_{T \to \infty} \frac{ B_{2T} (g^{(2T)}(2T)\!-\!g^{(2T)}(1)) }{(2T)!} \nonumber \\
= & \mathop {\lim }\limits_{T \to \infty}  \frac{ B_{2T}(g^{(2T)}(2T)\!-\!g^{(2T)}(1)) }{(2 \pi)^{2 T}}=0,
\end{align}
which is exactly the derivative condition in \eqref{eq:converge_condition}. 

With \eqref{eq:key_limit} guaranteed, $\sum\nolimits_{t = 0}^{T-1} g(t)$ can be approximated $\int_0^{T-1} g(t)$. 
Resembling the deduction of \eqref{eq:o_dx} and \eqref{eq:fenmu}, the increment scale of $\|\mathbb{D}[(\Theta  X^\mathsf{T})]\|_F$  and  $\|\mathbb{E}[X X^\mathsf{T}]\|_F $ are characterized by 
\begin{align}
\operatorname{tr}(\mathbb{D} [\Theta X^\mathsf{T}]) &\!=\!\bm{O}(\int_1^{T\!-\!1} \!\! g(y_2) \int_0^{y_2\!-\!1} g(y_1)  d{y_1} d{y_2}), \label{eq:rate_g1}\\
\operatorname{tr}(\mathbb{E}[X X^\mathsf{T}] )&\!=\!\bm{O}( T \int_0^{T-2} g(y) d{y}) \label{eq:rate_g2}.
\end{align}
Finally, substituting \eqref{eq:rate_g1} and \eqref{eq:rate_g2} into $R_{\theta}(T)$ yields \eqref{eq:generate_rate}, which finishes the proof. 
\end{proof}

\subsection{Proof of Theorem \ref{th:state_bound}}\label{pr:th:state_bound}
\begin{proof}
First, based on \eqref{eq:x_xi}, $\mathbb{E}[\|x_{\xi,t} - x^*_t \|^2] $ is written as  
\begin{align}\label{eq:expand_cor_error}
\mathbb{E}[\|x_{\xi,t} \!-\! x^*_t \|^2] \!= & \operatorname{tr}( \mathbb{E}[ (x_{\xi,t} \!-\! x^*_t)(x_{\xi,t} \!-\! x^*_t)^\mathsf{T})])  \nonumber \\
= & n \sigma_{t-1}^2 \!+\! \sum\limits_{m = 0}^{t-2} \operatorname{tr}(\tilde{W}_{t,m} \tilde{W}_{t,m}^\mathsf{T}) \sigma_m^2  \nonumber \\
= & n \sigma_{t-1}^2 \!+\!\! \sum\limits_{m = 0}^{t-2} \!\! \|\tilde{W}_{t,m}\|_F^2 \sigma_m^2. 
\end{align}
where the properties $\operatorname{tr}(\tilde{W}_{t,m} \tilde{W}_{t,m}^\mathsf{T})=\|\tilde{W}_{t,m} \|_F^2$ is applied. 
Since $\mathop {\lim }\limits_{t \to \infty}   n \sigma_{t-1}^2=0$ always holds when $\alpha>0$, we only need to prove $\mathop {\lim }\limits_{t \to \infty} \sum\nolimits_{m = 0}^{t-2} \!\! \|\tilde{W}_{t,m}\|_F^2 \sigma_m^2$ in the sequel. 

Similar to the analysis of $\|W^m \|_F $ in the proof of Theorem \ref{th:equivalent_rate}, 
under Assumption \ref{assu:topo}, we directly consider the eigenvalues of $W$ are distinct for simple analysis.  
Then, recalling the eigenvalue decomposition $W =M \operatorname{diag}\{\lambda_1,\lambda_2,\cdots\!,\lambda_n\} M^{-1} $ and $\lambda_1\!=\!1$, $\tilde{W}_{t,m}$ is equivalent to  
\begin{align}\label{eq:w_m1}
\tilde{W}_{t,m}\!=&M \tilde{J}_{t,m} M^{-1} \nonumber \\
\!=&M\operatorname{diag}\{0,\lambda_i^{t\!-\!m\!-\!1}\!-\!\lambda_i^{t\!-\!m\!-\!2},i\!=\!2,\cdots\!,n\} M^{-1}. 
\end{align}
When the index $m$ is fixed, the non-zero elements of $\tilde{J}_{t,m}$ and $\tilde{W}_{t,m}$ will converge to zero exponentially as $t\to\infty$. 
Note that this exponential convergence still holds even if $ \{\lambda_i,i=2,\!\cdots\!,n\}$ are not distinct (see \cite[Theorem 2.7]{FB-LNS}). 
Hence, there exists a bounded $c_{\epsilon}\!>\!0$ and an arbitrary small $\epsilon\!>\!0$ such that $\|\tilde{W}_{t,m}\|_F^2$ is upper bounded by 
\begin{align}
\|\tilde{W}_{t,m}\|_F^2\le c_{\epsilon}(1-\epsilon)^{t-m}. 
\end{align}
Then, we have 
\begin{align}\label{eq:upper_com}
&\sum\limits_{m = 0}^{t-2} \|\tilde{W}_{t,m}\|_F^2 \sigma_m^2  < \sum\limits_{m = 0}^{t-2} \sigma_0^2 \frac{(1-\epsilon)^{t} }{(1-\epsilon)^{m}  (m+1)^{\alpha})} \nonumber \\
 = & \sigma_0^2 (1-\epsilon)^{t+1}  \sum\limits_{m = 1}^{t-1} \frac{ \rho_\epsilon^m}{m^{\alpha}}=  \sigma_0^2 (1-\epsilon)^{t+1}  \sum\limits_{m = 1}^{t-1} f_\epsilon(m) ,
\end{align}
where $\rho_\epsilon=\frac{1}{1-\epsilon}>1$ and $f_\epsilon(m)=\frac{ \rho_\epsilon^m}{m^{\alpha}}$. 

Next, we turn to prove that the RHS of \eqref{eq:upper_com} will decay to zero as $t\!\to\!\infty$. 
It is straightforward to verify that the $t$-th order derivative of $f_\epsilon(m)$, $f_\epsilon^{(t)}(m)$, contains $2^t$ terms, given by 
\begin{align}
f_\epsilon^{(t)}(m)= \sum\limits_{\ell= 1}^{2^{t}} f_{\epsilon,\ell}^{(t)}(m). 
\end{align}
Recalling the increment characterization condition \eqref{eq:converge_condition} and letting $m=t$, we have 
\begin{align}
&\frac{f_{\epsilon}^{(2t)}(2t)}{(2 \pi)^{2t}}=\frac{  \sum\limits_{\ell= 1}^{2^{2t}} f_{\epsilon,\ell}^{(2t)}}{(2 \pi)^{2t}} \le   \frac{  \max_{\ell} \{f_{\epsilon,\ell}^{(2t)}\} } {(\pi)^{2t}} \nonumber \\
 \Rightarrow ~& \mathop {\lim }\limits_{t \to \infty}  \frac{f_{\epsilon}^{(2t)}(2t)}{(2 \pi)^{2t}} \le \mathop {\lim }\limits_{t \to \infty} \frac{  \max_{\ell} \{f_{\epsilon,\ell}^{(2t)}\} } {(\pi)^{2t}}=0,
\end{align}
where we have used the fact $\rho_\epsilon$ is arbitrarily close to one and $ \mathop {\lim }\limits_{t \to \infty} \frac{1}{(\alpha+t)!}=0$. 
Hence, based on Corollary \ref{coro:sufficient_noise}, the increment scale of $\sum\nolimits_{m = 1}^{t-1} f_\epsilon(m) $ can be characterized by 
\begin{align}\label{eq:scale_bound_integral}
\sum_{m = 1}^{t-1} f_\epsilon(m) = \bm{O} ( \int_1^t \frac{ \rho_\epsilon^y}{y^{\alpha}} dy ). 
\end{align}

Notice that although there exists no an explicit form for $\int_1^t \frac{ \rho_\epsilon^y}{y^{\alpha}} dy$, we can continuously use $\ell_0\in\mathbb{N}^{+}$ times the integration by parts to obtain
\begin{align}\label{eq:integral_expand}
\int_1^t \! \! \frac{ \rho_\epsilon^y}{y^{\alpha}} d y \!=\! & \underbrace{ \frac{ \rho_\epsilon^t }{ t^{\alpha} {\log\rho_\epsilon}} \! + \!\!\sum\limits_{\ell = 1}^{ \ell_0 } \frac{ \rho_\epsilon^t }{  (\alpha \!+\!\ell\!-\!1)! (\log \rho_\epsilon )^{\ell+1} t^{\alpha+\ell}} \!+\! S_{\epsilon}(\ell_0) }_{F_{\epsilon}(\ell_0)} \nonumber \\
& + \frac{1}{(\alpha+\ell_0)!(\log \rho_\epsilon )^{\ell_0+1}} \int_1^t \frac{ \rho_\epsilon^y}{y^{\alpha+\ell_0+1}} dy ,
\end{align}
where $S_{\epsilon}(\ell_0) \!=\! - \frac{ \rho_\epsilon }{\log\rho_\epsilon}\!-\! \sum\limits_{\ell = 1}^{ \ell_0 } \frac{ \rho_\epsilon }{  (\alpha \!+\!\ell\!-\!1)! (\log \rho_\epsilon )^{\ell \!+\!1}}$. 
Given an arbitrary fixed $\rho_\epsilon$, it follows from the Stirling's approximation of $(\alpha+\ell)!$ that 
$\mathop {\lim }\limits_{ \ell \to \infty}  \frac{1}{(\alpha+\ell)!(\log \rho_\epsilon )^{\ell+1}} = 0$, 
which implies that there always exists a strictly bounded $\ell_0$, such that \
\begin{align}\label{eq:bounded_ell}
C_{\epsilon}(\ell_0)=\frac{1}{(\alpha+\ell_0)!(\log \rho_\epsilon )^{\ell_0+1}}<1. 
\end{align}
Substituting \eqref{eq:bounded_ell} into \eqref{eq:integral_expand} and \eqref{eq:scale_bound_integral} successively, we have
\begin{align}\label{eq:scale_integral}
\int_1^t \frac{ \rho_\epsilon^y}{y^{\alpha}} d y= & F_{\epsilon}(\ell_0) + C_{\epsilon}(\ell_0) \int_1^t \frac{ \rho_\epsilon^y}{y^{\alpha+\ell_0+1}} dy \nonumber \\
<& F_{\epsilon}(\ell_0) + C_{\epsilon}(\ell_0) \int_1^t \frac{ \rho_\epsilon^y}{y^{\alpha}} d y \nonumber \\
\Rightarrow \int_1^t \frac{ \rho_\epsilon^y}{y^{\alpha}} d y \!<\! & \frac{F_{\epsilon}(\ell_0) }{1 \!-\!C_{\epsilon}(\ell_0)} \Rightarrow  \sum_{m = 1}^{t-1} f_\epsilon(m) \!<\! \bm{O} ( F_{\epsilon}(\ell_0) ). 
\end{align}
Also, note that the product $\frac{\sigma_0^2 }{\rho_\epsilon^{t+1}} F_{\epsilon}(\ell_0) $ satisfies
\begin{align}\label{eq:temp_zero}
\mathop {\lim }\limits_{t \to \infty} \frac{\sigma_0^2 }{\rho_\epsilon^{t+1}} F_{\epsilon}(\ell_0)  =&  \mathop {\lim }\limits_{t \to \infty} \sum\limits_{\ell = 1}^{ \ell_0 } \frac{ \sigma_0^2  }{ \rho_\epsilon (\alpha \!+\!\ell\!-\!1)! (\log \rho_\epsilon )^{\ell+1} t^{\alpha+\ell}}    \nonumber \\
 & + \mathop {\lim }\limits_{t \to \infty} (\frac{ \sigma_0^2  }{ \rho_\epsilon t^{\alpha} {\log\rho_\epsilon}} \! + \! \frac{S_{\epsilon}(\ell_0) }{\rho_\epsilon^{t+1}}) = 0.  
\end{align}

Finally, substituting \eqref{eq:scale_integral} and \eqref{eq:temp_zero} into \eqref{eq:upper_com}, we have 
\begin{align}
\mathop {\lim }\limits_{t \to \infty}  \sum\limits_{m = 0}^{t-2} \|\tilde{W}_{t,m}\|_F^2 \sigma_m^2 <  \bm{O} (  \mathop {\lim }\limits_{t \to \infty}  \frac{\sigma_0^2 }{\rho_\epsilon^{t+1}} F_{\epsilon}(\ell_0)  ) = 0 ,
\end{align}
which proves \eqref{eq:zero_convergence}. 
Meanwhile, it is clear to see from \eqref{eq:temp_zero} that the upper bound $F_{\epsilon}(\ell_0)$ is monotonically decreasing with $\alpha$ growing (so does $\sigma_t^2$), and thus a larger $\alpha$ will make $\mathbb{E}[\|x_{\xi,t} \!-\! x^*_t \|^2]$ decay to zero faster. 
The proof is completed. 
\end{proof}

\subsection{Proof of Theorem \ref{th:rate_correlation}}\label{pr:th:rate_correlation}
\begin{proof}
The key of this proof is to characterize the increment scale of $\operatorname{tr}( \mathbb{D}[ \Xi X_{\xi}^\mathsf{T}])$, $\operatorname{tr}(\mathbb{E}[\Xi X_{\xi}^\mathsf{T}])$, and $\operatorname{tr}(\mathbb{E}[X_{\xi} X_{\xi}^\mathsf{T}])$, respectively. 
To begin with, we leverage the inequality $\operatorname{tr}( \mathbb{D}[\Xi X_{\xi}^\mathsf{T}]) =  \mathbb{D}[ \operatorname{tr}( \Xi X_{\xi}^\mathsf{T})]$ by Lemma \ref{le:equivalent_trans}, and determine the term $\operatorname{tr}( \Xi X_{\xi}^\mathsf{T})$ first. 
For simple expressions, we collect all noise vectors $\{\xi_t\}_{t=0}^{T-1}$ as a single vector 
\begin{equation}
\bm{\xi}_{0:T-1}=[\xi_0^\mathsf{T},\xi_1^\mathsf{T},\!\cdots\!,\xi_{T-1}^\mathsf{T}]^\mathsf{T}, \nonumber 
\end{equation}
and the notations $\tilde{W}$ defined by \eqref{eq:tilde_w} and $Q_w=\tilde{W}^\mathsf{T} \tilde{W}$ defined by \eqref{eq:Q_W} will be still used in this proof. 

\begin{itemize}
\item \textit{Part 1: Obtain the explicit form of $\operatorname{tr}( \Xi X_{\xi}^\mathsf{T} )$. }
\end{itemize}

First, note that $\bm{\xi}_{0:T-1}$ can be represented by $\bm{\theta}_T$ in the following linear transformation  form of $\bm{\theta}_{0:T-1}$
\begin{align}\label{eq:expre2}
\bm{\xi}_{0:T-1}= H \bm{\theta}_{0:T-1},
\end{align}
where $H \in\mathbb{R}^{nT \times nT}$ is a matrix with $T^2$ blocks of size $n\times n$, and the $(t_1,t_2)$-th block ($t_1,t_2\in\{1,2,\cdots,T\}$) is given by 
\begin{align}
H (t_1,t_2)=\left\{\begin{aligned}
&I,&&\text{if}~t_1=t_2 \\
&-I,&&\text{if}~t_1=t_2+1 \\
&0,&&\text{otherwise}
\end{aligned} \right. .
\end{align}
Based on \eqref{eq:expre2}, $\operatorname{tr}(\Xi X_{\xi}^\mathsf{T}) $ is rewritten as 
\begin{align}\label{eq:new_variance2}
\operatorname{tr}(\Xi X_{\xi}^\mathsf{T}) &= \operatorname{tr}( \sum_{t = 0}^{T-1}\xi_t x_{\xi,t}^\mathsf{T} )= \sum_{t = 0}^{T-1} \operatorname{tr}( x_{\xi,t}^\mathsf{T} \xi_t ) = \sum_{t = 0}^{T-1} \xi_t^\mathsf{T} x_{\xi,t} \nonumber \\
& = \bm{\xi}_{0:T-1}^\mathsf{T}  \bm{x}_{\xi,0:T-1} = \bm{\theta}_{0:T-1}^\mathsf{T} Q \bm{\theta}_{0:T-1},
\end{align}
where $Q=H^\mathsf{T} \tilde{W}  H$ and is represented in the block form
\begin{small}
\begin{align}\label{eq:sum_Q}
 Q\!\!&= \!\!\begin{bmatrix}
g(W,0) & 0  & 0 & \ldots & 0 \\
g(W,1) & g(W,0) & 0 & \ldots & 0 \\
\vdots & \vdots & \ddots & \vdots & \vdots\\
g(W,T\!-\!2) & g(W,T\!-\!3) & \ddots & g(W,0) & 0 \\
W^{T\!-\!2} \!-\! W^{T\!-\!3} & W^{T\!-\!3} \!-\! W^{T\!-\!4} & \ldots & I  & 0
\end{bmatrix}\!\!,
\end{align}
\end{small}
and the matrix function $g(W,T)$ is given by 
\begin{equation}
g(W,t)=\left\{\begin{aligned}
&-I, &&\text{if}~t=0 \\
&2I-W, &&\text{if}~t=1 \\
&2W^{t\!-\!1} \!-\!W^t \!-\! W^{t\!-\!2}, &&\text{if}~t\ge2 \\
\end{aligned}\right. .
\end{equation}

\begin{itemize}
\item \textit{Part 2: Characterize $\operatorname{tr}(\mathbb{E}[\Xi X_{\xi}^\mathsf{T}]) $ and $\operatorname{tr}( \mathbb{D}[ \Xi X_{\xi}^\mathsf{T}] )$. }
\end{itemize}

First, apply Lemma \ref{le:exp_var} on $\mathbb{E}[\operatorname{tr}(\Xi X_{\xi}^\mathsf{T})]$ and we obtain 
\begin{align}\label{eq:scale_expe}
\mathbb{E}[\operatorname{tr}(\Xi X_{\xi}^\mathsf{T})] = &\mathbb{E}[\bm{\theta}_{0:T-1}^\mathsf{T} Q \bm{\theta}_{0:T-1}]= \sum\nolimits_{t= 1}^{T} \operatorname{tr}(Q(t,t)) \sigma_{t-1}^2 \nonumber \\
 =&\sum\nolimits_{t= 1}^{T-1} \operatorname{tr}(g(W,0)) \sigma_{t-1}^2 =-n\sum\nolimits_{t= 0}^{T-2} \sigma_{t}^2,  \nonumber \\
 \Rightarrow & \bm{O}(\mathbb{E}[\operatorname{tr}(\Xi X_{\xi}^\mathsf{T})]) = \bm{O}(\sum\nolimits_{t= 0}^{T-2} \sigma_{t}^2 ). 
\end{align}
Then, applying Lemma \ref{le:exp_var} on $\mathbb{D}[\operatorname{tr}(\Xi X_{\xi}^\mathsf{T})]$, it follows that
\begin{align}\label{eq:overall_sum}
\!\!\mathbb{D}[\operatorname{tr}(\Xi X_{\xi}^\mathsf{T})]= & \underbrace{ \sum_{t= 1}^{T} \sum_{i= 1}^{n} Q_{ii}^2(t,t) \left(\mathbb{E}[(\theta_{t-1}^i)^4]-\mathbb{D}^2[\theta_{t-1}^i] \right) }_{S_1} \nonumber \\
& \!+ \!\!\! \underbrace{ \sum_{t_1,t_2=1}^{T} \sum_{i,j=1}^{n}\!\! Q_{ij}^2(t_1,t_2) \mathbb{D}[\theta_{t_1 \!-\!1}^i] \mathbb{D}[\theta_{t_2\!-\!1}^j] }_{S_2},
\end{align}
where we have used the fact that the non-diagonal elements of $Q(t,t)$ are zeros.  
It is easy to see that the magnitude of matrix blocks $\{Q(t_1,t_2)\}_{t_1,t_2=1}^{T}$ determine the increment scale of $\mathbb{D}[\operatorname{tr}(\Xi X_{\xi}^\mathsf{T})]$. 
For the part $S_1$ in \eqref{eq:overall_sum}, note that the fourth moment of $\theta_{t-1}^i$ is upper bounded by $\mathbb{E}[(\theta_{t-1}^i)^4] \le 16\sigma_{t-1}^4 $ (see \cite[Proposition 2.5.2]{Roman2018High}), and thus we have 
\begin{align}\label{eq:sum_s1}
\!\!&0<\mathbb{D}[(\theta_{t-1}^i)^2] =\left(\mathbb{E}[(\theta_{t-1}^i)^4]-\mathbb{D}^2[\theta_{t-1}^i] \right)\le 14\sigma_{t-1}^4 \Rightarrow \nonumber \\
\!\!& \!\!\!S_1 \!=\! \bm{O} \left( \sum_{t= 1}^{T} \sum_{i= 1}^{n} Q_{ii}^2(t,t)\mathbb{D}[(\theta_{t-1}^i)^2] \right) \! \!=\! \bm{O}\left( \sum_{t= 0}^{T-1} \sigma_t^4 \right)\!. \!\!
\end{align}

Concerning the part $S_2$ in \eqref{eq:overall_sum}, note that the blocks $\{Q(t_1,t_2), t_2\ge t_1+1\}$ are zero matrix blocks according to \eqref{eq:sum_Q}, and thus $S_2$ can be further written as 
\begin{align}\label{eq:sum_s2}
S_{2} \!=& \! \sum_{t_1=2}^{T}\sum_{t_2=1}^{t_1-1} \| Q(t_1,t_2)  \|_F^2 \sigma_{t_1 \!-\!1}^2 \sigma_{t_2 \!-\!1}^2, \nonumber \\
 \!=\!& \sum_{t_1=2}^{T} \sigma_{t_1 \!-\!1}^2 \underbrace{  \left(  \sum_{t_2=1}^{t_1-1} \| Q(t_1,t_2)  \|_F^2  \sigma_{t_2 \!-\!1}^2 \right)}_{S_{2,t_1}}. 
\end{align}
By the construction of \eqref{eq:sum_Q}, most of matrix blocks  $\{Q(t_1,t_2), t_2< t_1\}$ can be represented by $g(W,t)$. 
By the Jordan decomposition $W=MJM^{-1}$, $g(W,t)$ for $t\ge2$ is equivalent to  
\begin{align}
\!\!g(W,t)\!=\!M\operatorname{diag}\{2 \lambda_i^{t\!-\!1} \!-\! \lambda_i^{t} \!-\! \lambda_i^{t\!-\!2}, i=1,\!\cdots\!,n\}M^{-1}. 
\end{align}
Since the sum of the element squares in $g(W,t)$ can be written as $\sum_{i,j=1}^{n} g_{ij}^2(W,t)=\|g(W,t)\|_F^2$, and $|\lambda_i|\le1$ for all $i\in\mathcal{V}$, $\|g(W,t)\|_F^2$ is bounded by 
\begin{align}
\!\!\! \left\{\begin{aligned}
&\! \| g(W,t) \|_F^2 \!\le\!  n^2  \kappa^2(M) \max_i\{ |2 \lambda_i^{t\!-\!1} \!-\! \lambda_i^{t} \!-\! \lambda_i^{t\!-\!2}|^2 \} \\
& \! \| g(W,t) \|_F^2 \!\ge\! \rho^2(g(W,t))\!=\!\max_i\{ |2 \lambda_i^{t\!-\!1} \!-\! \lambda_i^{t} \!-\! \lambda_i^{t\!-\!2}|^2 \}
\end{aligned}\right.\!.
\end{align}
It is clear that $\| g(W,t) \|_F^2 $ will converge to zero exponentially when $t\!\to\!\infty$. 
Similarly, the sum of element squares $\| W^{t\!-\!1} \!-\! W^{t\!-\!2} \|_F^2$ will also converge to zero exponentially as $t\to\infty$. 
Based on this exponential convergence, there exists a $\rho_s\!\in\!(0,1)$ such that 
\begin{align}
\| Q(t_1,t_2)  \|_F^2 = \| g(W,t_1-t_2) \|_F^2=\bm{O}(\rho_s^{t_1-t_2}), 
\end{align}
whose convergence resembles that of  $ \|\tilde{W}_{t,m}\|_F^2$ in \eqref{eq:expand_cor_error}. 
Hence, similar to the analysis of $\sum\limits_{m = 0}^{t-2} \!\! \|\tilde{W}_{t,m}\|_F^2 \sigma_m^2$, $S_{2,t_1}$ satisfies $\mathop {\lim }\limits_{t_1 \to \infty} S_{2,t_1}=0 $, 
and $ \bm{O}(S_{2,t_1})=\sum\limits_{t_2 = 0}^{t_1-1} \frac{\rho_s^{t_1-t_2}}{(t_2+1)^\alpha} $ can be upper bounded by 
\begin{align}
\sum\limits_{t_2 = 0}^{t_1-1} \frac{\rho_s^{t_1-t_2}}{(t_2+1)^\alpha} \le \rho_s^{t_1-1} + \sum\limits_{t_2 = 1}^{t_1-1} \frac{\rho_s^{t_1-t_2}}{2^\alpha}.
\end{align}
Then, it further follows that 
\begin{align}\label{eq:S2_scale2}
0 \le \bm{O}(S_{2,t_1}) \le \bm{O}{(\frac{1}{2^\alpha})} \Rightarrow \bm{O}(S_2) \le \bm{O}({\frac{1}{2^\alpha} \sum_{t=1}^{T-1} \sigma_{t}^2 } ). 
\end{align}
Finally, summing up \eqref{eq:sum_s1} and \eqref{eq:S2_scale2}, we have 
\begin{align}\label{eq:final_dxx}
\!\!\!\! \bm{O}( \operatorname{tr}(\mathbb{D}[\Xi X_{\xi}^\mathsf{T}]) )\!=\! \bm{O}( \mathbb{D}[\operatorname{tr}(\Xi X_{\xi}^\mathsf{T})] )\!\le\! \bm{O} \left( \frac{1}{2^\alpha} \sum\limits_{t = 0}^{T\!-\!1} \sigma_t^2 \right) ,
\end{align}
where we have kept the term $\sigma_0^4$ in $S_1$ and omitted the rest $\sum\nolimits_{t = 1}^{T-1} \sigma_t^4$ therein, because the increment scale of $\sum\nolimits_{t = 1}^{T-1} \sigma_t^4$ is always smaller than that of $\sum\nolimits_{t = 1}^{T-1} \sigma_t^2$ when $\alpha\ge0$.


\begin{itemize}
\item \textit{Part 3: Characterize $\bm{O}( \operatorname{tr}(\mathbb{E}[X_{\xi} X_{\xi}^\mathsf{T}]) )$. }
\end{itemize}

First, the term $\operatorname{tr}(X_{\xi} X_{\xi}^\mathsf{T})$ can be represented by
\begin{align}\label{eq:variance_cc}
\operatorname{tr}(X_{\xi} X_{\xi}^\mathsf{T}) = & \operatorname{tr}( \sum_{t = 0}^{T-1} x_{\xi} x_{\xi}^\mathsf{T} ) = \bm{x}_{\xi,0:T-1}^\mathsf{T}  \bm{x}_{\xi,0:T-1}  \nonumber \\
 = & \bm{\theta}_{0:T\!-\!1}^\mathsf{T} H^\mathsf{T} Q_w  H \bm{\theta}_{0:T\!-\!1} \!=\! \bm{\theta}_{0:T\!-\!1}^\mathsf{T} \tilde{Q} \bm{\theta}_{0:T\!-\!1},
\end{align}
where $\tilde{Q}=H^\mathsf{T} \tilde{W}^\mathsf{T} \tilde{W}  H$. 
Specifically, the $(t,t)$-th diagonal block of $\tilde{Q}$ is written as 
\begin{align}\label{eq:vm}
\tilde{Q}(t,t) = & Q_w(t,t) \!+\! Q_w(t\!+\!1,t\!+\!1) \!-\!Q_w(t\!+\!1,t) \!-\! Q_w(t,t\!+\!1) \nonumber \\
=&\Gamma_{T-t}^*+ \Gamma_{T-t-1}^*- \Gamma_{1,T-t-1}^*(1) - \Gamma_{1,T-t-1}^{*\mathsf{T}}(1) \nonumber \\
= & I \!+\!\! \sum_{m = 0}^{T-t-1} \!\! (W^{m\!+\!1}\!-\!W^{m})^\mathsf{T} (W^{m\!+\!1}\!-\!W^{m}) 
\end{align}
for $t\!\in\!\{1,\!\cdots\!,T\!-\!1\}$ and $\tilde{Q}(T,T)=I$. 
Similar to the analysis of $g(W,t)$, it is straightforward to obtain that the elements of $\tilde{Q}(t,t)$ are strictly bounded due to $\rho(W)=1$ and the fact that $(W^{m}\!-\!W^{m+1})$ will converge to zero exponentially as $m\to\infty$. 
Therefore, there exists a constant $C_2$ such that 
\begin{align}\label{eq:bound_exx}
n \le \operatorname{tr}(\tilde{Q}(t,t)) \le C_2<\infty,~\forall t\in \mathbb{N}^+. 
\end{align} 
Then, based on Lemma \ref{le:exp_var} and \eqref{eq:bound_exx}, we have that 
\begin{align}\label{eq:rate:exx}
&\operatorname{tr}(\mathbb{E}[X_{\xi} X_{\xi}^\mathsf{T}])  = \mathbb{E}[\bm{\theta}_{0:T\!-\!1}^\mathsf{T} \tilde{Q} \bm{\theta}_{0:T\!-\!1}] =  \sum\nolimits_{t = 1}^{T} \operatorname{tr}(\tilde{Q}(t,t)) \sigma_{t-1}^2 \nonumber \\  
& \Rightarrow ~ \bm{O}( \operatorname{tr}(\mathbb{E}[X_{\xi} X_{\xi}^\mathsf{T}]) )= \bm{O}( \sum\nolimits_{t = 0}^{T-1} \sigma_{t}^2 ).
\end{align}

\begin{itemize}
\item \textit{Part 4: Characterize the decaying rate of $R_{\xi}(T)$.}
\end{itemize}

Finally, by the definition of $R_{\xi}(T)$ and utilizing the increment scale characterization \eqref{eq:scale_expe}, \eqref{eq:final_dxx} and \eqref{eq:rate:exx}, we have 
\begin{align}\label{eq:final_rate_cor}
R_{\xi}(T) & =\bm{O}\left( \frac{ \operatorname{tr}(\mathbb{E}[\Xi X_{\xi}^\mathsf{T}]) \pm c_{\sigma} \sqrt{\operatorname{tr}( \mathbb{D}[ \Xi X_{\xi}^\mathsf{T}])}}{\operatorname{tr}(\mathbb{E}[X_{\xi} X_{\xi}^\mathsf{T}])} \right) \nonumber \\
& = \frac{ \bm{O}(\sum\nolimits_{t= 0}^{T-1} \sigma_{t}^2 ) \pm c_{\sigma}  \bm{O}(\sqrt{ \frac{1}{2^\alpha}\sum\nolimits_{t= 0}^{T-1} \sigma_{t}^2}) }{ \bm{O}(\sum\nolimits_{t= 0}^{T-1} \sigma_{t}^2 ) }  \nonumber \\
& = \bm{O}(C) \pm c_{\sigma}\bm{O} \left(\frac{1}{\sqrt{ {2^\alpha} \int_1^T \frac{1}{y^\alpha} d{y} }} \right),
\end{align}
which further yields the decaying rate \eqref{eq:rates_cor} for different $\alpha$, and the coefficient $\frac{1}{2^\alpha}$ is omitted when $\alpha\le1$ because $\int_1^T \frac{1}{y^\alpha} d{y}$ goes to infinity when $T$ grows. 
The proof is completed. 
\end{proof}

\subsection{Proof of Corollary \ref{coro:multi}}\label{pr:coro:multi}
\begin{proof}
The proof of this result resembles that of Theorem \ref{th:state_bound} and \ref{th:rate_correlation}, and the sketch is provided here. 

First, for the convergence of the state deviation, substituting \eqref{eq:new_xi2} into $x_{\xi,t} - x^*_t$ ($t>k$), we have 
\begin{small}
\begin{align}
&x_{\xi,t} - x^*_t \nonumber \\
= &  \sum\limits_{m = k+1}^{t-1} \!\! W^{t-m-1}  (  \sum\limits_{\ell = 1}^{k+1} p_{\ell} \theta_{m-\ell} ) + \!\! \sum\limits_{m = 0}^{k}   W^{t-m-1} (  \sum\limits_{\ell = 1}^{m+1} p_{\ell} \theta_{m-\ell} ) \nonumber \\
=& \sum\limits_{m = 0}^{t-k}  \tilde{W}_{p,a}(t,m) \theta_m + \sum\limits_{m = t-k+1}^{t-1}\tilde{W}_{p,b}(t,m) \theta_m ~~ \Rightarrow \nonumber \\
& \!\! \! \mathbb{E}[\|x_{\xi,t} - x^*_t\|^2]= \!\! \sum\limits_{m = 0}^{t-k} \!\! \|\tilde{W}_{p,a}(t,m)\|_F^2 \sigma_m^2 + \! \!\!\!\!\!\!\sum\limits_{m = t-k+1}^{t-1} \!\!\!\!\!\!\!\! \|\tilde{W}_{p,b}(t,m)\|_F^2 \sigma_m^2,
\end{align}
\end{small}
\!\!\!where $ \tilde{W}_{p,a}(t,m)= \sum_{\ell = 1}^{k} p_{\ell} W^{t-m-\ell} $ and $\tilde{W}_{p,b}(t,m) = \sum_{\ell = 1}^{t-m} p_{\ell} W^{t-m-\ell}$. 
By the condition \eqref{eq:condition_p} and the eigenvalue decomposition of $\tilde{W}_{p,a}(t,m)$, we have 
\begin{align}\label{eq:extend_conv}
\!\!\!\! \|\tilde{W}_{p,b}(t,m)\|_F^2 \!\le \! (k\!-\!1)n\bar{p}^2,~\mathop {\lim }\limits_{t \to \infty} \|\tilde{W}_{p,a}(t,m)\|_F\!=\!0,
\end{align}
where the latter one converges to zero exponentially. 
Based on \eqref{eq:extend_conv}, the convergence of $\mathbb{E}[\|x_{\xi,t} - x^*_t\|^2]$ is proved by following the same procedures of proving $\mathop {\lim }\limits_{t \to \infty}  \sum\limits_{m = 0}^{t-2} \|\tilde{W}_{t,m}\|_F^2 \sigma_m^2 =0$ in Appendix \ref{pr:th:state_bound}.

Next, for the convergence of $R_\xi(T)$ when $\xi_t$ subjects to \eqref{eq:new_xi2}, we highlight the key point is to establish the expressions of $\bm{x}_{\xi,0:T-1}$ and $\bm{\xi}_{0:T-1}$ about $\bm{\theta}_{0:T-1}$, given by
\begin{align}
\bm{x}_{\xi,0:T-1} =  \tilde{W} H_k \bm{\theta}_{0:T-1},~\bm{\xi}_{0:T-1}=H_k \bm{\theta}_{0:T-1},
\end{align}
where $\tilde{W} $ is defined \eqref{eq:tilde_w}, 
and $H_k\in\mathbb{R}^{nT \times nT}$ is a matrix with $T^2$ blocks of size $n\times n$, and the $(t_1,t_2)$-th block ($t_1,t_2\in\{1,2,\cdots,T\}$) is given by 
\begin{align}
H_k(t_1,t_2)\!=\!\left\{\begin{aligned}
& p_{t_1-t_2+1} I,&&\text{if}~t_2 \le t_1\le k \\
& p_{k-t_2+1} I,&&\text{if}~k<t_1 ~\text{and}~0\!\le\! t_1\!-\!t_2\le k\!-\!1 \\
&0,&&\text{otherwise}
\end{aligned} \right. .
\end{align}
Then, by the same proof procedures of Theorem \ref{th:rate_correlation}, it can be verified that $Q_k=H_k^\mathsf{T} \tilde{W}  H_k$ and $\tilde{Q}_k=H_k^\mathsf{T} \tilde{W}^\mathsf{T} \tilde{W}  H_k$ satisfy
\begin{align}
\sum_{t_1,t_2=1}^{T} \!\!\!\! \| Q_k(t_1,t_2)  \|_F^2 \!=\! \bm{O}\left(k \sum_{t= 0}^{T-1} \sigma_t^2 \right),~\operatorname{tr}(\tilde{Q}_k (t,t) = \bm{O}(C) ,
\end{align}
which are used to derive that the increment scales of $\operatorname{tr}(\mathbb{E}[\Xi X_{\xi}^\mathsf{T}]) $, $\operatorname{tr}( \mathbb{D}[ \Xi X_{\xi}^\mathsf{T}] )$ and $\operatorname{tr}(\mathbb{E}[X_{\xi} X_{\xi}^\mathsf{T}])$ are the same as those when $\xi_t=\theta_t - \theta_{t-1}$ in Theorem \ref{th:rate_correlation}. 
The procedures resemble those in Appendix \ref{pr:th:rate_correlation} and are omitted here. 
\end{proof}

\bibliographystyle{IEEEtran}

\begin{IEEEbiographynophoto}{Yushan Li}
(S'19) received the B.E. degree in automatic control from the School of Artificial Intelligence and Automation, Huazhong University of Science and Technology, Wuhan, China, in 2018. He is currently working toward the Ph.D. degree in control science and engineering with the Department of Automation, Shanghai Jiaotong University, Shanghai, China. 
He is a member of Intelligent of Wireless Networking and Cooperative Control Group. His research interests include robotics, security of cyber-physical system, and distributed computation and optimization in multiagent networks.
\end{IEEEbiographynophoto}

\begin{IEEEbiographynophoto}{Zitong Wang }
(S'21) received the B.E. degree in information engineering from the School of Electronic Information and Electrical Engineering at Shanghai Jiao Tong University, China, in 2020. 
She is currently pursuing the Ph.D. degree in control science and engineering at the Department of Automation, Shanghai Jiao Tong University. Her research interests include robotics, cooperative control, and distributed optimization in multi-agent systems.
\end{IEEEbiographynophoto}

\begin{IEEEbiographynophoto}{Jianping He} 
(SM'19) is currently an associate professor in the Department of Automation at Shanghai Jiao Tong University. He received the Ph.D. degree in control science and engineering from Zhejiang University, Hangzhou, China, in 2013, and had been a research fellow in the Department of Electrical and Computer Engineering at University of Victoria, Canada, from Dec. 2013 to Mar. 2017. His research interests mainly include the distributed learning, control and optimization, security and privacy in network systems.

Dr. He serves as an Associate Editor for IEEE Trans. Control of Network Systems, IEEE Open Journal of Vehicular Technology, and KSII Trans. Internet and Information Systems. He was also a Guest Editor of IEEE TAC, International Journal of Robust and Nonlinear Control, etc. He was the winner of Outstanding Thesis Award, Chinese Association of Automation, 2015. He received the best paper award from IEEE WCSP'17, the best conference paper award from IEEE PESGM'17, and was a finalist for the best student paper award from IEEE ICCA'17, and the finalist best conference paper award from IEEE VTC'20-FALL.
\end{IEEEbiographynophoto}

\begin{IEEEbiographynophoto}{Cailian Chen}
(M'06) received the B.E. and M.E. degrees in Automatic Control from Yanshan University, P. R. China in 2000 and 2002, respectively, and the Ph.D. degree in Control and Systems from City University of Hong Kong, Hong Kong SAR in 2006. She joined Department of Automation, Shanghai Jiao Tong University in 2008 as an Associate Professor. She is now a Full Professor. 
Before that, she was a postdoctoral research associate in University of Manchester, U.K. (2006-2008). 
She was a Visiting Professor in University of Waterloo, Canada (2013-2014). 
Prof. Chen's research interests include industrial wireless networks, computational intelligence and situation awareness, Internet of Vehicles. 

Prof. Chen has authored 3 research monographs and over 100 referred international journal papers. She is the inventor of more than 20 patents. 
She received the prestigious ”IEEE Transactions on Fuzzy Systems Outstanding Paper Award” in 2008, and Best Paper Award of WCSP17 and YAC18. 
She won the Second Prize of National Natural Science Award from the State Council of China in 2018, First Prize of Natural Science Award from The Ministry of Education of China in 2006 and 2016, respectively, and First Prize of Technological Invention of Shanghai Municipal, China in 2017. She was honored Changjiang Young Scholar in 2015 and Excellent Young Researcher by NSF of China in 2016. 
Prof. Chen has been actively involved in various professional services. 
She serves as Associate Editor of IEEE Transactions on Vehicular Technology, Peerto-peer Networking and Applications (Springer). 
She also served as Guest Editor of IEEE Transactions on Vehicular Technology, TPC Chair of ISAS19, Symposium TPC Co-chair of IEEE Globecom 2016 and VTC2016-fall, Workshop Co-chair of WiOpt18.
\end{IEEEbiographynophoto}

\begin{IEEEbiographynophoto}{Xinping Guan}
(F'18) received the B.S. degree in Mathematics from Harbin Normal University, Harbin, China, in 1986, and the Ph.D. degree in Control Science and Engineering from Harbin Institute of Technology, Harbin, China, in 1999. He is currently a Chair Professor with Shanghai Jiao Tong University, Shanghai, China, where he is the Dean of School of Electronic, Information and Electrical Engineering, and the Director of the Key Laboratory of Systems Control and Information Processing, Ministry of Education of China. 
Before that, he was the Professor and Dean of Electrical Engineering, Yanshan University, Qinhuangdao, China. 

Dr. Guan's current research interests include industrial cyber-physical systems, wireless networking and applications in smart factory, and underwater networks. He has authored and/or coauthored 5 research monographs, more than 270 papers in IEEE Transactions and other peer-reviewed journals, and numerous conference papers. 
As a Principal Investigator, he has finished/been working on many national key projects. He is the leader of the prestigious Innovative Research Team of the National Natural Science Foundation of China (NSFC). 
Dr. Guan is an Executive Committee Member of Chinese Automation Association Council and the Chinese Artificial Intelligence Association Council. 
Dr. Guan received the First Prize of Natural Science Award from the Ministry of Education of China in both 2006 and 2016, and the Second Prize of the National Natural Science Award of China in both 2008 and 2018. 
He was a recipient of IEEE Transactions on Fuzzy Systems Outstanding Paper Award in 2008. He is a National Outstanding Youth honored by NSF of China, Changjiang Scholar by the Ministry of Education of China and State-level Scholar of New Century Bai Qianwan Talent Program of China.
\end{IEEEbiographynophoto}

\end{document}